\theoremstyle{plain}
\newtheorem{thm}{\protect\theoremname}
\theoremstyle{definition}
\newtheorem{definition}{Definition}
\newtheorem{lemma}[thm]{\protect\lemmaname}
\newtheorem{prop}[thm]{\protect\propositionname}
\newtheorem{cor}{Corollary}
\newtheorem{theorem}{Theorem}
\providecommand{\theoremname}{Theorem}
\providecommand{\lemmaname}{Lemma}
\providecommand{\definitionname}{Definition}
\providecommand{\propositionname}{Proposition}
\newcommand{\rev}[1]{{\color{blue}#1}} %revise of the text
\newcommand{\com}[1]{\textbf{\color{red} (COMMENT: #1) }} %comment of the text
\newcommand{\comg}[1]{\textbf{\color{green} (COMMENT: #1)}}
\newcommand{\response}[1]{\textbf{\color{green} (RESPONSE: #1)}} %response to comment
\newcommand{\rev}[1]{{\color{black}#1}} %revise of the text
\newcommand{\com}[1]{}
\newcommand{\comg}[1]{}
\newcommand{\response}[1]{}
\newfont{\mycrnotice}{ptmr8t at 7pt}
\newfont{\myconfname}{ptmri8t at 7pt}
\begin{document}

\global\long\def\cond#1{\left|#1\right.}

\global\long\def\case#1#2{#1=\left\{  #2\right.}

\global\long\def\flr#1{\left\lfloor #1\right\rfloor }

\global\long\def\ceil#1{\left\lceil #1\right\rceil }

%\vspace{-6mm}
\title{Exchange of Services in Networks: Competition, Cooperation, and Fairness}% \vspace{-6mm}}

\numberofauthors{3}
\author{
% 1st. author
\alignauthor
Leonidas Georgiadis\\ %\titlenote{Dr.~Trovato insisted his name be first.}\\
      \affaddr{Dept. of ECE, AUTH, Greece, leonid@auth.gr}\\
%      \email{x@x}
% 2nd. author
\alignauthor
George Iosifidis\\ \affaddr{Dept. of Elec. Eng., and YINS, Yale University, georgios.iosifidis@yale.edu}\\
\and
% 4th. author
\alignauthor
Leandros Tassiulas\\   \affaddr{Dept. of Elec. Eng., and YINS, Yale University, leandros.tassiulas@yale.edu}\\
}
\vspace{-6mm}
\maketitle
\begin{abstract}
Exchange of services and resources in, or over, networks is attracting nowadays renewed interest. However, despite the broad applicability and the extensive study of such models, e.g., in the context of P2P networks, many fundamental questions regarding their properties and efficiency remain unanswered. We consider such a service exchange model and analyze the users' interactions under three different approaches. First, we study a centrally designed service allocation policy that yields the fair total service each user should receive based on the service it offers to the others. Accordingly, we consider a competitive market where each user determines selfishly its allocation policy so as to maximize the service it receives in return, and a coalitional game model where users are allowed to coordinate their policies. We prove that there is a unique equilibrium exchange allocation for both game theoretic formulations, which also coincides with the central fair service allocation. Furthermore, we characterize its properties in terms of the coalitions that emerge and the equilibrium allocations, and analyze its dependency on the underlying network graph. That servicing policy is the natural reference point to the various mechanisms that are currently proposed to incentivize user participation and improve the efficiency of such networked service (or, resource) exchange markets.
\end{abstract}

%\category{D.2.8}{Software Engineering}{Metrics}[complexity measures, performance measures]
%\category{C.2}{Computer-Communication Networks}{Misc.}

%\keywords{Network Economics; Competitive Equilibriums}

%Today we are witnessing two important socio - technological advances that herald the advent of a new era in communication networks. First, the ever increasing needs of users for ubiquitous and high-speed Internet connectivity which has created an unprecedented volume of data traffic. Second, the technological developments that have resulted in sophisticated, yet low-cost, user-owned equipment such as small base stations (e.g., femtocells and WiFi access points), and handheld devices that can operate as mobile WiFi hotspots. These devices not only can satisfy the communication needs of their owners, but can also be employed to offer communication services to other users.
%\vspace{-1mm}
\section{Introduction}
%\vspace{-1mm}

%\subsection{Motivation}

\textbf{Motivation}. Today we are witnessing a renewed interest about models for exchanging services and resources in (or, over) networks, that go beyond the well-known peer-to-peer (P2P) file sharing idea. Some examples in communication networks are the WiFi sharing communities \cite{FON}, \cite{Sofia-UPN}, the mobile data sharing applications \cite{opengarden}, the commercial or community mesh networks \cite{confine}, \cite{bewifi}, and various peer-assisted services \cite{ioannidis-peer-assisted}, Fig. \ref{fig:system-model}. \rev{Similar schemes have been studied and implemented for other technological systems as well, e.g., for renewable energy sharing over smart grid \cite{gridmates}, \cite{saad-smart} where users share their energy surpluses with each other.} Finally, there is nowadays a plethora of online platforms, motivated by the sharing economy concept \cite{collaborativeconsumption}, which facilitate the exchange of commodities and services among users who, for example, are co-located, have common interests, etc, \cite{adalbdal}, \cite{getrridapp}, \cite{homeexchange}, \cite{neighborgood}, \cite{swapit}.

In essence, all the above scenarios apply the idea of collaborative consumption \cite{felson-coco-book} of underutilized resources (such as the Internet access) to networks with autonomous and self-interested nodes (or, users). \rev{Whenever a user has some idle resource, he offers it to other users who at that time have excess needs, and benefits in exchange from the resources they offer to him in the future \cite{georgiadis-netgcoop}. The goal is to exploit the nodes' complementarity in resource availability and demand, and increase the benefits for all the participants. Such models capture also more static settings where users have different preferences for the various resources, and exchange them in order to acquire those that are more valuable to them \cite{unver-book}.} There is a broad consensus that these models are of major importance for the economy, society and technological evolution \cite{collaborativeconsumption}, \cite{nytimes-sharing}. However, despite their significance and wide applicability, and although they have been subject to extensive research (e.g., in the context of P2P networks \cite{RJohariToNBilateral2011}, \cite{zhang-proportional}), some very important related questions remain unanswered.

\begin{figure}[t]
\vspace{-2mm}
\centering
\includegraphics[scale=0.7]{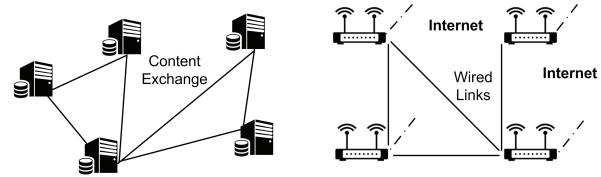}
\caption{\small{Instances of Service Exchange in Networks: a content sharing network, and an Internet sharing residential/community mesh network.}\label{fig:system-model}}
\vspace{-4mm}
\end{figure}

\emph{Definition and Properties of a Fair Exchange Policy.} This is one of the most critical issues in these cooperation schemes. Ideally, from a system design point of view, each user should receive service (or, resource) commensurate to its contribution. However, this is not always possible because there is an underlying graph that prescribes, for each user, the subset of the users it can serve and receive services from\footnote{\small{For example, in mesh networks the graph captures which nodes are within communication range, while in smart grid networks the graph shows which microgrids can exchange energy without significant transfer losses.}}. Additionally, there may be multiple feasible service exchange solutions that differ on the amount of service each user receives. We would prefer to select among them a \emph{fair} outcome that balances the exchanges as much as possible. The existence and the characterization of the properties of such fair policies (e.g., their dependency on the underlying graph) is an important and currently open question.

\emph{Existence and Fairness of Competitive Equilibriums}. Additionally, most often these systems are not controlled by a central entity that can exogenously impose such a fair solution. Instead, each user tries to greedily maximize its own benefit by allocating its idle resource to those users from which it expects to receive more service in return. A first question here is if such a competitive interaction among the nodes admits an equilibrium allocation, where each node cannot unilaterally improve the resource it accumulates. Also, we need to analyze how these equilibriums are affected by the graph structure and the nodes' resources. Finally, it is important to understand if such equilibriums are related to the centrally designed fair policy discussed above.

\emph{Robustness of the Fair Exchange Policy}. The latter question is related to the robustness of the fair policy: when a central designer proposes such a fair policy, is it possible for a user to deviate from it and improve his performance? More interestingly, in many cases it is possible to have a subset of users that deviate from the fair policy by forming a coalition and excluding non-members from bartering. For example, in a WiFi community a subset of users may decide to serve only each other, expecting that this will increase their benefits. Such strategies are very likely to deteriorate the overall system performance. A key challenge is to explore whether the fair policy is robust to such group deviations.

%\subsection{Methodology and Contributions}

\textbf{Methodology and Contributions}. In order to shed light on these questions, we employ a general model that abstracts all the above scenarios. We consider a set of nodes, where each one has a certain idle resource that it allocates to its neighbors, and unsaturated demand for the resources of others. \rev{The model captures situations where the nodes have complementary resource availability over time, or generic static bartering markets where nodes simply have different preferences for the resources}. Neighborhood relationships are described by a bidirectional connected graph. The \emph{exchange ratio} (or, simply \emph{ratio}) of total received over allocated resource\footnote{\small{The idle resource can be the Internet bandwidth a user shares within a WiFi community during a month, the uploading capacity of a node in a P2P overlay. Similarly, the demand is the average request for additional Internet bandwidth (WiFi), the downloading capacity of a peer node, etc. Hereafter, we will use the term resource and service interchangeably.}} characterizes the performance of each node, as it quantifies the resource that it receives for each resource unit it offers.

From a system point of view, a central designer would prefer to have a vector of exchange ratios where each coordinate, that corresponds to a node, has value equal to one. Often this will not be possible due to the graph exchange constraints and asymmetries in nodes' resource availability. For that cases, the lexicographically maximum (lex-optimal), or max-min, exchange vector is a meaningful performance criterion as it is Pareto optimal and balances the exchanged resources as much as possible \cite{nace-tutorial}.

In the absence of a network controller however, we assume that each node makes greedy myopic allocation decisions so as to maximize the aggregate resource it receives in return. The interactions of the nodes give rise to a competitive market, which however differ from previous similar models \cite{osborne}, \cite{arrow-debreu}, \cite{zhang-proportional} due to the existence of the graph and the absence of side-payments (money) among the nodes (bartering). We introduce the concept of exchange equilibrium that is appropriate for this setting, characterize the equilibrium allocations, and study its relation to the max-min fair policy.

Accordingly, we assume that subset of  nodes can coordinate and form coalitions exchanging resources only with each other. A coalitional graph-constrained game with non-transferable utility (NTU) is identified in the above set-up. We focus on the existence and properties of stable equilibrium allocations. Given a certain global allocation, if there is a subset of nodes that when they reallocate their own resources among themselves manage to improve the exchange ratio of \emph{at least one} node in the subset, then they have an incentive to deviate from the global allocation (and hence destabilize it). Therefore, when an allocation is in equilibrium, it should be \emph{strongly stable} and no such subset should exist.

We study the above frameworks, that differ on the assumptions about the system control and the users behavior, and find a surprising connection among them. In particular:

(\textbf{i}) We prove that there is a unique equilibrium exchange ratio vector that is a solution for the competitive market, and lies in the core of the NTU graph-constrained coalitional game, being also strongly stable. This is the max-min fair (lex-optimal) ratio vector. It reveals that a centrally designed meaningful fair solution can be reached by nodes who act independently and selfishly, and it is also robust to group deviations. This finding has many implications for the applicability of such fair policies to decentralized and autonomous graph-constrained systems.

(\textbf{ii}) We show that the equilibrium exhibits rich structure and a number of interesting properties. For example, in the equilibrium allocation there is exchange of resources only among the nodes with the lowest exchange ratios and the nodes with the highest ratios, the nodes with the second lowest ratios with the set of the second highest ratios, and so on. We also study how the exchange ratios are affected by the graph properties, such as the node degree. This latter aspect is particularly important from a network design point of view as it reveals, among others, the impact a link removal or addition has on the equilibrium. Our findings hold for any graph, and therefore they can help a controller to predict or even dictate the exchange equilibrium.

(\textbf{iii}) We provide a polynomial-time algorithm that finds the lex-optimal exchange ratio vector and the resource exchange strategies that lead to it. Hence, it can also be used to find the equilibriums of the respective competitive and coalitional games. This is a highly non-trivial task in such exchange markets, that is further compounded here due to the graph constraints.

The rest of this paper is organized as follows. In Sec. \ref{sec:Model-Notation} we introduce the model, and in Sec. \ref{sec:Central-design} we prove the existence and analyze the properties of the lex-optimal exchange policy. In Sec. \ref{sec:game-theory-frameworks} we define and solve the coalitional and the competitive games. In Sec. \ref{sec:algorithms} we provide a polynomial algorithm for computing the lex-optimal ratios. We present several numerical examples in Sec. \ref{sec:Numerical-Results}. In Sec. \ref{sec:Related} we discuss related works, and conclude in Sec. \ref{sec:Conclusions}. In the Appendix we provide the additional proofs. %Finally, the reader is kindly requested to access the online report \cite{appendix-online} with an   additional proof which could not be included here due to lack of space.

\vspace{-1mm}
\section{Model and Problem Statement} \label{sec:Model-Notation}

We consider a service exchange market that is modeled as an undirected connected graph $G=({\cal N},{\cal E})$ with node and edge set ${\cal N}$ and ${\cal E}$, respectively. Let $\mathcal{N}_{i}=\{j\,:\,(i,j)\in\mathcal{E}\}$ be the set of neighbors of node $i\in{\cal N}$, and $D_i>0$ its idle resource (endowment). Let $d_{ij}\geq0$ be the resource that node $i$ allocates to node $j\in{\cal N}_{i}$. We assume that each non-isolated node allocates all its (idle) resource\footnote{\small{\rev{For example, in P2P overlays, each node allocates all its uplink bandwidth, and in other settings it exchanges resources for which it has zero valuation (e.g., excess food).}}}, i.e.,
\begin{equation}
\sum_{j\in{\cal N}_{i}}d_{ij}=D_{i},\,\,\forall\, i\in{\cal N},\,\,\mathcal{N}_i\neq \emptyset.\label{eq:Allloc}
\end{equation}
A vector $\bm{d}=(d_{ij}) _{(i,j)\in{\cal E}}$ satisfying (\ref{eq:Allloc}) is called ``allocation''. The set of allocations is denoted by $\mathbb{D}$. Note that as long as not all nodes are isolated, i.e., $\mathcal{E}\neq \emptyset$, it holds $\mathbb{D}\neq\emptyset$.

\rev{This model captures either (i) a static setting where each user has a certain amount of a perfectly divisible resource which wishes to trade with other, more valuable to him, resources, or (ii) a dynamic setting where users have at random time instances a single unit of unsplittable excess resource which they allocate to one of their neighbors expecting similar benefits in the future. This latter dynamic setting will become more clear in the sequel.}

A vector $\bm{r}$ of received resources induced by an allocation $d\in\mathbb{D}$ is called feasible. The set of feasible received resource vectors when $\mathcal{E} \neq \emptyset$ is defined as:
\begin{equation}
\mathbb{R}=\big\{ \bm{r}=(r_{i}) _{i\in{\cal N}}:\ r_{i}=\sum_{j\in{\cal N}_{i}}d_{ji},\ i\in{\cal N},\ \bm{d}\in\mathbb{D}\big\}, \label{eq:RateSpace}
\end{equation}
%Where we adopt the convention that $\sum_{i \in \emptyset} z_i =0$, hence for any isolated node $i$ it holds $r_i=0$.
where we adopt the convention that for any isolated node $i$ it is $r_i=0$. In case $\mathcal{E} = \emptyset$, we define $\mathbb{R}=\{\boldsymbol{r}:\ r_i=0, i\in \cal{N}\}$.

Throughout this work, we will be interested in the \emph{exchange ratio} (or, simply \emph{ratio}) vector $\bm{\rho}=(\rho_{i}=r_i/D_i: i\in\mathcal{N})$, where the $i^{th}$ coordinate quantifies the aggregate amount of resource that node $i$ receives per unit of resource that offers to its neighbors. \rev{Notice that, under assumption (\ref{eq:Allloc}), maximizing $\rho_i$ ensures the maximization of $r_i$}. We denote by $\mathbb{P}$ the set of all feasible ratio vectors. In the sequel, we consider three different problem formulations based on the above model.

\subsection{Fairness Framework}

\rev{In this setting, the total allocated resource is always equal to $\sum_{i\in\mathcal{N}}D_i$, and therefore the various allocations differ on how they split this amount across the nodes.} A centrally designed policy for this cooperative setting would ideally allocate to every node $i\in\mathcal{N}$ resource equal to its contribution, i.e., $r_i=D_i$. However, due to the graph that constraints resource exchanges, and the different resource endowments of the nodes, such policies will not be realizable in general. Given this, the designer would prefer to ensure the most balanced allocation.

A suitable method to achieve this goal is to employ the lexicographic optimal (or, lex-optimal) criterion, which has been extensively used for resource allocation and load balancing in communication networks \cite{georgiadislexopt02}, \cite{nace-tutorial}, \cite{boudecfairness07}. This multi-objective optimization method first increases as much as possible the allocated resource to the node with the smaller exchange ratio. Next, if there are many choices, it attempts to increase the resource allocated to the node with the second smaller exchange ratio, and so on. The resulting allocation is max-min fair, thus as balanced as possible. Next we provide the necessary definitions.
\vspace{-1mm}
\begin{definition}
\textbf{Lexicographical order}. Let $\bm{x}$ and $\bm{y}$ be $N$-dimensional vectors, and $\phi(\bm{x})$ and $\phi(\bm{y})$ the respective $N$-dimensional vectors that are created by sorting the components of $\bm{x}$ and $\bm{y}$ respectively, in non-decreasing order. We say that $\bm{x}$ is lexicographically larger than $\bm{y}$, denoted by $\bm{x}\succ\bm{y}$, if the first non-zero component of the vector $\phi(\bm{x})-\phi(\bm{y})$ is positive. The notation $\bm{x}\succeq\bm{y}$ means that either $\bm{x}\succ\bm{y}$ or, $\bm{x}=\bm{y}$.
\end{definition}
\vspace{-1mm}
It is easy to see that the set of received resource vectors $\mathbb{R}$ defined in (\ref{eq:RateSpace}) is compact and convex. Hence, as is shown in \cite{boudecfairness07}, there is a unique lex-optimal $\bm{r}^{*}\in\mathbb{R}$ such that, with $\bm{\rho}^{*}=\left(r_{i}^{*}/D_{i}\right)_{i\in{\cal N}}$ and for any $\bm{\rho}=\left(r_{i}/D_{i}\right)_{i\in{\cal N}}\in \mathbb{P}$, it holds $\bm{\rho}^{*}\succeq\bm{\rho},\ \forall\,\bm{r}\in\mathbb{R}$. We are also interested in the respective \emph{lex-optimal} allocations $\bm{d}$, which are those that result in the unique lex-optimal ratio vector. Note that there may be \emph{many} allocations $\bar{\bm{d}}$ for which $\bar{\bm{\rho}}=\bm{\rho}^{*}$, as shown in Fig. \ref{fig:Def2-example}. \emph{Within this framework, we are interested in studying the properties of the lex-optimal exchange ratio vector, and the respective lex-optimal allocations, for any graph $G=(\mathcal{N},\mathcal{E})$ and any endowments $\{D_i\}_{i\in\mathcal{N}}$}.

\begin{figure}[t]
\centering
\includegraphics[scale=0.6]{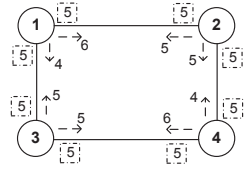}
\caption{\small{A 4-node graph with node resource $D_{i}=10$, $\forall i$. Dashed arrows indicate resource exchange under two different allocations (values shown internal and external of the graph). Both allocations yield ratios $\rho_i=1$, $\forall i$.} \label{fig:Def2-example}}
\end{figure}

\subsection{Coalitional Framework}

Before providing the details of this framework, let us introduce some additional notation. We denote by $G_{{\cal S}}=\left({\cal S},{\cal E_{{\cal S}}}\right)$ the subgraph of $G$ induced by a nonempty set of nodes $\mathcal{S}\subseteq\mathcal{N}$, i.e., the graph with node set ${\cal S}$, and edge set the edges $(i,j)\in{\cal E}$, with $i,j\in\mathcal{S}$. By ``allocation on ${\cal S}$'', we mean a vector $\bm{d}_{{\cal S}}=\left\{ d_{ij}\right\} _{(i,j)\in{\cal E}_{{\cal S}}}$ defined on graph ${G}_{\mathcal{S}}=\left(\mathcal{S},\mathcal{E}_{\mathcal{S}}\right)$ satisfying (\ref{eq:Allloc}) (with $\mathcal{N}\leftarrow\mathcal{S}$ and ${\cal E}\leftarrow{\cal E}_{{\cal S}}$). We denote by $\mathbb{D}_{{\cal S}}$ the set of all allocations on ${\cal S}$, and by $\mathbb{R}_{{\cal S}}$ the set of all received resource vectors on ${\cal S}$ which can be obtained by any allocation on $\mathcal{S}$. By definition it is $\mathbb{R}_{\{i\}}=\left\{ 0\right\}$, $\forall\,i\in{\cal N}$.

We assume here that the nodes are able to coordinate with each other, they can form coalitions and deviate from the proposed fair solution if this will ensure higher resources for one or more of them. In game theoretic terms, this behavior leads to a coalitional (or, cooperative) game \cite{myerson-gametheory-book} played by the nodes. Specifically, we call any subset of nodes ${\cal S\subseteq\mathcal{N}}$ a coalition when they allocate their resources only among each other. That is, there is no resource exchange among nodes in $\mathcal{S}$ and nodes in its complement set $\mathcal{S}^{c}=\mathcal{N}-\mathcal{S}$. Hence, the feasible resource vectors that nodes in ${\cal S}$ get, are the vectors of the set $\mathbb{R}_{{\cal S}}$. We also refer to the set $\mathcal{N}$ as the grand coalition. This coalitional game is one with non-transferable utilities, as the received resource vector $\bm{r}$ cannot be split arbitrary among the nodes, due to the exchange constraints imposed by the graph and the lack of side payments. More formally, we define \cite{myerson-gametheory-book}:
\vspace{-1mm}
\begin{definition} \textbf{Coalitional Service Exchange Game}. A non-transferable utility (NTU) game in graph form consists of the triplet $<\mathcal{N},G,\{\mathbb{R}_{\cal{S}},\mathcal{S}\in\mathcal{N}\}>$, where $\mathcal{N}$ is the set of players (nodes) with initial resource endowments $\left\{ D_{i}\right\} _{i\in\mathcal{N}}$, and $G=(\mathcal{N},\mathcal{E})$ is the graph describing the service exchange possibilities among the nodes. Moreover, $\mathbb{R_{{\cal S}}}$, $\,\mathcal{S}\subseteq\mathcal{N}$, is the set of feasible $|\mathcal{S}|$-dimensional vectors of players' received resources $\{\bm{r}_{i}\}_{i\in\mathcal{S}}$, satisfying properties (i) $\mathbb{R}_{\{i\}}=\left\{ 0\right\} ,\,\forall i\in\mathcal{N}$, (ii) $\mathbb{R_{{\cal S}}}$ is closed and bounded.
\end{definition}
\vspace{-1mm}
Our goal is to study the existence and the properties of self-enforcing allocations. This property is formally captured by the notion of stability for the grand coalition.
\vspace{-1mm}
\begin{definition} \textbf{Stability}. An allocation $\bm{d}$, and the respective resource vector $\bm{r}$ is called \emph{strongly} stable if for any node set ${\cal S}\subseteq{\cal N}$, there is no allocation $\widehat{\bm{d}}_{{\cal S}}$ on the induced subgraph $G_{{\cal S}}=\left({\cal S},{\cal E_{{\cal S}}}\right)$, such that $\hat{r}_{i}\geq r_{i}$ for all $i\in{\cal S}$, and $\hat{r}_{j}>r_{j}$ for at least one node $j\in{\cal S}$. The allocation is called \emph{weakly} stable if for any node set ${\cal S}\subseteq{\cal N}$ , there is no allocation $\widehat{\bm{d}}_{{\cal S}}$ such that $\hat{r}_{i}>r_{i}$ for all $i\in{\cal S}$.
\end{definition}
\vspace{-1mm}
Note that strong stability implies weak stability but not the other way around. In particular, the concept of weak stability for the grand coalition is directly related to the concept of the \emph{core} which is formally defined\footnote{\small{With a slight abuse of terminology we refer both to the received resource vectors and the respective allocations as stable.}} \cite{myerson-gametheory-book}:
\vspace{-1mm}
\begin{definition}\textbf{Core}. Given an NTU coalitional game $<\mathcal{N},G,\{\mathbb{R}_{\cal{S}},\ \cal{S}\in\cal{N}\}>$, the \emph{core} of $\mathbb{R}$ is defined as the subset of $\mathbb{R}$ which consists of all received resource vectors $\bm{r}\in\mathbb{R}_{{\cal }}$, such that for any possible coalition $\mathcal{S}$ and any allocation $\hat{\bm{d}}\in\mathbb{D}^{|\mathcal{S}|}$, if $\hat{r}_{i}>r_{i}$, for all $i\in\mathcal{S}$, then $\hat{\bm{r}}\notin\mathbb{R_{{\cal S}}}$.
\end{definition}
\vspace{-1mm}
In this coalitional framework, we ask the question: \emph{Is there a (weakly or strongly) stable allocation for this service exchange coalitional game, and if yes, what are its properties in terms of allocations and exchange ratios}?

\subsection{Competitive Framework}

Assume now that each node $i\in\mathcal{N}$ is an independent decision maker, devising its allocation vector $\bm{d}_i=\big(d_{ij}\big)_{j\in\mathcal{N}_i}$ so as to maximize the resource $r_i$ it receives. In such a competitive market setting, the nodes are allowed to select any policy that satisfies eq. (\ref{eq:Allloc}), i.e., allocating their entire resource (market clearing condition). \rev{Namely, the solution concept for this market is related to the competitive (or, Walrasian) equilibrium \cite{arrow-debreu}, \cite{ColellWhinstonGreenBook1995}, which has been also applied in communication networks \cite{RJohariToNBilateral2011}, and extended to graphical economies (which exhibit \emph{localities}) \cite{KearnsGraphEcon2004}, \cite{KearnsEconSocial2004}. However, for the problem under consideration, there do not exist explicit price variables (or, price signals), and hence we employ a different equilibrium concept}:
\vspace{-1.5mm}
\begin{definition}
\textbf{\emph{Exchange Equilibrium.}}  An allocation $\bm{d}$ is an exchange equilibrium, if and only if (iff) for any node $i\in \mathcal{N}$ it holds (i) $d_{ji}=\rho_{i}d_{ij}$ for all $j\in\mathcal{N}_{i}$, and (ii) if $d_{ji}>0$ for some $j\in\mathcal{N}_{i}$ then $\rho_{j}=\min_{k\in\mathcal{N}_{i}}\rho_{k}$.
\end{definition}
\vspace{-1mm}
In other words, at the equilibrium each node $i$ exchanges services only with its neighbors that trade in the lowest exchange ratio, so as to receive the maximum possible total service. Additionally, all the nodes that interact with $i$, have the same exchange ratio, while there may exist neighbors that do not allocate any resource to it. These latter nodes will certainly have higher exchange ratios, i.e., reciprocate with less resource for each unit of resource they receive. In this context, the question we want to tackle is the following: \emph{Does this game have exchange equilibrium(s), and if so, what are its properties and how does it depend on the graph $G$}.

%serves only those nodes that serve him, and specifically exchanges resource with every neighbor $j,k,\in\mathcal{N}_i$ at the same price $\rho_i=d_{ji}/d_{ij}=d_{ki}/d_{ik}$. Moreover, node $i$ exchanges resource only at the lower possible price, thus he does not exchange resource with any neighbor $k\in\mathcal{N}_i$ that "charges" a higher price $\rho_k>\rho_i$.

%Clearly, under a competitive equilibrium allocation, there is no node that can change its allocation policy and increase the total resource he receives.

%%% ===================================================================================================================================== 
\section{Lex-optimal Allocations}\label{sec:Central-design}

%5/3 e.g., ``let $\hat{\bm{d}}$ be a lex-optimal allocation''

% LG:
%At this point we may say  something like that
%"to avoid trivial cases we assume that there are no isolated nodes in the network. In is a graph $G$ has a set of isolated nodes $\cal{I}$ then we set $\rho_i =0$ for all $i\in \cal{i}$ and we proceed by considering the graph $G_{\cal{N}-\cal{i}}$. "
%
%Lets see how the rest of the paper flows  first.

In this section we study the properties of the lex-optimal vector $\bm{\rho}^{*}$ and the respective allocations. These results hold for any graph $G=(\mathcal{N},\mathcal{E})$, and resource endowments $\{D_i\}_{i\in\mathcal{N}}$. To avoid trivial cases, we assume that there are no isolated nodes in the network\footnote{\small{If a graph $G$ has a set of isolated nodes $\mathcal{I}$ then we set $\rho_i =0$ for all $i\in \mathcal{I}$ and we proceed by considering the graph $G_{\mathcal{N}-\mathcal{I}}$.}}. We give first some notations. We denote the set of neighbors of nodes in a set $\mathcal{S}$, that do not belong to $\mathcal{S}$, by $\mathcal{N}\left(\mathcal{S}\right)=\cup_{i\in\mathcal{S}}\mathcal{N}_{i}-\mathcal{S}$. Given an allocation $\bm{d}$, for each node $i\in{\cal N}$ we define the subset $\mathcal{D}_{i}=\left\{ j\in\mathcal{N}_{i}:\, d_{ij}>0\right\}$ of nodes that receive resource from $i$, and the subset of nodes that don't receive resource ${\cal H}_{i}={\cal N}_{i}-{\cal D}_{i}=\left\{ j\in\mathcal{N}_{i}:\, d_{ij}=0\right\}$. Also, we define the subset $\mathcal{R}_{i}=\left\{ j\in\mathcal{N}_{i}:\, d_{ji}>0\right\}$ of nodes that give resource to $i$.

%We use the notation $\bm{d}^{*}$ for lexicographic optimal allocations. If we need to use a symbol other than $\bm{d}^{*}$ for a lex-optimal allocation, explicit mention will be made. A quantity that depends on an allocation will be denoted by a letter and an overline symbol that is the same as the overline symbol of the allocation.

For a given $\bm{r}$, the set of \emph{different values (levels)} the coordinates of vector $\bm{\rho}$ take, will be denoted by $l_{k},\ i=1,...,K\leq N$, where $l_{1}<l_{2}<...<l_{K}$. The index of the level to which $\rho_{i}$ is equal, is denoted by $k(i)$, i.e., $l_{k(i)}=\rho_{i}$. We call $k(i)$ the ``level of node $i$''. The set of nodes with level $m$ is denoted by $\mathcal{L}_{m}=\left\{ i\in{\cal N}:\ k(i)=m\right\}$. If a subset of nodes $\mathcal{S\subseteq\mathcal{N}}$ has the same level under an allocation $\bm{d}$, then we denote the index of this level by $k\left({\cal S}\right)$. Note that the above quantities depend on the allocation $\bm{d}$, and hence, in order to facilitate notation, we will use the same overline symbol for them whenever applicable.

\textbf{Properties}. An important well-known property of the lex-optimal policy is that it is \emph{Pareto efficient} \cite{ColellWhinstonGreenBook1995}, \cite{nace-tutorial}, i.e., we cannot increase the exchange ratio for one node without decreasing the ratio of another node. The first property of the lex-optimal allocations that we prove is that the neighbors of each node $i\in\mathcal{N}$, that receive non-zero resource from $i$, belong to the same exchange ratio level set\footnote{\small{Recall that this ratio is determined by the total resource each of these nodes receives, i.e., not only from that allocated by node $i$.}}. Moreover, all the neighbors that do not receive resource from $i$, have a higher level index. Specifically:
\vspace{-1mm}
\begin{lemma}
\label{lem:neighbor}
Let $\bm{d}^{*}$ be a lex-optimal allocation, and let $i\in{\cal N}$. Then all nodes $j\in\mathcal{D}_{i}^{*}$ have the same level $l_{k(\mathcal{D}_{i}^{*})}^{*}$ and hence belong to the same set $\mathcal{L}_{k(\mathcal{D}_{i}^{*})}^{*}$. Moreover, for any node $j\in \mathcal{H}_{i}^{*}$, it is $l_{k(j)}^{*}\geq l_{k(\mathcal{D}_{i}^{*})}^{*}$.
\vspace{-1mm}
\end{lemma}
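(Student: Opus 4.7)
My plan is to prove both parts of Lemma~\ref{lem:neighbor} simultaneously by contradiction, using a small exchange argument applied to node $i$'s outgoing flow. Suppose the lemma fails. Then either (a) there are two nodes $j_1,j_2\in\mathcal{D}_i^*$ at two distinct levels, which, after relabeling, I take to satisfy $\rho^*_{j_2}<\rho^*_{j_1}$; or (b) there is $j_1\in\mathcal{D}_i^*$ and $j_2\in\mathcal{H}_i^*$ with $\rho^*_{j_2}<\rho^*_{j_1}$. In either case I obtain a pair of neighbors $j_1,j_2\in\mathcal{N}_i$ with $d^*_{ij_1}>0$ and $\rho^*_{j_2}<\rho^*_{j_1}$.

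For a small parameter $\epsilon\in(0,d^*_{ij_1}]$, I define the perturbed allocation $\bm{d}^{(\epsilon)}$ by setting $d^{(\epsilon)}_{ij_1}=d^*_{ij_1}-\epsilon$ and $d^{(\epsilon)}_{ij_2}=d^*_{ij_2}+\epsilon$, leaving everything else unchanged. This preserves (\ref{eq:Allloc}) at node $i$ and remains nonnegative. Only the received resources of $j_1$ and $j_2$ change, so only $\rho_{j_1}$ and $\rho_{j_2}$ change: $\rho_{j_1}$ decreases by $\epsilon/D_{j_1}$ and $\rho_{j_2}$ increases by $\epsilon/D_{j_2}$. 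I then choose $\epsilon$ small enough that the updated $\rho_{j_2}$ stays strictly smaller than the next larger level above $l^*_{k(j_2)}$ and the updated $\rho_{j_1}$ stays strictly larger than the next smaller level below $l^*_{k(j_1)}$. Such an $\epsilon$ exists because the number of levels is finite and all gaps between consecutive levels are strictly positive; note that both adjacent levels are well-defined since $\rho^*_{j_2}<\rho^*_{j_1}$ forces $k(j_2)$ not to be maximal and $k(j_1)$ not to be minimal.

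To conclude, I compare $\phi(\bm{\rho}^{(\epsilon)})$ and $\phi(\bm{\rho}^*)$ entry by entry. Every coordinate belonging to a level strictly below $k(j_2)$ is unchanged, so the two sorted vectors agree on their first $\sum_{m<k(j_2)}|\mathcal{L}^*_m|$ positions. Within the level-$k(j_2)$ block, $|\mathcal{L}^*_{k(j_2)}|-1$ copies of $l^*_{k(j_2)}$ still remain, matching the corresponding positions of $\phi(\bm{\rho}^*)$. At the last position of this block, however, $\phi(\bm{\rho}^*)$ has value $l^*_{k(j_2)}$, whereas the choice of $\epsilon$ ensures the perturbed ratio of $j_2$ fits between the remaining $l^*_{k(j_2)}$-entries and the next-higher level, so $\phi(\bm{\rho}^{(\epsilon)})$ has the strictly larger value $l^*_{k(j_2)}+\epsilon/D_{j_2}$ there. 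Hence $\phi(\bm{\rho}^{(\epsilon)})-\phi(\bm{\rho}^*)$ is first nonzero and positive at this coordinate, yielding $\bm{\rho}^{(\epsilon)}\succ\bm{\rho}^*$ and contradicting the lex-optimality of $\bm{d}^*$.

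The only delicate point, and therefore the main obstacle, is this last bookkeeping step: two coordinates of $\bm{\rho}^*$ change (one up, one down), and I must certify that after re-sorting, the first disagreement is an \emph{increase} and not a \emph{decrease}. The smallness of $\epsilon$ keeps each perturbed value inside its original level gap, so the decreased coordinate $\rho_{j_1}$ remains in its original level-$k(j_1)$ block, which lies strictly to the right of the level-$k(j_2)$ block in the sorted order; therefore the increased coordinate $\rho_{j_2}$ is the first source of disagreement, giving the required lex-increase and completing the contradiction.
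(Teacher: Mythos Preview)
Your proof is correct and uses the same exchange argument as the paper: shift a small amount of $i$'s outgoing resource from a higher-ratio neighbor $j_1$ to a lower-ratio neighbor $j_2$ and observe that this lex-improves $\bm\rho$, contradicting optimality. Your sorted-vector bookkeeping is more explicit than the paper's (which simply asserts $\hat{\bm r}\succ\bm r^*$ after the shift); the only loose end is that in the adjacent-level case $k(j_1)=k(j_2)+1$ both perturbed ratios land in the same gap $(l^*_{k(j_2)},l^*_{k(j_1)})$, so the entry at the last position of the $k(j_2)$-block is $\min\{\rho_{j_1}^{(\epsilon)},\rho_{j_2}^{(\epsilon)}\}$ rather than necessarily $\rho_{j_2}^{(\epsilon)}$---but either way it strictly exceeds $l^*_{k(j_2)}$, so your contradiction stands.
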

\begin{proof}
%For this proof (as well as the rest of the analysis), please recall that the lex-optimal price vectors - and hence the respective level sets - are unique so we use the star ($*$) notation for them. On the other hand, there may be many lex-optimal allocations which we notate with different overline symbols.
Consider a lex-optimal allocation $\boldsymbol{\bar{d}}$ and let $j_{1},\ j_{2}$ be such that $\bar{d}_{ij_{1}}>0$, $\bar{d}_{ij_{2}}>0$, but $j_{1}\in{\cal L}_{m}^{*}$ and $j_{2}\in\mathcal{L}_{n}^{*}$ with $m<n$. Recall that the lex-optimal price vectors - and hence the respective level sets - are unique so we use the star ($*$) notation for them. We can then move some resource from $j_{2}$ and give it to $j_{1}$ while ensuring that with the resulting allocation $\widehat{\boldsymbol{d}}$, it is $l_{k(j_{1})}^{*}<\widehat{l}_{k(j_{1})}\leq\widehat{l}_{k(j_{2})}<l_{k(j_{2})}^{*}$. Since the received resources of all other nodes remain the same, it follows that $\widehat{\boldsymbol{r}}\succ\boldsymbol{\bar{r}}=\boldsymbol{r}^{*}$, which is a contradiction as we assumed that it is lex-optimal. Assume next that $d_{ij}^{*}=0$ for a node $j\in\mathcal{N}_{i}$ for which it holds that $l_{k(j)}^{*}<l_{k(\mathcal{\bar{D}}_{i})}^{*}$. Using a similar argument we arrive at a contradiction again. $\blacksquare$
\end{proof}
\noindent Note that this lemma shows already that lex-optimal allocations have some of the required properties of exchange equilibriums. As will be  shown later there are lex-optimal allocations that are in fact exchange equilibriums.

\begin{figure}[t]
\vspace{-1mm}
\begin{centering}
\includegraphics[scale=0.48]{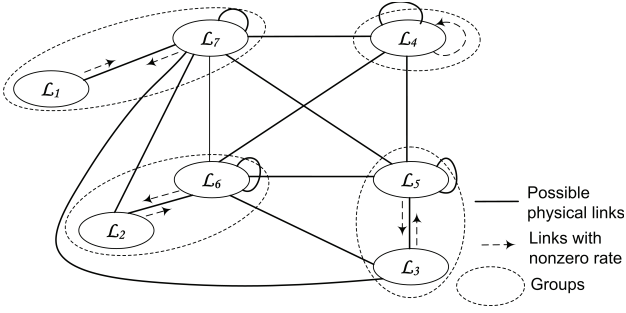}%{Examples-Corol3}
\par
\end{centering}
\caption{\small{Structure of a graph with $K^{*}=7$ levels.}\label{fig:Corollary-Example}}
\vspace{-1mm}
\end{figure}

We need some additional notation at this point. Let $\bm{d}\in\mathbb{D}$ and assume $K\geq2$, i.e., the allocation has at least two levels. Define ${\cal Q}_{1}={\cal N}$, and for $K\geq3$:
\begin{align*}
{\cal Q}_{k} & ={\cal N}-\cup_{m=1}^{k-1}\left({\cal L}_{m}\cup{\cal L}_{K-m+1}\right),\ 2\leq k\leq\lceil{K/2\rceil}.
\end{align*}
For example, $\mathcal{Q}_{2}$ consists of the nodes in $\mathcal{N}$ that remain after removing those that belong to the level sets $\mathcal{L}_1$ and $\mathcal{L}_K$. In the sequel, a quantity $X$ referring to induced subgraph $G_{{\cal Q}_{k}}=\left({\cal Q}_{k},{\cal E}_{{\cal Q}_{k}}\right)$ is denoted $X_{{\cal Q}_{k}}$. The next Theorem describes the structure of a lex-optimal allocation.
\begin{theorem}
\label{thm:MainTh0}
If an allocation $\bm{d}^{*}$ is lex-optimal and $K\geq2$, then the following Properties hold:
\begin{enumerate}
\item \label{enu:MainTh0Item1}${\cal L}_{k}^{*}$ is an independent set in graph $G_{{\cal Q}_{k}}$, for $k=1,....,\lfloor{\frac{K}{2}\rfloor}$.
\item \label{enu:MainTh0Item2}${\cal L}_{K-k+1}^{*}={\cal N}_{{\cal Q}_{k}}\left({\cal L}_{k}^{*}\right)$, for $k=1,....,\lfloor{\frac{K}{2}\rfloor}$.
\item \label{enu:MainTh0Item2.1}$l_{k}^{*}l_{K-k+1}^{*}=1$, for $k=1,....,\lfloor{K/2\rfloor}$.
\item \label{enu:MainTh0Item3}$\sum_{i\in{\cal L}_{k}^{*}}r_{i}^{*}=\sum_{i\in{\cal L}_{K-k+1}^{*}}D_{i}$, for $k=1,....,\lfloor{\frac{K}{2}\rfloor}$.
\end{enumerate}

\end{theorem}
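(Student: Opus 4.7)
The plan is induction on $k$: in the base case $k=1$ I establish Properties~1--4, and in the inductive step I reduce to the same statement on the induced subgraph $G_{\mathcal{Q}_2}$ by arguing that the restriction of $\bm{d}^*$ there is still lex-optimal. Lemma~\ref{lem:neighbor} is the main tool throughout: at every node $i$ it identifies the common level of the recipients $\mathcal{D}_i^*$ and bounds the levels of the non-recipients $\mathcal{H}_i^*$ from below by that same value.

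\textbf{Property~1 at $k=1$.} I argue by contradiction. Suppose some edge of $G$ lies inside $\mathcal{L}_1^*$, and let $A\subseteq\mathcal{L}_1^*$ be the nonempty set of $\mathcal{L}_1^*$-nodes having at least one $\mathcal{L}_1^*$-neighbor. For each $a\in A$, Lemma~\ref{lem:neighbor} forces $k(\mathcal{D}_a^*)=1$, so $a$ sends all of $D_a$ to its $\mathcal{L}_1^*$-neighbors; those neighbors lie in $A$ (any $\mathcal{L}_1^*$-node adjacent to $A$ is by definition in $A$), so the outflow of $A$ stays inside $A$ and the internal flow satisfies $F_{AA}=\sum_{a\in A}D_a$. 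Since $\sum_{a\in A}r_a^*=l_1^*\sum_{a\in A}D_a$ equals $F_{AA}$ plus a nonnegative external contribution, one gets $l_1^*\geq 1$. Combined with the identity $\sum_{i}r_i^*=\sum_{i}D_i$ (which forces the $D_i$-weighted mean of the $\rho_i^*$ to equal $1$), this collapses every $\rho_i^*$ to $1$, yielding $K=1$ and contradicting $K\geq 2$.

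\textbf{Properties~2, 3, 4 at $k=1$.} With $\mathcal{L}_1^*$ independent, Lemma~\ref{lem:neighbor} at each $j\in\mathcal{N}(\mathcal{L}_1^*)$ forces $k(\mathcal{D}_j^*)=1$, so $j$ sends its full $D_j$ into $\mathcal{L}_1^*$; symmetrically every $i\in\mathcal{L}_1^*$ sends its full $D_i$ into $\mathcal{N}(\mathcal{L}_1^*)$. I then exclude external inflow into $\mathcal{N}(\mathcal{L}_1^*)$ from $\mathcal{R}:=\mathcal{N}\setminus(\mathcal{L}_1^*\cup\mathcal{N}(\mathcal{L}_1^*))$: any $k\in\mathcal{R}$ sending into some $j\in\mathcal{N}(\mathcal{L}_1^*)$ would, by Lemma~\ref{lem:neighbor}, have all its neighbors at the level of $j$; but those neighbors send only to $\mathcal{L}_1^*$, so $r_k^*=0$ places $k$ in $\mathcal{L}_1^*$, a contradiction. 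A further augmenting-cycle swap through $\mathcal{L}_1^*$-intermediaries then shows that all nodes of $\mathcal{N}(\mathcal{L}_1^*)$ must share a single ratio, which is therefore the maximum $l_K^*$; hence $\mathcal{L}_K^*=\mathcal{N}(\mathcal{L}_1^*)$, proving Property~2. Flow conservation between the two sides gives $\sum_{i\in\mathcal{L}_1^*}r_i^*=\sum_{j\in\mathcal{L}_K^*}D_j$ (Property~4), and the paired balances $l_1^*\sum_{\mathcal{L}_1^*}D_i=\sum_{\mathcal{L}_K^*}D_j$ and $l_K^*\sum_{\mathcal{L}_K^*}D_j=\sum_{\mathcal{L}_1^*}D_i$ multiply to yield $l_1^*l_K^*=1$ (Property~3).

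\textbf{Induction step and main obstacle.} Since $\mathcal{L}_1^*\cup\mathcal{L}_K^*$ exchanges resources only within itself, the restriction $\bm{d}^*|_{G_{\mathcal{Q}_2}}$ is a valid allocation on $G_{\mathcal{Q}_2}$ with the original endowments, and it is lex-optimal there: otherwise splicing a strictly lex-larger allocation on $G_{\mathcal{Q}_2}$ back with the unchanged $\mathcal{L}_1^*\!\leftrightarrow\!\mathcal{L}_K^*$ exchanges would give a globally lex-larger $\bm{\rho}$, contradicting the optimality of $\bm{d}^*$. Applying the base case recursively to $G_{\mathcal{Q}_2}$ yields Properties~1--4 at $k=2$, and iterating drives the induction up to $k=\lfloor K/2\rfloor$. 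The most delicate step is the augmenting-cycle swap in the preceding paragraph used to force uniformity of ratios on $\mathcal{N}(\mathcal{L}_1^*)$: a naive single-edge swap would violate $\sum_{j}d_{ij}=D_i$ or lower an $\mathcal{L}_1^*$-ratio below $l_1^*$; the reallocation must instead be routed along a cycle that preserves every sending sum while strictly raising the sorted ratio vector in lex order, and verifying that such a cycle always exists when uniformity fails is the crux of the technical work.
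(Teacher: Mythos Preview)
Your overall inductive strategy (peel off $\mathcal{L}_1^*\cup\mathcal{L}_K^*$ and recurse on $G_{\mathcal{Q}_2}$) and your proof that $\mathcal{L}_1^*$ is independent are essentially the paper's. The divergence, and the real problem, is in how you establish Properties~2--4 for $k=1$.

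\textbf{The gap.} Your exclusion of inflow from $\mathcal{R}=\mathcal{N}\setminus(\mathcal{L}_1^*\cup\mathcal{N}(\mathcal{L}_1^*))$ into $\mathcal{N}(\mathcal{L}_1^*)$ misapplies Lemma~\ref{lem:neighbor}. If $k\in\mathcal{R}$ sends to $j\in\mathcal{N}(\mathcal{L}_1^*)$, the lemma at node $k$ gives only that $\mathcal{D}_k^*$ sits at the level of $j$ and $\mathcal{H}_k^*$ sits at level $\geq$ that of $j$; it does \emph{not} force all neighbors of $k$ to share the level of $j$, nor does it place them inside $\mathcal{N}(\mathcal{L}_1^*)$. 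So the conclusion ``those neighbors send only to $\mathcal{L}_1^*$, hence $r_k^*=0$'' does not follow. Without this step you have not shown that $\mathcal{L}_1^*\cup\mathcal{N}(\mathcal{L}_1^*)$ is closed to inflow, and then neither the uniformity of the ratios on $\mathcal{N}(\mathcal{L}_1^*)$ nor the identification $\mathcal{N}(\mathcal{L}_1^*)=\mathcal{L}_K^*$ is available. Your augmenting-cycle argument for uniformity is also only sketched, and even granting it, you give no reason why the common ratio $1/l_1^*$ dominates the ratios in $\mathcal{R}$ (i.e.\ why it is $l_K^*$ rather than some intermediate level).

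\textbf{What the paper does instead.} The paper attacks $k=1$ from the top rather than the bottom: it defines $\bar{\mathcal{G}}=\{i:k(\mathcal{D}_i^*)=K^*\}$, the set of nodes that feed $\mathcal{L}_K^*$. Because $K^*$ is the \emph{maximal} level, Lemma~\ref{lem:neighbor}'s inequality $l_{k(j)}^*\geq l_{k(\mathcal{D}_i^*)}^*$ collapses to equality, so every neighbor of a node in $\bar{\mathcal{G}}$ lies in $\mathcal{L}_K^*$; this immediately gives $\bar{\mathcal{G}}$ independent, $\mathcal{N}(\bar{\mathcal{G}})=\mathcal{L}_K^*$, and $\mathrm{In}(\bar{\mathcal{G}}\cup\mathcal{L}_K^*)=0$ without any cycle chase. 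A flow-counting lemma then yields $l_{K}^*l_{k_0}^*\leq 1$ (where $k_0$ is the smallest level present in $\bar{\mathcal{G}}$), while the bottom-up count you already have gives $l_K^*l_1^*\geq 1$. The squeeze, together with $\mathcal{L}_1^*\subseteq\bar{\mathcal{G}}$, forces $\bar{\mathcal{G}}=\mathcal{L}_1^*$ and $l_1^*l_K^*=1$, from which Properties~2--4 drop out. The key asymmetry is that Lemma~\ref{lem:neighbor} becomes sharp only at the top level, so a purely bottom-up argument cannot close the loop by itself.
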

\vspace{-1.5mm}

Interestingly, sufficiency of this result holds as well:
\vspace{-1mm}
\begin{theorem}
\label{thm:MainLexSuff}
If an allocation $\bm{d}$ with $K\geq2$ has properties \ref{enu:MainTh0Item1}-\ref{enu:MainTh0Item3} of Theorem \ref{thm:MainTh0}, then it is lex-optimal.
\end{theorem}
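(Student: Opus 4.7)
The plan is to show that the ratio vector $\bm{\rho}$ induced by $\bm{d}$ satisfies the max-min fairness condition; combined with the equivalence of max-min fairness and lex-optimality on the compact convex set $\mathbb{P}$ \cite{boudecfairness07} and uniqueness of $\bm{\rho}^{*}$, this yields $\bm{\rho}=\bm{\rho}^{*}$ and hence $\bm{d}$ is lex-optimal. The crux will be a single uniform inequality,
\[
\tilde{r}(S_m) \;\leq\; r(S_m),\qquad S_m := \mathcal{L}_1 \cup \cdots \cup \mathcal{L}_m,
\]
to be established for every $m\in\{1,\ldots,K\}$ and every feasible $\tilde{\bm{d}}\in\mathbb{D}$, where $\tilde{r}$ and $r$ denote received resources under $\tilde{\bm{d}}$ and $\bm{d}$ respectively.

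The first step is an adjacency lemma between level sets. Using Property \ref{enu:MainTh0Item2} iteratively, for $j<k$ with $j+k\leq K$ the level $\mathcal{L}_k$ is a subset of $\mathcal{Q}_j$ distinct from $\mathcal{L}_{K-j+1}$, so Property \ref{enu:MainTh0Item2} at index $j$ forbids any edge between $\mathcal{L}_j$ and $\mathcal{L}_k$. Consequently, for every $m\leq\lceil K/2\rceil$, any two distinct indices $j,k\in\{1,\ldots,m\}$ satisfy $j+k\leq 2m-1\leq K$, so $S_m$ is independent in $G$, and the same structural lemma gives $\mathcal{N}(S_m)=\mathcal{L}_{K-m+1}\cup\cdots\cup\mathcal{L}_K$. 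On the allocation side, Property \ref{enu:MainTh0Item2.1} combined with $r(\mathcal{L}_k)=l_k D(\mathcal{L}_k)$ upgrades Property \ref{enu:MainTh0Item3} to $r(\mathcal{L}_k)=D(\mathcal{L}_{K-k+1})$ for every $k$ (the odd-$K$ middle level $l_{(K+1)/2}=1$ falling out of global conservation $\sum_i r_i=\sum_i D_i$). Summing over $k\leq m$ yields $r(S_m)=D(\mathcal{N}(S_m))$, and since $S_m$ is independent, under any $\tilde{\bm{d}}$ the aggregate $\tilde{r}(S_m)$ is bounded by the total endowment of its external neighbors, namely $D(\mathcal{N}(S_m))=r(S_m)$.

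The remaining range $m\geq\lfloor K/2\rfloor$ is dispatched dually. Since $K-m\leq\lceil K/2\rceil$, the set $S_{K-m}$ is independent by the first step with $\mathcal{N}(S_{K-m})=\mathcal{L}_{m+1}\cup\cdots\cup\mathcal{L}_K=S_m^{c}$, so every unit of its endowment $D(S_{K-m})$ must flow into $S_m^{c}$ under any $\tilde{\bm{d}}$, giving $\tilde{r}(S_m^{c})\geq D(S_{K-m})=r(S_m^{c})$. Conservation $\tilde{r}(S_m)+\tilde{r}(S_m^{c})=\sum_i D_i=r(S_m)+r(S_m^{c})$ then recovers $\tilde{r}(S_m)\leq r(S_m)$. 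Because $\{m\leq\lceil K/2\rceil\}\cup\{m\geq\lfloor K/2\rfloor\}$ already exhausts $\{1,\ldots,K\}$, the uniform inequality holds for every $m$.

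Max-min fairness is then immediate: for any node $i$ set $m=k(i)$, and suppose a feasible $\tilde{\bm{\rho}}$ satisfies $\tilde{\rho}_j\geq\rho_j$ for every $j$ with $\rho_j\leq\rho_i$ (i.e., for every $j\in S_m$). Weighting by $D_j$ and summing gives $\tilde{r}(S_m)\geq r(S_m)$, and combined with the uniform inequality this forces $\tilde{\rho}_j=\rho_j$ for every $j\in S_m$, in particular $\tilde{\rho}_i=\rho_i$. Since $i$ was arbitrary, $\bm{\rho}$ is max-min fair, hence the unique lex-optimal \cite{boudecfairness07}. The main obstacle will be the adjacency lemma of the second paragraph, since Property \ref{enu:MainTh0Item2} a priori only restricts neighbors inside the shrinking sets $\mathcal{Q}_j$; one has to invoke it at multiple levels to exclude all cross-level edges below the diagonal $j+k=K+1$, and then correctly identify $\mathcal{N}(S_m)$ in terms of the upper level sets.
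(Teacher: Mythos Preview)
Your proof is correct and takes a genuinely different route from the paper's. The paper argues by induction on $K$: it first establishes (Lemma~\ref{lem:InOutZero}) that Properties~\ref{enu:MainTh0Item1}--\ref{enu:MainTh0Item3} force ${\rm In}(\mathcal{L}_1\cup\mathcal{L}_K)={\rm Out}(\mathcal{L}_1\cup\mathcal{L}_K)=0$, then invokes Lemma~\ref{lem:MaxMin} to identify $\mathcal{L}_1=\mathcal{L}_1^{*}$ and $l_1=l_1^{*}$ (and hence $\mathcal{L}_K=\mathcal{L}_{K^{*}}^{*}$ via Proposition~\ref{lem:MainLem}), and finally peels off the outer pair of level sets and recurses on the restriction $\boldsymbol{d}_{\mathcal{Q}_2}$, which has $K-2$ levels. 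Your argument bypasses the peeling entirely: you establish the uniform inequality $\tilde{r}(S_m)\leq r(S_m)$ for all $m$ simultaneously and then appeal directly to the max-min fairness characterization of \cite{boudecfairness07}. This is more elementary---no induction, no use of Lemma~\ref{lem:MaxMin} or of the necessity direction (Proposition~\ref{lem:MainLem})---and it makes transparent that the combinatorial content of Properties~\ref{enu:MainTh0Item1}--\ref{enu:MainTh0Item3} is exactly the saturation $r(S_m)=D(\mathcal{N}(S_m))$ of the cumulative lower level sets against their neighborhoods.

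One small boundary correction: your claim that $S_m$ is independent for all $m\leq\lceil K/2\rceil$ is not quite justified when $K$ is odd and $m=\lceil K/2\rceil$, because Property~\ref{enu:MainTh0Item1} does not cover the middle level $\mathcal{L}_{(K+1)/2}$, which may well have internal edges. Likewise, in the dual step, the neighbor identity $\mathcal{N}(S_{K-m})=S_m^{c}$ you invoke relies on $K-m\leq\lfloor K/2\rfloor$. None of this damages the argument: the primal step is valid for $m\leq\lfloor K/2\rfloor$, the dual step for $m\geq\lceil K/2\rceil$, and these two ranges already exhaust $\{1,\ldots,K\}$. Simply tighten the stated ranges accordingly.
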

\vspace{-1mm}
Finally, when there is only one level set, it holds:
\vspace{-1mm}
\begin{theorem}
\label{thm:MainTh0_K1}
If an allocation $\bm{d}^{*}$ is lex-optimal, and $K^{*}=1$, then $l_{1}^{*}=1$. Also, if an allocation $\bm{d}$ has $K=1$, then $l_{1}=1$ and it is lex-optimal.
\end{theorem}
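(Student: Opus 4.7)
The plan is to exploit a single conservation identity: since every unit of resource that some node $i$ allocates is received by some neighbor, exchanging the order of summation in $\sum_{i\in\mathcal{N}}\sum_{j\in\mathcal{N}_i}d_{ij}$ together with the market-clearing constraint (\ref{eq:Allloc}) yields
\begin{equation*}
\sum_{i\in\mathcal{N}} r_i \;=\; \sum_{i\in\mathcal{N}}\sum_{j\in\mathcal{N}_i}d_{ji} \;=\; \sum_{j\in\mathcal{N}}\sum_{i\in\mathcal{N}_j}d_{ji} \;=\; \sum_{j\in\mathcal{N}} D_j,
\end{equation*}
equivalently $\sum_{i\in\mathcal{N}}(\rho_i - 1)D_i = 0$. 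This identity is the workhorse for both parts.

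For the first statement, assume $\bm{d}^*$ is lex-optimal with $K^*=1$, so $\rho_i^* = l_1^*$ for every $i\in\mathcal{N}$. Substituting into the identity gives $l_1^*\sum_i D_i = \sum_i D_i$, and since $D_i>0$ and the graph is nonempty, $\sum_i D_i>0$, which forces $l_1^*=1$.

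For the second statement, let $\bm{d}$ be any allocation with $K=1$. The same substitution yields $l_1=1$, so the induced ratio vector is $\bm{\rho}=\mathbf{1}$. To prove lex-optimality, I would argue by contradiction: suppose there exists a feasible ratio vector $\bm{\rho}'\in\mathbb{P}$ with $\bm{\rho}'\succ\mathbf{1}$. The identity applied to $\bm{\rho}'$ reads $\sum_i(\rho'_i-1)D_i = 0$; since $D_i>0$, this implies either $\bm{\rho}'=\mathbf{1}$ (contradicting $\bm{\rho}'\succ\mathbf{1}$) or the existence of some $j$ with $\rho'_j<1$. In the latter case the smallest coordinate of $\phi(\bm{\rho}')$ is strictly less than $1$, which is the smallest (and every) coordinate of $\phi(\mathbf{1})$, so the first non-zero component of $\phi(\bm{\rho}')-\phi(\mathbf{1})$ is negative, giving $\bm{\rho}'\prec\mathbf{1}$, a contradiction.

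There is no genuine obstacle here; the only subtlety is being careful with the lexicographic ordering definition, which compares sorted vectors and is decided by the first differing coordinate—so once a strictly smaller entry appears, it suffices to conclude the comparison in favour of the all-ones vector.
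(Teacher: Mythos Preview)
Your proof is correct and follows essentially the same approach as the paper. The paper packages the conservation identity into Lemma~\ref{lem:equality} (applied with $\mathcal{S}=\mathcal{N}$), uses it in Lemma~\ref{lem:levels} to deduce $l_1=1$ when $K=1$, and then in Proposition~\ref{prop:LexK1} argues the lex-optimality by contradiction in the same spirit as you do---only phrased in the contrapositive direction (if $\hat{\boldsymbol{\rho}}\succ\mathbf{1}$ then all $\hat{\rho}_i\geq 1$ with one strict, contradicting the identity), whereas you derive directly from the identity that any $\boldsymbol{\rho}'\neq\mathbf{1}$ must have a coordinate below~$1$ and hence satisfies $\boldsymbol{\rho}'\prec\mathbf{1}$.
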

\vspace{-1mm}

\subsection{Analysis and Discussion}

Let us now discuss the implications of the above theorems. Under a lex-optimal allocation, the nodes are classified in disjoint sets of different levels, in a fashion that depends both on their resource endowments and on the graph $G$. For the discussion below, please refer to Fig. \ref{fig:Corollary-Example}, that presents an example of the structure for $K^{*}=7$ levels. In this graph, we depict with solid lines the physical connections that may exist among the different sets of nodes. Notice that the actual nodes and their detailed connections are not shown.

%shall we use "price" and "level" interchangeably? If we want to conform with the notion of "price" in the literature then then price= SquareRoot(level).

\uline{Exchange ratios Structure}. According to Property $3$ of Theorem $1$, the exchange ratios have a certain structure. Specifically, the highest level of ratios is inversely proportional to the lowest level of ratios ($l_{7}^{*}=1/l_{1}^{*}$), the second highest exchange ratio level is inversely proportional to the second lowest ratio level ($l_{6}^{*}=1/l_{2}^{*}$), and so on. Additionally, resource exchanges satisfy Property $4$. For example, in Fig. \ref{fig:Corollary-Example} all the nodes of the highest ratio set $\mathcal{L}_{7}^{*}$, allocate their entire resources to the nodes belonging to the lowest level set $\mathcal{L}_{1}^{*}$. Moreover, the latter receive resource only from the nodes in $\mathcal{L}_{7}^{*}$. Similarly, the nodes in level set $\mathcal{L}_{6}^{*}$ allocate all their received resources to nodes in set $\mathcal{L}_{2}^{*}$ which are served only by these former nodes, and so on. Interestingly, when $K^{*}$ is odd, there is one set of nodes, here the set $\mathcal{L}_{4}^{*}$, which exchange resource only with each other.

\uline{Topological Properties}. On the other hand, Properties $1$ and $2$ reveal the \emph{impact of the network topology on the max-min solution}. First, from Property $2$, we can find the possible neighbors for each node, based on the level set it belongs to. For example, when $K^{*}=7$, it holds $\mathcal{L}_{7}^{*}=\mathcal{N}(\mathcal{L}_{1}^{*})$, i.e., the neighbors of nodes in $\mathcal{L}_{1}^{*}$, that do not have ratios $l_{1}^{*}$, belong only to set $\mathcal{L}_{7}^{*}$. Moreover, since Property $1$ states that the set $\mathcal{L}_{1}^{*}$ in independent in the graph $G_{{\cal Q}_{1}}\triangleq G$, we understand that $\mathcal{L}_1$ nodes have neighbors only in $\mathcal{L}_7$.

Similarly, it holds that $\mathcal{L}_{6}^{*}=\mathcal{N}_{\mathcal{Q}_{2}}(\mathcal{L}_{2}^{*})$. Hence, the nodes in set $\mathcal{L}_{2}^{*}$ can have links only with nodes in set $\mathcal{L}_{6}^{*}$ and possibly with nodes in $\mathcal{L}_{7}^{*}$ (since the latter do not belong in $G_{\mathcal{Q}_2}$). However, from Lemma \ref{lem:neighbor} it follows that nodes in $\mathcal{L}_{2}^{*}$ exchange resource only with nodes in $\mathcal{L}_{6}^{*}$. With the same reasoning, it is easy to see that nodes in set $\mathcal{L}_{3}^{*}$ can be physically connected with nodes in $\mathcal{L}_{7}^{*}$, $\mathcal{L}_{6}^{*}$ and $\mathcal{L}_{5}^{*}$, but they exchange resource only with nodes in the latter set. Finally, nodes in set $\mathcal{L}_{4}^{*}$ exchange resources only with each other.

%In general, the nodes in the set ${\cal {\cal L}}_{k}^{*},\ 1\leq k\leq\lfloor{K^{*}/2\rfloor}$ may be connected only to nodes in sets ${\cal {\cal L}}_{K-m+1}^{*}$, $1\leq m\leq k$. Each node set however, allocates resource only to their neighbors with the lowest price level.

These properties reveal how the graph affects the lex-optimal fair solution. For example, by adding a link between two nodes initially belonging to $\mathcal{L}_{1}^{*}$ (which is independent), the lex-optimal solution changes and places these (now connected) nodes to another set. This dependency among the graph structure and the lex-optimal exchange ratio vector will become more evident in the sequel.

%%% ==============================

%%%%==============================================================================================

\subsection{Proofs of Theorems}
In this subsection we provide the proofs of Theorems \ref{thm:MainTh0_K1} and \ref{thm:MainTh0}, while Theorem \ref{thm:MainLexSuff} is proved in the Appendix.

\uline{\textbf{3.2.1 PROOF of Theorem \ref{thm:MainTh0_K1}}}. Before proceeding with the proof, we provide some additional notation, lemmas and propositions. We denote the sum of received resources that are incoming to, and outgoing from set ${\cal S}\subseteq{\cal N}$, under allocation $\boldsymbol{d}$, as follows:
\begin{equation}
{\rm In}\left({\cal S}\right)=\sum_{i\in{\cal S}}\ \sum_{j\in{\cal N}_{i}\cap{\cal S}^{c}}d_{ji},\,\,\,{\rm Out}\left({\cal S}\right)=\sum_{i\in{\cal S}}\ \sum_{j\in{\cal N}_{i}\cap{\cal S}^{c}}d_{ij}, \label{eq:In-Out-3}
\end{equation}
where $\mathcal{S}^{c}=\mathcal{N}-\mathcal{S}$ is the complement set of $\mathcal{S}$. By definition it is ${\rm In}(\mathcal{N})={\rm Out}(\mathcal{N})=0$, and also:
\begin{equation}
{\rm In}\left({\cal S}\right)={\rm Out}(\mathcal{S}^{c}),\,\,{\rm Out(\mathcal{S})={\rm In(\mathcal{S}^{c})}}\,. \label{eq:InOutN1}
\end{equation}

\begin{lemma}
\label{lem:equality} For any set ${\cal S}\subseteq{\cal N}$ , under any feasible allocation $\boldsymbol{d}\in\mathbb{D}$, it holds
\begin{equation}
\sum_{i\in\mathcal{S}}r_{i}+{\rm Out}\left({\cal S}\right)=\sum_{i\in\mathcal{S}}D_{i}+{\rm In}\left({\cal S}\right).\label{eq:equality}
\end{equation}
\begin{equation}
{\rm Out}\left({\cal S}\right)\leq\sum_{i\in{\cal S}}D_{i}\label{eq:OutIneq}
\end{equation}
 with equality holding iff all nodes in ${\cal S}$ give their resource only to nodes outside $\mathcal{S}$. Also, it is:
\begin{equation}
{\rm In}\left({\cal S}\right)\leq\sum_{i\in{\cal S}}r_{i}\label{eq:InIneq-1}
\end{equation}
with equality holding iff all nodes in $\mathcal{S}$ get resource only from nodes outside $\mathcal{S}$.
\end{lemma}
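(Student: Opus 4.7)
\textbf{Plan of proof for Lemma \ref{lem:equality}.} The statement is essentially a conservation-of-flow identity for the subset $\mathcal{S}$, followed by two slack inequalities. The whole argument is just careful bookkeeping: splitting each per-node sum into an ``internal'' part (the other endpoint lies in $\mathcal{S}$) and a ``boundary'' part (the other endpoint lies in $\mathcal{S}^{c}$), and then observing that the internal parts cancel. I expect no genuine difficulty; the main thing to get right is the relabeling that shows the two internal sums are equal.

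First I would expand the two side terms of (\ref{eq:equality}) using the definitions. Starting from $r_i=\sum_{j\in\mathcal{N}_i}d_{ji}$ and $D_i=\sum_{j\in\mathcal{N}_i}d_{ij}$ (the latter by the allocation constraint (\ref{eq:Allloc})), I would write
\begin{align*}
\sum_{i\in\mathcal{S}}r_i &= \sum_{i\in\mathcal{S}}\sum_{j\in\mathcal{N}_i\cap\mathcal{S}}d_{ji} \;+\; \sum_{i\in\mathcal{S}}\sum_{j\in\mathcal{N}_i\cap\mathcal{S}^c}d_{ji}, \\
\sum_{i\in\mathcal{S}}D_i &= \sum_{i\in\mathcal{S}}\sum_{j\in\mathcal{N}_i\cap\mathcal{S}}d_{ij} \;+\; \sum_{i\in\mathcal{S}}\sum_{j\in\mathcal{N}_i\cap\mathcal{S}^c}d_{ij}.
\end{align*}
The second terms on the right are, by definition, $\mathrm{In}(\mathcal{S})$ and $\mathrm{Out}(\mathcal{S})$ respectively. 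The first terms are both sums over ordered pairs $(i,j)\in\mathcal{S}\times\mathcal{S}$ that form edges, and a swap of the dummy indices $i\leftrightarrow j$ together with the symmetry of the neighborhood relation (the underlying graph is undirected) shows that $\sum_{i\in\mathcal{S}}\sum_{j\in\mathcal{N}_i\cap\mathcal{S}}d_{ji}=\sum_{i\in\mathcal{S}}\sum_{j\in\mathcal{N}_i\cap\mathcal{S}}d_{ij}$. Subtracting the two displayed identities then yields (\ref{eq:equality}).

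For (\ref{eq:OutIneq}), I would simply drop the non-negative internal term in the expansion of $\sum_{i\in\mathcal{S}}D_i$: since $d_{ij}\geq 0$,
\[
\mathrm{Out}(\mathcal{S})=\sum_{i\in\mathcal{S}}D_i - \sum_{i\in\mathcal{S}}\sum_{j\in\mathcal{N}_i\cap\mathcal{S}}d_{ij}\;\leq\;\sum_{i\in\mathcal{S}}D_i,
\]
with equality iff $d_{ij}=0$ for every $i\in\mathcal{S}$ and every $j\in\mathcal{N}_i\cap\mathcal{S}$, which is exactly the statement that all nodes in $\mathcal{S}$ send resource only to nodes outside $\mathcal{S}$. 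Inequality (\ref{eq:InIneq-1}) is obtained by the symmetric argument applied to $\sum_{i\in\mathcal{S}}r_i$ (dropping the internal-incoming part), with equality iff $d_{ji}=0$ for every $i\in\mathcal{S}$ and every $j\in\mathcal{N}_i\cap\mathcal{S}$, i.e., nodes of $\mathcal{S}$ receive only from $\mathcal{S}^c$. As a sanity check, either (\ref{eq:OutIneq}) or (\ref{eq:InIneq-1}) can alternatively be deduced from (\ref{eq:equality}) combined with the nonnegativity of the other boundary term, which is reassuring and avoids duplicating work if desired. $\blacksquare$
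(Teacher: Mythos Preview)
Your proof is correct and follows essentially the same approach as the paper: split each per-node sum into an internal part and a boundary part, observe that the internal parts coincide via the index swap $i\leftrightarrow j$, and identify the boundary parts with ${\rm In}(\mathcal{S})$ and ${\rm Out}(\mathcal{S})$. Your treatment of the two inequalities is slightly more explicit than the paper's (which simply says they ``follow directly from the definitions''), but the underlying reasoning is identical.
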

\begin{proof}
Note that for any node set ${\cal S}$, the following holds:
\begin{equation}
\sum_{i\in{\cal S}}\ \sum_{j\in{\cal N}_{i}\cap{\cal S}}d_{ji}=\sum_{i\in{\cal S}}\ \sum_{j\in{\cal N}_{i}\cap{\cal S}}d_{ij}.\label{eq:basic1}
\end{equation}
Also, by definition
\begin{equation}
r_{i}=\sum_{j\in{\cal N}_{i}\cap{\cal S}}d_{ji}+\sum_{j\in{\cal N}_{i}\cap{\cal S}^{c}}d_{ji},\label{eq:basic1-1}
\end{equation}
by feasibility of $\boldsymbol{d}\in\mathbb{D}$,
\begin{equation}
D_{i}=\sum_{j\in{\cal N}_{i}\cap{\cal S}}d_{ij}+\sum_{j\in{\cal N}_{i}\cap{\cal S}^{c}}d_{ij}.\label{eq:basic2}
\end{equation}
Hence we calculate
\begin{align*}
\sum_{i\in{\cal S}}r_{i} & =\sum_{i\in{\cal S}}\ \sum_{j\in{\cal N}_{i}\cap{\cal S}}d_{ji}+\sum_{i\in{\cal S}}\sum_{j\in{\cal N}_{i}\cap{\cal S}^{c}}d_{ji}\ {\rm by\ (}\ref{eq:basic1-1})\\
 & =\sum_{i\in{\cal S}}\ \sum_{j\in{\cal N}_{i}\cap{\cal S}}d_{ij}+{\rm In}\left({\cal S}\right)\ \ \ {\rm by\ (\ref{eq:basic1}),\, (\ref{eq:In-Out-3})}\\
 & =\sum_{i\in{\cal S}}\ \sum_{j\in{\cal N}_{i}\cap{\cal S}}d_{ij}+\sum_{i\in{\cal S}}\ \sum_{j\in{\cal N}_{i}\cap{\cal S}^{c}}d_{ij}+{\rm In}\left({\cal S}\right)-{\rm Out}\left({\cal S}\right)\\
 & =\sum_{i\in\mathcal{S}}D_{i}+{\rm In}\left({\cal S}\right)-{\rm Out}\left({\cal S}\right)\ \ \ {\rm by\ (\ref{eq:basic2})}
\end{align*}
Inequalities (\ref{eq:OutIneq}), (\ref{eq:InIneq-1}) follow directly
from the definitions. $\blacksquare$
\end{proof}
\begin{lemma}
\label{lem:levels} Let $\boldsymbol{d}\in\mathbb{D}$. If $K=1$
then $l_{1}=1$. If $K>1$ , then $l_{1}<1$ and $l_{K}>1$.
\end{lemma}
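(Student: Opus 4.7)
The plan is to derive everything from the conservation identity in Lemma \ref{lem:equality} applied to the grand coalition $\mathcal{S}=\mathcal{N}$. Since there are no edges leaving $\mathcal{N}$, we have $\mathrm{In}(\mathcal{N})=\mathrm{Out}(\mathcal{N})=0$, so (\ref{eq:equality}) collapses to the balance $\sum_{i\in\mathcal{N}} r_i=\sum_{i\in\mathcal{N}} D_i$. Dividing each term $r_i$ by $D_i$ and reinserting gives the weighted identity
\[
\sum_{i\in\mathcal{N}} \rho_i D_i \;=\; \sum_{i\in\mathcal{N}} D_i,
\]
which says that the $D_i$-weighted average of the exchange ratios equals $1$. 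This single identity is the engine of both parts of the lemma.

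For the first claim, assume $K=1$, so every $\rho_i$ takes the common value $l_1$. Substituting into the weighted identity yields $l_1\sum_i D_i=\sum_i D_i$, and since $D_i>0$ for every $i$ the sum is strictly positive, forcing $l_1=1$.

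For the second claim, assume $K\geq 2$. By definition $l_1<l_K$, and $l_1\leq\rho_i\leq l_K$ with equality $\rho_i=l_1$ failing on $\mathcal{L}_K$ and equality $\rho_i=l_K$ failing on $\mathcal{L}_1$. Bounding the weighted sum from below gives
\[
\sum_{i\in\mathcal{N}} D_i \;=\; \sum_{i\in\mathcal{N}}\rho_i D_i \;>\; l_1\sum_{i\in\mathcal{N}} D_i,
\]
where strictness uses that $\rho_i>l_1$ for at least one $i$ (any node in $\mathcal{L}_K$) together with $D_i>0$. Dividing by $\sum_i D_i>0$ yields $l_1<1$. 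The symmetric upper bound $\rho_i\leq l_K$ with strict inequality on $\mathcal{L}_1$ gives $\sum_i D_i<l_K\sum_i D_i$, hence $l_K>1$.

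There is no real obstacle here; the only thing to be careful about is justifying strict inequality in the two bounds when $K\geq 2$, which follows immediately because the level sets $\mathcal{L}_1$ and $\mathcal{L}_K$ are nonempty and every $D_i$ is strictly positive. The argument uses only (\ref{eq:equality}) applied at $\mathcal{S}=\mathcal{N}$ and the definition of the level indices, so no additional structural facts about the allocation are required.
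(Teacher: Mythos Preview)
Your proof is correct and follows essentially the same approach as the paper: both apply the conservation identity (\ref{eq:equality}) with ${\cal S}={\cal N}$ to obtain $\sum_i r_i=\sum_i D_i$, then use that the $D_i$-weighted average of the ratios is $1$. The only cosmetic difference is that the paper phrases the $K>1$ case as a contradiction (assume $l_1\geq 1$ and derive $\sum_i r_i>\sum_i D_i$), whereas you argue the strict bound directly; the content is identical.
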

\begin{proof}
If $K=1$, we have $r_{i}=l_{1}D_{i}$ for all $i\in{\cal N}$. From
Lemma \ref{lem:equality} (applied with ${\cal S}\leftarrow{\cal N}$) we then have $l_{1}\sum_{i\in{\cal N}}D_{i}=\sum_{i\in{\cal N}}D_{i}$, hence $l_{1}=1$. Let now $K>1$. If $l_{1}\geq1$, then since $l_{K}>l_{1}$, we have
\begin{equation}
\sum_{i\in{\cal N}}r_{i} = \sum_{k=1}^{K}l_{k}\sum_{i\in{\cal \mathcal{L}}_{k}}D_{i}> \sum_{i\in{\cal N}}D_{i}
\end{equation}
which contradicts Lemma \ref{lem:equality}. Similarly is shown $l_{K}>1$. $\blacksquare$
\end{proof}
\begin{prop}
\label{prop:LexK1}Let $\bar{\boldsymbol{d}}\in\mathbb{D}.$ If $\bar{K}=1$
then $\bar{\boldsymbol{d}}$ is lex-optimal.
% and strongly stable. (\textbf{do we use the stability?})
\end{prop}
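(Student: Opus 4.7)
The plan is to show that the ratio vector induced by $\bar{\boldsymbol{d}}$, which is the all-ones vector by Lemma \ref{lem:levels} (since $\bar K=1$ forces $\bar l_1=1$), is lexicographically at least as large as any other feasible ratio vector. Once that is established, the claim follows directly from the definition of a lex-optimal allocation.

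First I would fix an arbitrary competitor allocation $\boldsymbol{d}\in\mathbb{D}$ with ratio vector $\boldsymbol{\rho}$, and split into two cases according to the number of distinct levels $K$ of $\boldsymbol{d}$. If $K=1$, Lemma \ref{lem:levels} again gives $l_1=1$, so $\boldsymbol{\rho}=\bar{\boldsymbol{\rho}}=(1,\ldots,1)$ and therefore $\bar{\boldsymbol{\rho}}\succeq\boldsymbol{\rho}$ trivially (as equal vectors).

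In the remaining case $K\geq 2$, Lemma \ref{lem:levels} yields $l_1<1$. Consider the sorted vectors $\phi(\bar{\boldsymbol{\rho}})=(1,1,\ldots,1)$ and $\phi(\boldsymbol{\rho})$, whose first coordinate equals $l_1<1$. Thus the first coordinate of $\phi(\bar{\boldsymbol{\rho}})-\phi(\boldsymbol{\rho})$ equals $1-l_1>0$, which is the first non-zero coordinate of that difference. By the definition of lexicographical order, this means $\bar{\boldsymbol{\rho}}\succ\boldsymbol{\rho}$. Combining both cases gives $\bar{\boldsymbol{\rho}}\succeq\boldsymbol{\rho}$ for every $\boldsymbol{\rho}\in\mathbb{P}$, so $\bar{\boldsymbol{d}}$ is lex-optimal.

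There is no real obstacle here; the argument is a one-line consequence of Lemma \ref{lem:levels} once one unpacks the definition of $\succeq$ in terms of the sorted difference vector. The only thing to be careful about is to handle uniqueness up to the ratio vector: different allocations may realize the same $\bar{\boldsymbol{\rho}}$, but since lex-optimality is a property of the ratio vector, any $\bar{\boldsymbol{d}}$ with $\bar K=1$ qualifies.
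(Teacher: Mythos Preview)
Your proof is correct and follows essentially the same approach as the paper: both invoke Lemma~\ref{lem:levels} to pin $\bar{\boldsymbol{\rho}}$ at the all-ones vector, then argue no other feasible ratio vector can lex-dominate it. The only cosmetic difference is that you use the second clause of Lemma~\ref{lem:levels} ($K\geq 2\Rightarrow l_1<1$) directly to compare first coordinates, whereas the paper argues by contradiction via the conservation identity of Lemma~\ref{lem:equality}; since Lemma~\ref{lem:levels} is itself proved from that identity, the two arguments are equivalent.
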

\begin{proof}
From Lemma \ref{lem:levels}, $\bar{l}_{1}=1$ and hence $\bar{r}_{i}=D_{i},\ i\in{\cal N}$. If there is another allocation $\hat{\boldsymbol{d}}$ such $\hat{\boldsymbol{r}}\succ\bar{\boldsymbol{r}}$, then it should hold $\hat{r}_{i}\geq r_{i}=D_{i}\ \forall i\in{\cal N}$ and $\hat{r}_{j}>\bar{r}_{j}=D_{j}$ for at least one $j\in{\cal N}$. But then, it would be $\sum_{i\in{\cal S}}\hat{r}_{i}>\sum_{i\in{\cal S}}D_{i}$, which contradicts (\ref{eq:equality}) (applied with ${\cal S}\leftarrow{\cal N}$). $\blacksquare$

%\textbf{Assume now that the lex-optimal allocation $\boldsymbol{\bar{d}}$ is not strongly stable. Then there must be subset ${\cal S}\subseteq{\cal N}$ and an allocation $\hat{\boldsymbol{d}}$ on ${\cal S}$ such that $\hat{r_{i}}\geq\bar{r}_{i}=D_{i},\ \forall i\in{\cal S}$ and $\hat{r_{j}}>\bar{r}{}_{j}=D_{j}$ for at least one $j\in{\cal S}$. Then, $\sum_{i\in{\cal S}}\hat{r}_{i}>\sum_{i\in{\cal S}}D_{i}$. But since $\hat{\boldsymbol{d}}$ is an allocation on ${\cal S}$, according to (\ref{eq:InOutN1}) and (\ref{eq:equality}) (applied with ${\cal S}\leftarrow{\cal N}$) it should hold $\sum_{i\in{\cal S}}\hat{r}_{i}=\sum_{i\in{\cal S}}D_{i}$, which is a contradiction. Hence $\boldsymbol{\bar{d}}$ is strongly stable.}
\end{proof}

Now we are ready to provide the \uline{proof of Theorem \ref{thm:MainTh0_K1}}: From Lemma \ref{lem:levels} we have that for a feasible allocation $\boldsymbol{d}\in\mathbb{D}$, with $K=1$, it is $l_1=1$. From Proposition \ref{prop:LexK1} we also get that this is a lex-optimal allocation. Moreover, since a lex-optimal allocation $\boldsymbol{d}^{*}$ is also feasible, when $K^{*}=1$, it is also $l_{1}^{*}=1$ from Lemma \ref{lem:levels}. $\blacksquare$

\vspace{1.5mm}

\uline{\textbf{3.2.2 PROOF of Theorem \ref{thm:MainTh0}}}. First, we need the following corollary.

\begin{cor}
\label{cor:GiveToK} If under a lex-optimal allocation $\boldsymbol{\bar{d}}$ it holds $k(\mathcal{\bar{D}}_{i})=K^{*}$ for some $i\in\mathcal{N}$, then ${\cal N}_{i}\subseteq\mathcal{L}_{K}^{*}$.\end{cor}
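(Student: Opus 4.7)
The plan is to observe that this corollary is essentially an immediate consequence of Lemma \ref{lem:neighbor} combined with the definition of $K^*$ as the index of the largest level. I would split $\mathcal{N}_i$ into the two disjoint pieces $\bar{\mathcal{D}}_i$ (neighbors that receive resource from $i$) and $\bar{\mathcal{H}}_i = \mathcal{N}_i \setminus \bar{\mathcal{D}}_i$ (those that don't), and show separately that each piece lies in $\mathcal{L}^*_{K^*}$.

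For nodes in $\bar{\mathcal{D}}_i$, the hypothesis $k(\bar{\mathcal{D}}_i) = K^*$ already puts them in $\mathcal{L}^*_{K^*}$ (the first conclusion of Lemma \ref{lem:neighbor} guarantees they all live in a common level set, and the hypothesis names that set). For a node $j \in \bar{\mathcal{H}}_i$, the second conclusion of Lemma \ref{lem:neighbor} yields the inequality $l^*_{k(j)} \geq l^*_{k(\bar{\mathcal{D}}_i)} = l^*_{K^*}$. Since by definition $l^*_{K^*}$ is the maximum value among the levels $l^*_1 < l^*_2 < \dots < l^*_{K^*}$, this forces $l^*_{k(j)} = l^*_{K^*}$, and hence $j \in \mathcal{L}^*_{K^*}$ as well. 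Taking the union of the two cases gives $\mathcal{N}_i \subseteq \mathcal{L}^*_{K^*}$.

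There is no real obstacle here; the entire argument is a one-line application of Lemma \ref{lem:neighbor} together with maximality of $l^*_{K^*}$. The only thing to be slightly careful about is to explicitly invoke the extremality of $K^*$ (rather than just citing Lemma \ref{lem:neighbor}), since Lemma \ref{lem:neighbor} alone only gives an inequality $\geq$ on levels, and we need equality to conclude membership in the specific set $\mathcal{L}^*_{K^*}$.
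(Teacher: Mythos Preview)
Your proposal is correct and follows essentially the same approach as the paper: both split $\mathcal{N}_i$ into $\bar{\mathcal{D}}_i$ and $\bar{\mathcal{H}}_i$, use Lemma~\ref{lem:neighbor} to get the inequality $l^*_{k(j)} \geq l^*_{K^*}$ for $j\in\bar{\mathcal{H}}_i$, and then invoke maximality of $l^*_{K^*}$ to force equality. Your write-up is in fact slightly more explicit than the paper's own proof.
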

\begin{proof}
Since under a lex-optimal allocation there can be no node with level higher that $l_{K^{*}}$, Lemma \ref{lem:neighbor} is applied with the equality, i.e., $\forall\,j\in{\cal \bar{H}}_{i}$, it holds $l_{k(j)}^{*}=l_{k(\mathcal{\bar{D}}_{i})}^{*}=K^{*}$. Since the same also holds by definition for all nodes in $\mathcal{\bar{D}}_{i}$, the results follows. $\blacksquare$
\end{proof}

We introduce some additional definitions and results. Consider a lex-optimal allocation $\boldsymbol{\bar{d}}$ and let ${\cal \bar{Z}}$ be the subset of nodes in ${\cal L}_{K}^{*}$, with
the property: $i\in{\cal \bar{Z}}$ iff $\mathcal{L}_{k(\mathcal{\bar{D}}_{i})}^{*}=K^{*}$. Hence any node $i$ in ${\cal \bar{Z}}\subset{\cal L}_{K}^{*}$ gives resource only to nodes in ${\cal L}_{K}^{*}$. The next lemma shows that the set ${\cal \bar{Z}}$ is empty if $K^{*}\geq 2$.
\begin{lemma}
\label{lem:NoGive}Let $\bar{\boldsymbol{d}}$ be a lex-optimal allocation. If $K^{*}\geq2$, then ${\cal \bar{Z}}=\emptyset$, i.e., the nodes in ${\cal L}_{K}^{*}$ give all their resource to nodes outside ${\cal L}_{K}^{*}.$ Hence\footnote{\small{To facilitate the reader, we repeat the notation: $\overline{{\rm Out}}$ and ${\cal \bar{Z}}$ are annotated with the bar symbol since they depend on $\bar{\boldsymbol{d}}$, while the optimal level sets and the received resources are unique and hence annotated with the star symbol. }},
\begin{equation}
\overline{{\rm Out}}\left({\cal L}_{K}^{*}\right)=\sum_{i\in{\cal L}_{K}^{*}}D_{i}.\label{eq:OutZero}
\end{equation}
\end{lemma}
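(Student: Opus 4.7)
My plan is to argue by contradiction: assume $\bar{\mathcal{Z}}\neq\emptyset$ and derive a contradiction from the flow balance on $\bar{\mathcal{Z}}$. The main point will be to establish that $\overline{\mathrm{In}}(\bar{\mathcal{Z}})=0$, so that Lemma~\ref{lem:equality} forces an impossible negative value for $\overline{\mathrm{Out}}(\bar{\mathcal{Z}})$.

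To see $\overline{\mathrm{In}}(\bar{\mathcal{Z}})=0$, I would reason as follows. By Corollary~\ref{cor:GiveToK}, each $i\in\bar{\mathcal{Z}}$ has $\mathcal{N}_{i}\subseteq\mathcal{L}_{K}^{*}$, so any sender $j\notin\bar{\mathcal{Z}}$ with $\bar{d}_{j,i}>0$ must satisfy $j\in\mathcal{L}_{K}^{*}\setminus\bar{\mathcal{Z}}$. But such a $j$, being in $\mathcal{L}_{K}^{*}$ yet outside $\bar{\mathcal{Z}}$, has $k(\bar{\mathcal{D}}_{j})\neq K^{*}$, and by Lemma~\ref{lem:neighbor} every element of $\bar{\mathcal{D}}_{j}$ lies at one single level, which must therefore be strictly below $K^{*}$. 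Consequently $i\in\mathcal{L}_{K}^{*}$ cannot belong to $\bar{\mathcal{D}}_{j}$, so $\bar{d}_{j,i}=0$. Summing these zero contributions yields $\overline{\mathrm{In}}(\bar{\mathcal{Z}})=0$.

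Next I would apply Lemma~\ref{lem:equality} with $\mathcal{S}=\bar{\mathcal{Z}}$. Substituting $r_{i}=l_{K}^{*}D_{i}$ for $i\in\bar{\mathcal{Z}}\subseteq\mathcal{L}_{K}^{*}$ and $\overline{\mathrm{In}}(\bar{\mathcal{Z}})=0$, the balance identity becomes $l_{K}^{*}\sum_{i\in\bar{\mathcal{Z}}}D_{i}+\overline{\mathrm{Out}}(\bar{\mathcal{Z}})=\sum_{i\in\bar{\mathcal{Z}}}D_{i}$, so $\overline{\mathrm{Out}}(\bar{\mathcal{Z}})=(1-l_{K}^{*})\sum_{i\in\bar{\mathcal{Z}}}D_{i}$. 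By Lemma~\ref{lem:levels}, the assumption $K^{*}\geq 2$ yields $l_{K}^{*}>1$, and $\sum_{i\in\bar{\mathcal{Z}}}D_{i}>0$ since $D_{i}>0$ and $\bar{\mathcal{Z}}\neq\emptyset$; thus the right-hand side is strictly negative, contradicting $\overline{\mathrm{Out}}(\bar{\mathcal{Z}})\geq 0$. Hence $\bar{\mathcal{Z}}=\emptyset$.

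Equation~(\ref{eq:OutZero}) would then follow immediately: with $\bar{\mathcal{Z}}=\emptyset$, every $i\in\mathcal{L}_{K}^{*}$ satisfies $\bar{\mathcal{D}}_{i}\cap\mathcal{L}_{K}^{*}=\emptyset$, so $\sum_{j\in\mathcal{N}_{i}\setminus\mathcal{L}_{K}^{*}}\bar{d}_{ij}=D_{i}$, and summing over $i\in\mathcal{L}_{K}^{*}$ gives $\overline{\mathrm{Out}}(\mathcal{L}_{K}^{*})=\sum_{i\in\mathcal{L}_{K}^{*}}D_{i}$. The only delicate step I anticipate is the verification that $\overline{\mathrm{In}}(\bar{\mathcal{Z}})=0$, which requires combining Corollary~\ref{cor:GiveToK} (to pin potential senders inside $\mathcal{L}_{K}^{*}$) with Lemma~\ref{lem:neighbor} (to forbid the remaining candidates from sending anything up to level $K^{*}$); once this is in hand, the balance equation closes the argument in a single line.
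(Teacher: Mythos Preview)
Your proposal is correct and follows essentially the same approach as the paper's proof: both establish $\overline{\mathrm{In}}(\bar{\mathcal{Z}})=0$ via Corollary~\ref{cor:GiveToK} and the single-level property of $\bar{\mathcal{D}}_j$ from Lemma~\ref{lem:neighbor}, then combine this with Lemma~\ref{lem:equality} and $l_K^*>1$ from Lemma~\ref{lem:levels} to reach a contradiction. The only cosmetic difference is that the paper phrases the contradiction as $l_K^*\leq 1$ (from $\sum r_i^*\leq\sum D_i$), whereas you equivalently compute $\overline{\mathrm{Out}}(\bar{\mathcal{Z}})<0$; your version is arguably more explicit in invoking Lemma~\ref{lem:neighbor} where the paper simply says ``by definition''.
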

\begin{proof}
According to Corollary \ref{cor:GiveToK} all neighbors of any node $i\in{\cal \bar{Z}}$ are in ${\cal L}_{K}^{*}$. It follows that a node $i$ in ${\cal \bar{Z}}$ gets resource only from nodes in ${\cal \bar{Z}}$: if node $i$ was getting resource from a neighbor node $j\notin{\cal \bar{Z}}$, then since as the previous sentence says $j\in{\cal L}_{K}^{*}$, node $j$ should belong to ${\cal \bar{Z}}$ by definition; which is a contradiction. This implies that ${\rm \overline{In}}\left({\cal \bar{Z}}\right)=0$ and hence according to Lemma \ref{lem:equality}:
\[
\sum_{i\in{\cal \bar{Z}}}r_{i}^{*}\leq\sum_{i\in{\cal \bar{Z}}}D_{i}. \label{eq:xx}
\]
If ${\cal \bar{Z}}\neq\emptyset$, then since $r_{i}^{*}=l_{K}^{*}D_{i}$, $\forall i\in{\cal \bar{Z}},$ we conclude from (\ref{eq:xx}) that $l_{K}^{*}\leq1$, which contradicts Lemma \ref{lem:levels}. Equality (\ref{eq:OutZero}) follows immediately: since ${\cal \bar{Z}}=\emptyset,$ the nodes in ${\cal L}_{K}^{*}$ give all their resource to nodes outside ${\cal L}_{K}^{*}$ and hence (\ref{eq:OutIneq}) applies with equality. $\blacksquare$
\end{proof}
Let $\bar{{\cal G}}$ be the set of nodes from which nodes in ${\cal L}_{K}^{*}$ get resource, i.e., $\bar{{\cal G}}=(i\in\mathcal{N}: k(\bar{\mathcal{D}}_{i})=K^{*})$. It holds:
\begin{lemma}
\label{lem:PropOfG}
Let $\bar{\boldsymbol{d}}$ be a lex-optimal allocation and $K^{*}\geq2$. It holds \textup{${\cal L}_{K}^{*}\cap\bar{{\cal G}}=\emptyset$. Moreover,} the set $\bar{{\cal G}}$ is nonempty, independent, it holds ${\cal N}\left(\bar{{\cal G}}\right)={\cal L}_{K}^{*}$, and
\begin{equation}
{\rm \overline{In}}\left({\cal \bar{G}}\cup{\cal L}_{K}^{*}\right)=0.\label{eq:separation}
\end{equation}
\end{lemma}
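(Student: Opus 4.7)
The plan is to establish the four claims in order, each leveraging Lemma~\ref{lem:neighbor} and Lemma~\ref{lem:NoGive} together with the definitions of $\bar{\mathcal{G}}$ and $\mathcal{L}_K^*$.

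First, for $\mathcal{L}_K^* \cap \bar{\mathcal{G}} = \emptyset$, I would simply observe that membership in $\bar{\mathcal{G}}$ requires a node to allocate resource to some node in $\mathcal{L}_K^*$, whereas Lemma~\ref{lem:NoGive} states that every node in $\mathcal{L}_K^*$ sends all its resource outside of $\mathcal{L}_K^*$. The two are therefore disjoint.

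Next, I would show $\bar{\mathcal{G}} \neq \emptyset$. By Lemma~\ref{lem:levels}, $K^* \geq 2$ forces $l_K^* > 1$, hence each node $v \in \mathcal{L}_K^*$ receives $r_v^* = l_K^* D_v > 0$, so some neighbor of $v$ must allocate positive resource to $v$, and that neighbor lies in $\bar{\mathcal{G}}$ by definition. The heart of the argument is then the independence of $\bar{\mathcal{G}}$. Given $i \in \bar{\mathcal{G}}$, by definition $k(\bar{\mathcal{D}}_i) = K^*$; applying Lemma~\ref{lem:neighbor}, every $j \in \bar{\mathcal{D}}_i$ satisfies $k(j) = K^*$, and every $j \in \bar{\mathcal{H}}_i$ satisfies $l_{k(j)}^* \geq l_{K^*}^*$, which (since $K^*$ is the top level) forces $k(j) = K^*$ as well. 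Thus \emph{every} neighbor of a node of $\bar{\mathcal{G}}$ belongs to $\mathcal{L}_K^*$. Combined with the disjointness established in the previous step, no two nodes of $\bar{\mathcal{G}}$ can be adjacent, proving independence.

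The same observation gives $\mathcal{N}(\bar{\mathcal{G}}) \subseteq \mathcal{L}_K^*$. Conversely, as argued above, every $v \in \mathcal{L}_K^*$ has a neighbor $u \in \bar{\mathcal{G}}$ that sends resource to it; since $v \notin \bar{\mathcal{G}}$, we have $v \in \mathcal{N}(\bar{\mathcal{G}})$, which yields the reverse inclusion. Finally, for equation~(\ref{eq:separation}), I would decompose the incoming flow. Any resource entering $\bar{\mathcal{G}} \cup \mathcal{L}_K^*$ from outside comes either from a neighbor of a node in $\bar{\mathcal{G}}$ lying outside $\mathcal{L}_K^*$, or from a neighbor of a node in $\mathcal{L}_K^*$ lying outside $\bar{\mathcal{G}}$. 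The first case is empty because all neighbors of $\bar{\mathcal{G}}$ lie in $\mathcal{L}_K^*$. In the second case, any node outside $\mathcal{L}_K^*$ that allocates resource to $\mathcal{L}_K^*$ belongs to $\bar{\mathcal{G}}$ by definition, so it is inside the union. Hence $\overline{\mathrm{In}}(\bar{\mathcal{G}} \cup \mathcal{L}_K^*) = 0$.

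The main conceptual obstacle is the independence argument: one has to recognize that Lemma~\ref{lem:neighbor} forces a node in $\bar{\mathcal{G}}$ to have \emph{all} (not merely some) of its neighbors in $\mathcal{L}_K^*$, using the fact that $K^*$ is the maximum level so the weak inequality in Lemma~\ref{lem:neighbor} tightens. Once that is established, the remaining conclusions fall out immediately from the definitions.
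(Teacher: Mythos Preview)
Your proof is correct and follows essentially the same approach as the paper. The only cosmetic difference is that the paper packages the observation ``every neighbor of a node in $\bar{\mathcal G}$ lies in $\mathcal{L}_K^*$'' as Corollary~\ref{cor:GiveToK} and then cites it, whereas you re-derive it on the spot from Lemma~\ref{lem:neighbor}; likewise your pointwise argument for $\bar{\mathcal G}\neq\emptyset$ (each $v\in\mathcal{L}_K^*$ has $r_v^*>0$) is a minor variant of the paper's aggregate inequality $\sum_{i\in\mathcal{L}_K^*} r_i^*\le \sum_{i\in\bar{\mathcal G}} D_i$.
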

\begin{proof}
According to Lemma \ref{lem:NoGive}, ${\cal L}_{K}^{*}\cap\bar{{\cal G}}={\cal \bar{Z}}=\emptyset$. Also, according to (\ref{eq:equality}), (\ref{eq:OutZero}), and the definition of $\bar{{\cal G}}$, it is
\[
\sum_{i\in{\cal L}_{K}^{*}}r_{i}^{*}=\overline{{\rm In}}\left({\cal L}_{K}^{*}\right)\leq\sum_{i\in{\cal \bar{G}}}D_{i}.
\]
Since $\sum_{i\in{\cal L}_{K}^{*}}r_{i}^{*}=l_{K}^{*}\sum_{i\in{\cal L}_{K}^{*}}D_{i}>0$, we get ${\cal \bar{G}}\neq\emptyset$. According to Corollary \ref{cor:GiveToK} and the definition of $\bar{{\cal G}},$ it holds ${\cal N}_{i}\subseteq{\cal L}_{K}^{*},\ \forall\,i\in\bar{{\cal G}}$. Since ${\cal L}_{K}^{*}\cap\bar{{\cal G}}=\emptyset$, $\bar{{\cal G}}$ is independent.

To show that ${\cal N}\left(\bar{{\cal G}}\right)={\cal L}_{K}^{*}$ we argue as follows. According to Corollary \ref{cor:GiveToK}, it is ${\cal N}\left(\bar{{\cal G}}\right)\subseteq{\cal L}_{K}^{*}$. Also, if ${\cal L}_{K}^{*}-{\cal N}\left(\bar{{\cal G}}\right)\neq\emptyset$, there would be a node $i\in{\cal L}_{K}^{*}$ not connected to any of the nodes in $\bar{{\cal G}}$; but since by definition of $\bar{{\cal G}}$ node $i$ gets resource only from nodes in $\bar{{\cal G},}$ we would then have $r_{i}^{*}=l_{K}^{*}D_{i}=0$, a contradiction since $l_{K}^{*}>1$ and $D_{i}>0$.

Notice next that ${\cal N}\left(\bar{{\cal G}}\right)={\cal L}_{K}^{*}$ and the set $\bar{{\cal G}}$ is independent, all neighbors of nodes in ${\cal \bar{G}}$ are in ${\cal L}_{K}^{*}$, and hence nodes in ${\cal \bar{G}}$ can get resource only from nodes in ${\cal L}_{K}^{*}$. Since by definition nodes ${\cal L}_{K}^{*}$ get resource only from ${\cal \bar{G}}$, (\ref{eq:separation}) holds. $\blacksquare$
\end{proof}
\begin{lemma}
\label{lem:ProdLess}
Let $\bar{\boldsymbol{d}}$ be a lex-optimal allocation and $K^{*}\geq2$. Let $k_{0}$ be the index of the smallest level in ${\cal \bar{G}}$$.$ Then $l_{K}^{*}l_{k_{0}}^{*}\leq1$. Strict inequality holds if
\begin{enumerate}
\item \textup{either ${\rm \overline{Out}}\left({\cal \bar{G}}\cup{\cal L}_{K}^{*}\right)>0,$}
\item \textup{or ${\rm \overline{Out}}\left({\cal \bar{G}}\cup{\cal L}_{K}^{*}\right)=0$
and ${\cal \bar{G}}-{\cal L}_{k_{0}}^{*}\neq\emptyset.$}
\end{enumerate}
If ${\rm \overline{Ou}t}\left({\cal \bar{G}}\cup{\cal L}_{K}^{*}\right)=0$ and ${\cal \bar{G}}_{K}-{\cal L}_{k_{0}}^{*}=\emptyset$, then $l_{K}^{*}l_{k_{0}}^{*}=1$.
\end{lemma}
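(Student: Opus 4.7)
The plan is to apply the conservation identity~(\ref{eq:equality}) to the set $\mathcal{S} = \bar{\mathcal{G}} \cup \mathcal{L}_{K}^{*}$, using what the preceding lemmas already tell us about the flows in and out of this set. By Lemma~\ref{lem:PropOfG} we have $\overline{\operatorname{In}}(\mathcal{S}) = 0$, and by Lemma~\ref{lem:NoGive} the nodes of $\mathcal{L}_{K}^{*}$ give all their resource outside of $\mathcal{L}_{K}^{*}$ (either into $\bar{\mathcal{G}}$ or strictly outside $\mathcal{S}$), so $\overline{\operatorname{Out}}(\mathcal{L}_{K}^{*}) = \sum_{i\in\mathcal{L}_{K}^{*}}D_{i}$.

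The first key identity I would derive is a ``flux'' equation that balances the resource crossing the cut between $\mathcal{L}_{K}^{*}$ and $\bar{\mathcal{G}}$. On the one hand, by Corollary~\ref{cor:GiveToK} each node $i\in\bar{\mathcal{G}}$ sends its entire endowment $D_{i}$ into $\mathcal{L}_{K}^{*}$, and this is the only source of resource for $\mathcal{L}_{K}^{*}$; so the aggregate received by $\mathcal{L}_{K}^{*}$ satisfies
\[
l_{K}^{*}\sum_{i\in\mathcal{L}_{K}^{*}}D_{i} \;=\; \sum_{i\in\mathcal{L}_{K}^{*}} r_{i}^{*} \;=\; \sum_{i\in\bar{\mathcal{G}}} D_{i}.
\]
On the other hand, the resource the nodes of $\mathcal{L}_{K}^{*}$ hand out is split between $\bar{\mathcal{G}}$ (the only recipients inside $\mathcal{S}$) and the outside of $\mathcal{S}$:
\[
\sum_{i\in\bar{\mathcal{G}}} r_{i}^{*} \;+\; \overline{\operatorname{Out}}\bigl(\bar{\mathcal{G}}\cup\mathcal{L}_{K}^{*}\bigr) \;=\; \sum_{i\in\mathcal{L}_{K}^{*}} D_{i}.
\]

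Next I would bound $\sum_{i\in\bar{\mathcal{G}}} r_{i}^{*}$ from below using the definition of $k_{0}$. Since every node $i\in\bar{\mathcal{G}}$ has level at least $k_{0}$, we have $r_{i}^{*} = l_{k(i)}^{*}D_{i} \geq l_{k_{0}}^{*} D_{i}$, hence
\[
\sum_{i\in\bar{\mathcal{G}}} r_{i}^{*} \;\geq\; l_{k_{0}}^{*}\sum_{i\in\bar{\mathcal{G}}} D_{i} \;=\; l_{k_{0}}^{*}\, l_{K}^{*}\sum_{i\in\mathcal{L}_{K}^{*}} D_{i},
\]
with equality iff $\bar{\mathcal{G}}\subseteq\mathcal{L}_{k_{0}}^{*}$, i.e.\ iff $\bar{\mathcal{G}}-\mathcal{L}_{k_{0}}^{*}=\emptyset$. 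Substituting this inequality into the second flow equation yields
\[
\bigl(l_{K}^{*}\,l_{k_{0}}^{*} - 1\bigr)\sum_{i\in\mathcal{L}_{K}^{*}} D_{i} \;+\; \overline{\operatorname{Out}}\bigl(\bar{\mathcal{G}}\cup\mathcal{L}_{K}^{*}\bigr) \;\leq\; 0,
\]
and since $\sum_{i\in\mathcal{L}_{K}^{*}} D_{i}>0$ this gives $l_{K}^{*}l_{k_{0}}^{*}\leq 1$ directly.

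Finally, the equality/strict-inequality dichotomy just reads off the two places where slack can appear. Strict inequality must hold whenever $\overline{\operatorname{Out}}(\bar{\mathcal{G}}\cup\mathcal{L}_{K}^{*})>0$ (slack in the left-hand term), or whenever the level bound is not tight, i.e.\ when some node of $\bar{\mathcal{G}}$ has level strictly above $k_{0}$, which is exactly the condition $\bar{\mathcal{G}}-\mathcal{L}_{k_{0}}^{*}\neq\emptyset$. Conversely, if both slacks vanish ($\overline{\operatorname{Out}}(\bar{\mathcal{G}}\cup\mathcal{L}_{K}^{*})=0$ and $\bar{\mathcal{G}}\subseteq\mathcal{L}_{k_{0}}^{*}$), the chain of inequalities collapses to equality and we obtain $l_{K}^{*}l_{k_{0}}^{*}=1$. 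The only delicate step is bookkeeping the flow across the cut carefully; once the two identities above are in hand, the inequality and its equality cases are immediate, so I do not expect a genuine obstacle beyond neat accounting.
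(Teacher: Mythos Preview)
Your proof is correct and follows essentially the same approach as the paper: both derive the two identities $l_{K}^{*}\sum_{i\in\mathcal{L}_{K}^{*}}D_{i}=\sum_{i\in\bar{\mathcal{G}}}D_{i}$ and $\sum_{i\in\bar{\mathcal{G}}}r_{i}^{*}+\overline{\operatorname{Out}}(\bar{\mathcal{G}}\cup\mathcal{L}_{K}^{*})=\sum_{i\in\mathcal{L}_{K}^{*}}D_{i}$, and then compare $\sum_{i\in\bar{\mathcal{G}}}r_{i}^{*}$ with $l_{k_{0}}^{*}\sum_{i\in\bar{\mathcal{G}}}D_{i}$. The only cosmetic difference is that the paper expands $\sum_{i\in\bar{\mathcal{G}}}r_{i}^{*}$ level by level, multiplies the two identities, and argues by contradiction, whereas you bound it directly and substitute; your version is a bit more streamlined but the content is the same.
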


\begin{proof}
Since by Lemma \ref{lem:PropOfG} ${\cal \bar{G}}$ is independent, and ${\cal N}\left(\bar{{\cal G}}\right)\subseteq{\cal L}_{K}^{*}$, the nodes in ${\cal \bar{G}}$ can give resource only to nodes in ${\cal L}_{K}^{*}$. Hence only nodes in ${\cal L}_{K}^{*}$ give resource to nodes in $\left({\cal \bar{G}}\cup{\cal L}_{K}^{*}\right)^{c}$, hence:
\begin{align*}
\overline{{\rm Out}}\left({\cal L}_{K}^{*}\right) & ={\rm \overline{In}}\left({\cal \bar{G}}\right)+{\rm \overline{Out}}\left({\cal \bar{G}}\cup{\cal L}_{K}^{*}\right).
\end{align*}
Taking into account (\ref{eq:OutZero}) we conclude:
\begin{equation}
\overline{{\rm In}}\left({\cal \bar{G}}\right)=\sum_{i\in{\cal L}_{K}^{*}}D_{i}-\overline{{\rm Out}}\left(\bar{{\cal G}}\cup{\cal L}_{K}^{*}\right).\label{eq:InIneq}
\end{equation}
Since nodes in ${\cal \bar{G}}$ constitute an independent set it follows:
\begin{equation}
{\rm \overline{Out}}\left({\cal \bar{G}}\right)=\sum_{i\in\bar{{\cal G}}}D_{i}.\label{eq:OutGk}
\end{equation}
From Lemma \ref{lem:equality} applied to set ${\cal \bar{G}}$, and using (\ref{eq:InIneq}-\ref{eq:OutGk}) we get
\begin{align}
&\sum_{i\in{\cal \bar{G}}}r_{i}=\overline{{\rm In}}\left({\cal \bar{G}}\right)=\sum_{i\in{\cal L}_{K}^{*}}D_{i}-\overline{{\rm Out}}\left(\bar{{\cal G}}\cup{\cal L}_{K}^{*}\right),\,\text{or} \nonumber \\
&\sum_{k=k_{0}}^{K^{*}-1}l_{k}^{*}\sum_{i\in{\cal L}_{k}^{*}\cap{\cal \bar{G}}}D_{i}=\sum_{i\in{\cal L}_{K}^{*}}D_{i}-\overline{{\rm Out}}\left(\bar{{\cal G}}\cup{\cal L}_{K}^{*}\right).\label{eq:rel1}
\end{align}
Next, since nodes in ${\cal L}_{K}^{*}$ get resource only from nodes in $\bar{{\cal G}}$ (and all of it) we have
\begin{equation}
l_{K}^{*}\sum_{i\in{\cal L}_{K}^{*}}D_{i}=\sum_{i\in\bar{{\cal G}}}D_{i}\,.\label{eq:rel2}
\end{equation}
Multiplying (\ref{eq:rel1}) and (\ref{eq:rel2}), and rearranging terms:
\begin{equation}
\sum_{k=k_{0}}^{K^{*}-1}l_{K}^{*}l_{k}^{*}\sum_{i\in{\cal L}_{k}^{*}\cap{\cal \bar{G}}}D_{i}=\sum_{i\in\bar{{\cal G}}}D_{i}-\frac{\sum_{i\in\bar{{\cal G}}}D_{i}}{\sum_{i\in{\cal L}_{K}^{*}}D_{i}}\overline{{\rm Out}}\left(\bar{{\cal G}}\cup{\cal L}_{K}^{*}\right).\label{eq:rel3}
\end{equation}
Eq. (\ref{eq:rel3}) implies that $l_{K}^{*}l_{k_{0}}^{*}\leq1$: if $l_{K}^{*}l_{k_{0}}^{*}>1$ then, because it will also hold $l_{K}^{*}l_{k}^{*}>1,\ \forall k\geq k_{0}$, (\ref{eq:rel3}) would not hold.

Now, if $\overline{{\rm Out}}\left(\bar{{\cal G}}\cup{\cal L}_{K}^{*}\right)>0$ then from (\ref{eq:rel3}) we have:
\[
\sum_{k=k_{0}}^{K^{*}-1}l_{K}^{*}l_{k}^{*}\sum_{i\in{\cal L}_{k}^{*}\cap{\cal \bar{G}}}D_{i}<\sum_{i\in\bar{{\cal G}}}D_{i}
\]
and arguing as above we see that necessarily $l_{K}^{*}l_{k_{0}}^{*}<1$. If $\overline{{\rm Out}}\left(\bar{{\cal G}}\cup{\cal L}_{K}^{*}\right)=0$ and ${\cal \bar{G}}-{\cal L}_{k_{0}}^{*}\neq\emptyset$ then again $l_{K}^{*}l_{k_{0}}^{*}<1$. To see this, notice that if $l_{K}^{*}l_{k_{0}}^{*}\geq1$ and ${\cal \bar{G}}-{\cal L}_{k_{0}}^{*}\neq\emptyset$ then it would hold:
\begin{equation}
l_{K}^{*}l_{k}^{*}\sum_{i\in{\cal L}_{k}^{*}\cap{\cal \bar{G}}}D_{i}\geq\sum_{i\in{\cal L}_{k}^{*}\cap{\cal \bar{G}}}D_{i}\ \forall\, k\geq k_{0}.\label{eq:HelpIneq}
\end{equation}
Also, since ${\cal \bar{G}}-{\cal L}_{k_{0}}^{*}\neq\emptyset$, for some $k>k_{0}$ there must be a nonempty set ${\cal L}_{k}^{*}\cap{\cal \bar{G}}$ which implies that the inequality is strict for some $k>k_{0}$. Adding inequalities (\ref{eq:HelpIneq}) we would then get,
\[
\sum_{k=k_{0}}^{K^{*}-1}l_{K}^{*}l_{k}^{*}\sum_{i\in{\cal L}_{k}^{*}\cap{\cal \bar{G}}}D_{i}>\sum_{i\in\bar{{\cal G}}}D_{i},
\]
which contradicts (\ref{eq:rel3}).

%\[
%l_{K}^{*}l_{k_{0}}^{*}\sum_{i\in\bar{{\cal G}}}D_{i}=\sum_{i\in\bar{{\cal G}}}D_{i}\,,
%\]
Assume finally that $\overline{{\rm Out}}\left(\bar{{\cal G}}\cup{\cal L}_{K}^{*}\right)=0$ and ${\cal \bar{G}}-{\cal L}_{k_{0}}^{*}=\emptyset$. From (\ref{eq:rel3})
we have $l_{K}^{*}l_{k_{0}}^{*}\sum_{i\in\bar{{\cal G}}}D_{i}=\sum_{i\in\bar{{\cal G}}}D_{i}$, and since $\bar{{\cal G}}\neq\emptyset$ implies $\sum_{i\in\bar{{\cal G}}}D_{i}>0$, we get $l_{K}^{*}l_{k_{0}}^{*}=1$. $\blacksquare$
\end{proof}
\begin{lemma}
\label{lem:Independent}If $K^{*}\geq2$, then the set ${\cal L}_{1}^{*}$
is independent.
\end{lemma}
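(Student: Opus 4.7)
My plan is to argue by contradiction, in the same spirit as the proof that $\bar{\mathcal{G}}$ is independent in Lemma~\ref{lem:PropOfG}. Suppose $\mathcal{L}_{1}^{*}$ is not independent, so that there exist nodes $i,j\in\mathcal{L}_{1}^{*}$ with $(i,j)\in\mathcal{E}$. The core idea is to identify a nonempty subset of $\mathcal{L}_{1}^{*}$ that is closed under all resource transfers of a lex-optimal allocation $\bar{\boldsymbol{d}}$, and then use flow conservation (Lemma~\ref{lem:equality}) together with Lemma~\ref{lem:levels} (which gives $l_{1}^{*}<1$ when $K^{*}\geq 2$) to force $l_{1}^{*}\geq 1$, a contradiction.

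Concretely, I would define
\[
\mathcal{C}=\{k\in\mathcal{L}_{1}^{*}:\ \mathcal{N}_{k}\cap\mathcal{L}_{1}^{*}\neq\emptyset\},
\]
so that $\{i,j\}\subseteq\mathcal{C}$ and hence $\mathcal{C}\neq\emptyset$. The crux is to show that every $k\in\mathcal{C}$ transfers all of its resource to nodes again lying in $\mathcal{C}$. Fix any $k\in\mathcal{C}$ and pick a witness $k'\in\mathcal{L}_{1}^{*}\cap\mathcal{N}_{k}$. If $\bar{d}_{kk'}>0$, then Lemma~\ref{lem:neighbor} gives $l_{k(\bar{\mathcal{D}}_{k})}^{*}=l_{k(k')}^{*}=l_{1}^{*}$; if instead $\bar{d}_{kk'}=0$, then $k'\in\bar{\mathcal{H}}_{k}$ and the same lemma gives $l_{1}^{*}=l_{k(k')}^{*}\geq l_{k(\bar{\mathcal{D}}_{k})}^{*}$, so by minimality of $l_{1}^{*}$ again $l_{k(\bar{\mathcal{D}}_{k})}^{*}=l_{1}^{*}$. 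In either case $\bar{\mathcal{D}}_{k}\subseteq\mathcal{L}_{1}^{*}$, and any such recipient $m\in\bar{\mathcal{D}}_{k}$ is by construction a neighbor of $k\in\mathcal{L}_{1}^{*}$, so $m\in\mathcal{C}$. This yields $\overline{\mathrm{Out}}(\mathcal{C})=0$.

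To close, I would apply Lemma~\ref{lem:equality} to $\mathcal{C}$:
\[
\sum_{k\in\mathcal{C}}r_{k}^{*}+\overline{\mathrm{Out}}(\mathcal{C})=\sum_{k\in\mathcal{C}}D_{k}+\overline{\mathrm{In}}(\mathcal{C}).
\]
Since $\overline{\mathrm{Out}}(\mathcal{C})=0$, $\overline{\mathrm{In}}(\mathcal{C})\geq 0$, and every node in $\mathcal{C}$ has level $l_{1}^{*}$, we have $\sum_{k\in\mathcal{C}}r_{k}^{*}=l_{1}^{*}\sum_{k\in\mathcal{C}}D_{k}\geq\sum_{k\in\mathcal{C}}D_{k}$, which together with $\sum_{k\in\mathcal{C}}D_{k}>0$ forces $l_{1}^{*}\geq 1$. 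This contradicts Lemma~\ref{lem:levels}, so no such edge $(i,j)$ exists and $\mathcal{L}_{1}^{*}$ is independent.

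The main obstacle I anticipate is the closure step for $\mathcal{C}$: one must rule out every possible ``leak'' of resource from $\mathcal{C}$, both to higher level sets (blocked because $l_{1}^{*}$ is minimal, invoked via Lemma~\ref{lem:neighbor}) and to isolated nodes of $\mathcal{L}_{1}^{*}$ in the induced subgraph (blocked automatically, because any recipient inherits an internal neighbor, namely the giver, and is therefore in $\mathcal{C}$). Handling the $\bar{d}_{kk'}>0$ and $\bar{d}_{kk'}=0$ cases uniformly via Lemma~\ref{lem:neighbor} is the subtle part; once this is in place, the flow-conservation contradiction is immediate.
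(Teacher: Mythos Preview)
Your proof is correct and follows essentially the same approach as the paper: assume an internal edge in $\mathcal{L}_{1}^{*}$, isolate a nonempty subset $\mathcal{C}\subseteq\mathcal{L}_{1}^{*}$ that is closed under the lex-optimal allocation (so $\overline{\mathrm{Out}}(\mathcal{C})=0$), and then derive $l_{1}^{*}\geq 1$ from Lemma~\ref{lem:equality}, contradicting Lemma~\ref{lem:levels}. The only cosmetic difference is the choice of $\mathcal{C}$: the paper takes the maximal connected subset of $\mathcal{L}_{1}^{*}$ containing the offending edge, whereas you take $\{k\in\mathcal{L}_{1}^{*}:\mathcal{N}_{k}\cap\mathcal{L}_{1}^{*}\neq\emptyset\}$; both work, and your explicit case split on $\bar{d}_{kk'}$ makes the appeal to Lemma~\ref{lem:neighbor} slightly more transparent than the paper's terse invocation.
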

\begin{proof}
Assume that for some pair $i,j\in{\cal L}_{1}^{*}$, it is $(i,j)\in{\cal E}$. Consider the largest set ${\cal C}\subseteq{\cal L}_{1}^{*}$ such that a) ${\cal C}$ contains both $i$ and $j$, b) the induced subgraph of ${\cal C}$ is connected. Therefore, each node in ${\cal C}$ has a node in ${\cal C}$, and hence a node in ${\cal L}_{1}^{*}$ as neighbor. By Lemma \ref{lem:neighbor} we have that under any lex-optimal allocation $\boldsymbol{d},$ it holds $\mathcal{L}_{k(\mathcal{D}_{i})}^{*}=1$ for all $i\in{\cal C}.$ That is, all nodes in ${\cal C}$ give resource only to other nodes in ${\cal C}$, hence, ${\rm Out}({\cal C})=0$. It follows from (\ref{eq:equality}) that $\sum_{i\in{\cal C}}r_{i}^{*}\geq\sum_{i\in{\cal C}}D_{i}$.
%\[ ----- 21/11
%\sum_{i\in{\cal C}}r_{i}^{*}\geq\sum_{i\in{\cal C}}D_{i}.
%\]
But we also have
\begin{eqnarray*}
\sum_{i\in{\cal C}}r_{i}^{*} & = & l_{1}^{*}\sum_{i\in{\cal C}}D_{i}< \sum_{i\in{\cal C}}D_{i},
\end{eqnarray*}
since $l_{1}^{*}<1$ by Lemma \ref{lem:levels}, which is a contradiction. $\blacksquare$
\end{proof}
\begin{lemma}
\textup{\label{lem:AllInG}}Let $\bar{\boldsymbol{d}}$ be a lex-optimal allocation and $K^{*}\geq2$.\textup{ It holds ${\cal L}_{1}^{*}\subseteq\bar{{\cal G}}$. }
\end{lemma}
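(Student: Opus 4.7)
The plan is to show $\bar{\mathcal{H}} \subseteq \mathcal{L}_K^*$, where $\bar{\mathcal{H}} := \mathcal{N}(\mathcal{L}_1^*)$ denotes the set of neighbors of $\mathcal{L}_1^*$ (which, by Lemma \ref{lem:Independent}, all lie outside $\mathcal{L}_1^*$). From this, each $i \in \mathcal{L}_1^*$ satisfies $\bar{\mathcal{D}}_i \subseteq \mathcal{N}_i \subseteq \bar{\mathcal{H}} \subseteq \mathcal{L}_K^*$, forcing $k(\bar{\mathcal{D}}_i)=K^*$ and hence $i \in \bar{\mathcal{G}}$.

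First, we would characterize $\bar{\mathcal{H}}$ as the set of nodes that donate exclusively to $\mathcal{L}_1^*$, i.e., $\bar{\mathcal{H}}=\{j:k(\bar{\mathcal{D}}_j)=1\}$. Indeed, any $j \in \mathcal{N}(\mathcal{L}_1^*)$ has a neighbor at level $1$; applying Lemma \ref{lem:neighbor} to $j$ and using that $l_1^*$ is the smallest level forces $k(\bar{\mathcal{D}}_j)=1$. With this characterization, $\bar{\mathcal{H}}$ carries no internal flow (any such flow would have to land in $\mathcal{L}_1^*$, but $\bar{\mathcal{H}} \cap \mathcal{L}_1^* = \emptyset$) and ships its entire resource into $\mathcal{L}_1^*$. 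Since $\mathcal{L}_1^*$ is independent and so receives only from $\bar{\mathcal{H}}$, the flow balance gives
\[
\sum_{j\in\bar{\mathcal{H}}}D_j \;=\; \overline{\mathrm{Out}}(\bar{\mathcal{H}}) \;=\; \overline{\mathrm{In}}(\mathcal{L}_1^*) \;=\; l_1^*\sum_{i\in\mathcal{L}_1^*}D_i.
\]

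Next, we would derive a ``dual'' to Lemma \ref{lem:ProdLess}. Applying Lemma \ref{lem:equality} to $\bar{\mathcal{H}}$, together with the fact that $\mathcal{L}_1^*$ funnels its entire outflow $\sum_{i\in\mathcal{L}_1^*}D_i$ into $\bar{\mathcal{H}}$, yields $\sum_{j\in\bar{\mathcal{H}}}r_j^* = \overline{\mathrm{In}}(\bar{\mathcal{H}}) \geq \sum_{i\in\mathcal{L}_1^*}D_i$. Decomposing the left side by level and bounding the resulting weighted average by the largest level $l_{k'}^*$ appearing in $\bar{\mathcal{H}}$ gives
\[
l_{k'}^* \cdot l_1^* \;\geq\; 1,
\]
with equality only when $\bar{\mathcal{H}} \subseteq \mathcal{L}_{k'}^*$ and no external resource enters $\bar{\mathcal{H}}$.

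Combining this with Lemma \ref{lem:ProdLess}'s bound $l_K^* l_{k_0}^* \leq 1$ together with the trivial $l_{k_0}^* \geq l_1^*$ yields $l_K^* l_1^* \leq 1 \leq l_{k'}^* l_1^*$, so $l_{k'}^* \geq l_K^*$; combined with the maximality $l_{k'}^* \leq l_K^*$, this forces $k' = K^*$ and $l_K^* l_1^* = 1$. The equality case of the dual inequality then gives $\bar{\mathcal{H}} \subseteq \mathcal{L}_K^*$, completing the argument. The main obstacle is setting up the dual inequality, since the symmetry with Lemma \ref{lem:ProdLess} is imperfect: $\mathcal{L}_K^*$ is only ``fully out-flowing'' (Lemma \ref{lem:NoGive}) rather than independent, while $\mathcal{L}_1^*$ is genuinely independent (Lemma \ref{lem:Independent}); so the dual cannot be obtained by mirror-imaging the proof of Lemma \ref{lem:ProdLess} but must be rebuilt directly from Lemma \ref{lem:equality} and the new characterization of $\bar{\mathcal{H}}$.
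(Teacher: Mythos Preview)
Your argument is correct and takes a genuinely different route from the paper. The paper argues by contradiction: it assumes some node of $\mathcal{L}_1^*$ lies outside $\bar{\mathcal{G}}$, isolates the set $\bar{\mathcal{F}}_1=\mathcal{L}_1^*\cap(\bar{\mathcal{G}}\cup\mathcal{L}_K^*)^c$ and its neighbors $\bar{\mathcal{B}}$ inside $(\bar{\mathcal{G}}\cup\mathcal{L}_K^*)^c$, and from two flow inequalities between $\bar{\mathcal{F}}_1$ and $\bar{\mathcal{B}}$ obtains $l_1^*l_{K^*-1}^*\geq 1$, hence $l_{k_0}^*l_K^*>1$, contradicting Lemma~\ref{lem:ProdLess}. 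You instead work directly with the whole set $\bar{\mathcal{H}}=\mathcal{N}(\mathcal{L}_1^*)$, establish the dual bound $l_{k'}^*l_1^*\geq 1$ (with $k'$ the largest level occurring in $\bar{\mathcal{H}}$), and squeeze it against $l_K^*l_1^*\leq l_K^*l_{k_0}^*\leq 1$ to force $k'=K^*$ and equality, so that $\bar{\mathcal{H}}\subseteq\mathcal{L}_{K^*}^*$. Both proofs ultimately pivot on Lemma~\ref{lem:ProdLess}, but the paper pairs it with an inequality derived in the complement of $\bar{\mathcal{G}}\cup\mathcal{L}_K^*$, while you pair it with an inequality derived around $\mathcal{L}_1^*$ itself. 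Your route has the pleasant side effect of proving $\mathcal{N}(\mathcal{L}_1^*)\subseteq\mathcal{L}_K^*$ and $l_1^*l_K^*=1$ along the way, which the paper only obtains later in Proposition~\ref{lem:MainLem}; the paper's route, on the other hand, keeps the lemma self-contained and avoids re-deriving material that will reappear in the proposition.
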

\begin{proof}
Let $\bar{\mathcal{F}}_{k}\triangleq\left(\bar{{\cal G}}\cup{\cal L}_{K}^{*}\right)^{c}\cap{\cal L}_{k}^{*},\ k=1,...,K^{*}-1.$ It suffices to show that $\bar{\mathcal{F}}_{1}=\emptyset.$ Assume that $\bar{\mathcal{F}}_{1}\neq\emptyset$. Let ${\cal \bar{B}}$ be the set of nodes in $\left(\bar{{\cal G}}\cup{\cal L}_{K}^{*}\right)^{c}$ that are neighbors of nodes in $\bar{\mathcal{F}}_{1}$. The set ${\cal \bar{B}}$ is nonempty because otherwise, since ${\cal L}_{1}^{*}$ (and hence $\bar{\mathcal{F}}_{1}$ ) is an independent set and by Lemma \ref{lem:PropOfG} it is ${\cal N}\left(\bar{{\cal G}}\right)\subseteq{\cal L}_{K}^{*}$, all neighbors of any node in $\bar{\mathcal{F}}_{1}$ would be in ${\cal L}_{K}^{*}$ which implies that $\bar{\mathcal{F}}_{1}\subseteq\bar{{\cal G}}$, a contradiction.

Notice next that (\ref{eq:separation}) and Lemma \ref{lem:neighbor} imply that all nodes in ${\cal \bar{B}}$ give resource only to nodes in $\bar{\mathcal{F}}_{1}$ $.$ Hence
\[
l_{1}^{*}\sum_{i\in\mathcal{\bar{F}}_{1}}D_{i}\geq\sum_{i\in\mathcal{\bar{B}}}D_{i}.
\]

Also, since by Lemma \ref{lem:Independent} the set $\bar{\mathcal{F}}_{1}$ is independent, and by (\ref{eq:separation}) nodes in $\bar{\mathcal{F}}_{1}$ do not give resource to nodes in ${\cal L}_{K}^{*}$, all nodes in this set give resource only to nodes in ${\cal \bar{B}}$ and (notice that since ${\cal \bar{B}}\neq\emptyset$, is should hold $K^{*}-1\geq2$),
\[
\sum_{k=2}^{K^{*}-1}l_{k}^{*}\sum_{i\in\bar{\mathcal{F}}_{k}\cap{\cal \bar{B}}}D_{i}\geq\sum_{i\in\mathcal{\bar{F}}_{1}}D_{i}.
\]
 Multiplying the last two inequalities and canceling terms:
 \[
\sum_{k=2}^{K^{*}-1}l_{1}^{*}l_{k}^{*}\sum_{i\in\bar{\mathcal{F}}_{k}\cap{\cal \bar{B}}}D_{i}\geq\sum_{i\in\mathcal{\bar{B}}}D_{i},
\]
which implies that $l_{1}^{*}l_{K-1}^{*}\geq1$. But $l_{k_{0}}^{*}l_{K}^{*}>l_{1}^{*}l_{K-1}^{*},\ k_{0}\geq1$ and hence $l_{k_{0}}^{*}l_{K}^{*}>1$,
which contradicts Lemma \ref{lem:ProdLess}. $\blacksquare$
\end{proof}

Now we are ready to prove the following proposition.

\begin{prop}
\label{lem:MainLem}Let $\boldsymbol{d}^{*}$ be a lex-optimal allocation and $K^{*}\geq2$. The set ${\cal L}_{1}^{*}$ is independent, ${\cal N}\left({\cal L}_{1}^{*}\right)={\cal L}_{K}^{*}$, and
\begin{align}
&l_{1}^{*}l_{K}^{*}=1 \label{eq:MainLemProd1}\,,\\
&\sum_{i\in{\cal L}_{1}^{*}}r_{i}^{*}=\sum_{i\in{\cal L}_{K}^{*}}D_{i},\label{eq:MainLemL1LK}\\
&{\rm \overline{In}}\left({\cal L}_{1}^{*}\cup{\cal L}_{K}^{*}\right)={\rm \overline{Out}}\left({\cal L}_{1}^{*}\cup{\cal L}_{K}^{*}\right)=0,\label{eq:MainLemInOut0}
\end{align}
\end{prop}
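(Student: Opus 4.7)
My plan is to deduce the statement by combining the structural lemmas already proved for $\bar{\mathcal{G}}$ and $\mathcal{L}_K^*$ with Lemma \ref{lem:AllInG}, and then pinning down which case of Lemma \ref{lem:ProdLess} must apply via lex-optimality. Specifically, Lemma \ref{lem:Independent} already gives that $\mathcal{L}_1^*$ is independent, and Lemma \ref{lem:AllInG} tells us $\mathcal{L}_1^* \subseteq \bar{\mathcal{G}}$. Since $l_1^*$ is the smallest level overall, the index $k_0$ of the smallest level appearing in $\bar{\mathcal{G}}$ must equal $1$, i.e., $l_{k_0}^* = l_1^*$. Thus from Lemma \ref{lem:ProdLess} we get $l_1^* l_K^* \le 1$ unconditionally.

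The core of the proof is to upgrade this to equality, and in the same stroke force $\bar{\mathcal{G}} = \mathcal{L}_1^*$ and $\overline{\mathrm{Out}}(\bar{\mathcal{G}} \cup \mathcal{L}_K^*) = 0$. Lemma \ref{lem:ProdLess} already tells us that these three statements are coupled: $l_1^* l_K^* = 1$ holds precisely in the case $\overline{\mathrm{Out}}(\bar{\mathcal{G}} \cup \mathcal{L}_K^*) = 0$ and $\bar{\mathcal{G}} - \mathcal{L}_1^* = \emptyset$, while either of the other two cases forces $l_1^* l_K^* < 1$. So it suffices to argue by contradiction that $l_1^* l_K^* < 1$ is incompatible with lex-optimality of $\boldsymbol{d}^*$. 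The idea is that a strict deficit $1 - l_1^* l_K^* > 0$ provides slack that we can spend to strictly raise the smallest ratio $l_1^*$ (or simultaneously lower $l_K^*$) by a small perturbation of $\bar{\boldsymbol{d}}$, producing an allocation $\hat{\boldsymbol{d}}$ whose sorted ratio vector dominates $\boldsymbol{\rho}^*$ lexicographically and contradicting the lex-optimality assumption.

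The main obstacle is making this perturbation argument precise in each of the two "strict-inequality" subcases of Lemma \ref{lem:ProdLess}. If $\overline{\mathrm{Out}}(\bar{\mathcal{G}} \cup \mathcal{L}_K^*) > 0$, some node of $\mathcal{L}_K^*$ currently sends resource to a node outside $\bar{\mathcal{G}} \cup \mathcal{L}_K^*$; by rerouting an $\varepsilon$-fraction of that flow back into $\bar{\mathcal{G}}$ (using the fact that $\mathcal{N}(\bar{\mathcal{G}}) = \mathcal{L}_K^*$ gives us edges to exploit) we can simultaneously shrink $l_K^*$ and boost some lowest-level ratio, maintaining feasibility by compensating through $\bar{\mathcal{G}}$'s outgoing flow to $\mathcal{L}_K^*$. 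If instead $\overline{\mathrm{Out}}(\bar{\mathcal{G}} \cup \mathcal{L}_K^*) = 0$ but $\bar{\mathcal{G}} - \mathcal{L}_1^* \neq \emptyset$, the strict ratio gap between $l_k^*$ (for $k > 1$, $k \in \bar{\mathcal{G}}$) and $l_1^*$ lets us shift a small amount of outgoing flow from a $\mathcal{L}_1^*$-node away from a well-served neighbor in $\mathcal{L}_K^*$ toward a less-served neighbor, while a higher-level node in $\bar{\mathcal{G}}$ absorbs the other side of the exchange; since the affected levels strictly reorder the sorted vector favorably, we again contradict lex-optimality. Carefully choosing $\varepsilon$ small enough that no level crossing occurs except the intended one is the delicate part.

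Having established $l_1^* l_K^* = 1$, $\bar{\mathcal{G}} = \mathcal{L}_1^*$, and $\overline{\mathrm{Out}}(\bar{\mathcal{G}} \cup \mathcal{L}_K^*) = 0$, the remaining conclusions follow immediately. Lemma \ref{lem:PropOfG} yields $\mathcal{N}(\mathcal{L}_1^*) = \mathcal{N}(\bar{\mathcal{G}}) = \mathcal{L}_K^*$ and $\overline{\mathrm{In}}(\mathcal{L}_1^* \cup \mathcal{L}_K^*) = \overline{\mathrm{In}}(\bar{\mathcal{G}} \cup \mathcal{L}_K^*) = 0$, proving \eqref{eq:MainLemInOut0}. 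For \eqref{eq:MainLemL1LK}, specialize Eq. \eqref{eq:rel2} using $\bar{\mathcal{G}} = \mathcal{L}_1^*$ to obtain $l_K^* \sum_{i \in \mathcal{L}_K^*} D_i = \sum_{i \in \mathcal{L}_1^*} D_i$; multiplying both sides by $l_1^*$ and invoking $l_1^* l_K^* = 1$ gives $\sum_{i \in \mathcal{L}_K^*} D_i = l_1^* \sum_{i \in \mathcal{L}_1^*} D_i = \sum_{i \in \mathcal{L}_1^*} r_i^*$, completing the proposition.
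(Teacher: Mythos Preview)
Your overall strategy is sound up to and including the observation that $k_0 = 1$ (so Lemma~\ref{lem:ProdLess} gives $l_1^* l_K^* \le 1$), and the final paragraph deriving the remaining conclusions once $\bar{\mathcal G}=\mathcal L_1^*$ and $\overline{\mathrm{Out}}(\bar{\mathcal G}\cup\mathcal L_K^*)=0$ is correct. The genuine gap is the central step: your perturbation argument for ruling out $l_1^* l_K^* < 1$ does not work as written, and it is not clear it can be repaired.

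In Case~1 you propose to reroute an $\varepsilon$-amount of flow from some $i\in\mathcal L_K^*$ (currently going to $j\notin\bar{\mathcal G}\cup\mathcal L_K^*$) into some $g\in\bar{\mathcal G}$, and then ``compensate through $\bar{\mathcal G}$'s outgoing flow to $\mathcal L_K^*$'' to shrink $l_K^*$. But rerouting $i$'s outgoing flow does not change $r_i$, so it cannot shrink $l_K^*$; and the compensation is impossible because every neighbor of $g\in\bar{\mathcal G}$ lies in $\mathcal L_K^*$ (Lemma~\ref{lem:PropOfG}), so $g$ has no choice but to send all of $D_g$ into $\mathcal L_K^*$. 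More fundamentally, even if you manage to raise $\rho_g$ for one node $g\in\mathcal L_1^*$, the sorted ratio vector need not improve lexicographically: all other nodes in $\mathcal L_1^*$ stay at $l_1^*$, and $\rho_j$ has dropped. Your Case~2 sketch has the same defect: a local shift affecting one or two nodes does not produce a lex-improvement of the full sorted vector.

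The paper bypasses perturbation entirely. It establishes the reverse inequality $l_1^* l_K^* \ge 1$ directly by flow accounting on $\mathcal L_1^*$ and $\mathcal N(\mathcal L_1^*)$: since $\mathcal L_1^*$ is independent (Lemma~\ref{lem:Independent}) and every node of $\mathcal N(\mathcal L_1^*)$ gives all its resource to $\mathcal L_1^*$ (Lemma~\ref{lem:neighbor}), equation~\eqref{eq:equality} yields $l_1^*\sum_{\mathcal L_1^*}D_i=\sum_{\mathcal N(\mathcal L_1^*)}D_i$; and since $\mathcal N(\mathcal L_1^*)\subseteq\mathcal L_K^*$ receives at least $\sum_{\mathcal L_1^*}D_i$ from $\mathcal L_1^*$, one gets $l_K^*\sum_{\mathcal N(\mathcal L_1^*)}D_i\ge\sum_{\mathcal L_1^*}D_i$. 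Multiplying gives $l_1^* l_K^*\ge 1$, which together with Lemma~\ref{lem:ProdLess} forces equality and the third case of that lemma. You should replace your perturbation step with this direct inequality; the rest of your outline then goes through.
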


\begin{proof}
By Lemma \ref{lem:neighbor} the nodes in the set ${\cal N}\left({\cal L}_{1}^{*}\right)=\cup_{i\in{\cal L}_{1}}{\cal N}_{i}-{\cal L}_{1}^{*}=\cup_{i\in{\cal L}_{1}}{\cal N}_{i}$ (the last equality hold because $\forall i\in{\cal L}_{i}^{*}$, it is ${\cal N}_{i}\cap{\cal L}_{1}^{*}=\emptyset$) give resource only to nodes in ${\cal L}_{1}^{*}$, hence
\begin{equation}
{\rm \overline{In}}\left({\cal L}_{1}^{*}\right)=\sum_{i\in{\cal N}\left({\cal L}_{1}^{*}\right)}D_{i} \,.\nonumber
\end{equation}
Also, since ${\cal L}_{1}^{*}$ is an independent set, its nodes give all their resource to nodes in ${\cal N}\left({\cal L}_{1}^{*}\right)$, hence it is:
\[
{\rm \overline{Out}}\left({\cal L}_{1}^{*}\right)=\sum_{i\in{\cal L}_{1}^{*}}D_{i}\,.
\]
Applying (\ref{eq:equality}) to the set ${\cal L}_{1}^{*},$ we then have
\begin{equation}
\sum_{i\in{\cal L}_{1}^{*}}r_{i}^{*}=l_{1}^{*}\sum_{i\in{\cal L}_{1}^{*}}D_{i}=\sum_{i\in{\cal N}\left({\cal L}_{1}^{*}\right)}D_{i}\label{eq:help2-1}
\end{equation}
On the other hand, since according to Lemma \ref{lem:NoGive} the nodes in ${\cal N}\left({\cal L}_{1}^{*}\right)\subseteq{\cal L}_{K}^{*}$ give all their resource to nodes outside ${\cal L}_{K}^{*}$, according to (\ref{eq:OutIneq}) applied with equality, we get
\begin{equation}
{\rm \overline{Out}}\left({\cal N}\left({\cal L}_{1}^{*}\right)\right)=\sum_{i\in{\cal N}\left({\cal L}_{1}^{*}\right)}D_{i}\,.\label{eq:d1}
\end{equation}
Moreover, ${\cal L}_{1}^{*}$ is an independent set and thus the nodes in ${\cal N}\left({\cal L}_{1}^{*}\right)$ get all the resource from nodes in ${\cal L}_{1}^{*}$. Hence:
\[
{\rm \overline{In}}\left({\cal L}_{K}^{*}\right)\geq\sum_{i\in{\cal N}\left({\cal L}_{1}^{*}\right)}D_{i}.
\]
Applying now (\ref{eq:equality}) to the set ${\cal N}\left({\cal L}_{1}^{*}\right)$ we have:
\begin{equation}
\sum_{i\in{\cal N}\left({\cal L}_{1}^{*}\right)}r_{i}^{*}=l_{K}^{*}\sum_{i\in{\cal N}\left({\cal L}_{1}^{*}\right)}D_{i}\geq\sum_{i\in{\cal L}_{1}^{*}}D_{i}.\label{eq:d2}
\end{equation}
Multiplying (\ref{eq:help2-1}), (\ref{eq:d2}) we get $l_{K}^{*}l_{1}^{*}\geq1$. If ${\rm \overline{Out}}\left({\cal \bar{G}}\cup{\cal L}_{K}^{*}\right)>0$, from Lemma \ref{lem:ProdLess} we have $l_{K}^{*}l_{1}^{*}<1$, i.e., a contradiction. If ${\rm \overline{Out}}\left({\cal \bar{G}}\cup{\cal L}_{K}^{*}\right)=0$ and ${\cal \bar{G}}-{\cal L}_{1}^{*}\neq\emptyset$ then from the same lemma we have $l_{K}^{*}l_{1}^{*}<1$, again a contradiction.

The only case that remains is ${\rm \overline{Out}}\left({\cal \bar{G}}\cup{\cal L}_{K}^{*}\right)=0$ and ${\cal \bar{G}}-{\cal L}_{1}^{*}=\emptyset$ (i.e., ${\cal \bar{G}}\subseteq{\cal L}_{1}^{*}$) which again by the lemma implies $l_{K}^{*}l_{1}^{*}=1.$ Also, Lemma \ref{lem:AllInG} implies ${\cal \bar{G}}={\cal L}_{1}^{*}.$
${\cal N}\left({\cal L}_{1}^{*}\right)={\cal L}_{K}^{*}$ follows from Lemma \ref{lem:PropOfG}, and (\ref{eq:MainLemL1LK}) follows from (\ref{eq:help2-1}). $\blacksquare$

\end{proof}

After providing this last proposition, we can \uline{proceed with the proof for Theorem \ref{thm:MainTh0}}: For $k=1,$ Items \ref{enu:MainTh0Item1}- \ref{enu:MainTh0Item3} follow from Proposition \ref{lem:MainLem}. Hence the theorem is true when $K\in\left\{ 2,3\right\} $. Assume now that $K\geq4$. Since according to Proposition \ref{lem:MainLem} it is ${\rm Out}\left({\cal L}_{1}\cup{\cal L}_{K}\right)={\rm In}\left({\cal L}_{1}\cup{\cal L}_{K}\right)=0$, the restriction of $\boldsymbol{d}$ in ${\cal Q}_{2}$, $\boldsymbol{d}_{{\cal Q}_{2}}=\left\{ d_{ij}\right\} _{(i,j)\in{\cal E}_{{\cal Q}_{2}}}$ is an allocation on the graph with $K-2$ levels. But then $\boldsymbol{d}_{{\cal Q}_{2}}$ must be a lex-optimal allocation in $G_{{\cal Q}_{2}}=({\cal Q}_{2},{\cal E}_{{\cal Q}_{2}})$ since otherwise we could combine an allocation $\hat{\boldsymbol{d}}_{{\cal Q}_{2}}\succ\boldsymbol{d}_{{\cal Q}_{2}}$ with the components of $\boldsymbol{d}$ in ${\cal E}-{\cal E}_{{\cal Q}_{2}}$ and get a lexicographically better allocation on the original graph. Moreover, by construction we have for the lowest level set in $G_{{\cal Q}_{2}}$: ${\cal L}_{{\cal Q}_{2},1}={\cal L}_{2}$ and ${\cal L}_{{\cal Q}_{2},K-2}={\cal L}_{K-1}={\cal L}_{K-2+1}$. Hence properties \ref{enu:MainTh0Item1}- \ref{enu:MainTh0Item3} hold for $k=2$ and we can repeat the same arguments for the graph $G_{{\cal Q}_{2}}$to deduce inductively the stated properties for all $k=1,....,\flr{K/2}$. $\blacksquare$

%%%===================================================================================== 
\section{Game-theoretic Analysis}\label{sec:game-theory-frameworks}
%\vspace{-0.5mm}
\subsection{Coalitional Game} \label{sec:coalitional-subsection}

We consider two notions for coalition stability \cite{myerson-gametheory-book}, namely weak and strong stability\footnote{\small{Please note that this service exchange game does not posses the \emph{comprehensive property}, due to the fact that nodes allocate their entire idle resource, and hence we cannot define the inner core and the Shapley values and compare them with our solution. For more details on this, please see \cite{myerson-gametheory-book}.}}. The latter is a more restrictive condition, and preferable as it ensures there is no other allocation that will yield a strictly better payoff \emph{even} for one user. The main result in this context is:
\vspace{-1mm}
\begin{theorem}
Any lex-optimal allocation $\bm{d}^{*}$ yields a received resource vector $\bm{r}^{*}$, that lies in the core of the NTU service exchange game, and it is strongly stable.
\label{thm:Stability}
\end{theorem}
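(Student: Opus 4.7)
Since any core violation is in particular a strong-stability violation, it suffices to prove that $\bm{r}^*$ is strongly stable, and I argue by contradiction. Suppose there is a coalition $\mathcal{S} \subseteq \mathcal{N}$ and an allocation $\hat{\bm{d}}_\mathcal{S}$ on $G_\mathcal{S}$ with $\hat{r}_i \geq r_i^*$ for every $i \in \mathcal{S}$ and $\hat{r}_j > r_j^*$ for at least one $j \in \mathcal{S}$. The aim is to produce, from $\hat{\bm{d}}_\mathcal{S}$ and $\bm{d}^*$, a feasible allocation on all of $G$ whose received-resource vector lex-dominates $\bm{r}^*$, contradicting lex-optimality.

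First, a balance step via Lemma \ref{lem:equality}. Applied inside $G_\mathcal{S}$ (which has no edges crossing its boundary) it gives $\sum_{i \in \mathcal{S}} \hat{r}_i = \sum_{i \in \mathcal{S}} D_i$, which, combined with the weak dominance and the strict inequality, yields $\sum_{i \in \mathcal{S}} r_i^* < \sum_{i \in \mathcal{S}} D_i$; applied to $\bm{d}^*$ on $G$ with set $\mathcal{S}$ this converts to $\mathrm{Out}^*(\mathcal{S}) > \mathrm{In}^*(\mathcal{S})$, so $\mathcal{S}$ is a strict net exporter in the lex-optimal solution. Next, for small $\epsilon > 0$, I would consider the perturbed allocation $\bar{\bm{d}} = (1-\epsilon)\bm{d}^* + \epsilon \bm{d}'$, where $\bm{d}'$ extends $\hat{\bm{d}}_\mathcal{S}$ to $G$ by taking $d'_{ij} = \hat{d}_{ij}$ on $\mathcal{E}_\mathcal{S}$, $d'_{ij} = 0$ whenever $i \in \mathcal{S}$ and $j \in \mathcal{S}^c$ (since $\hat{\bm{d}}$ already consumes all of $D_i$ within $\mathcal{S}$), and $d'_{ij} = d_{ij}^*$ on the remaining edges. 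Feasibility of $\bm{d}'$ and of $\bar{\bm{d}}$ is immediate, and a short computation yields $\bar{r}_i - r_i^* = \epsilon(\hat{r}_i - \sum_{j \in \mathcal{N}_i \cap \mathcal{S}} d_{ji}^*) \geq 0$ for $i \in \mathcal{S}$, strict for the improved node, while $\bar{r}_i - r_i^* = -\epsilon \sum_{j \in \mathcal{N}_i \cap \mathcal{S}} d_{ji}^* \leq 0$ for $i \in \mathcal{S}^c$.

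The delicate step is to conclude that $\bar{\bm{\rho}} \succ \bm{\rho}^*$ lexicographically. Coordinate-wise dominance fails since some $\mathcal{S}^c$-entries decrease, so one must compare the sorted vectors. The dangerous configuration is a node $i \in \mathcal{L}_1^* \cap \mathcal{S}^c$ with a neighbor $j \in \mathcal{L}_K^* \cap \mathcal{S}$ carrying positive $\bm{d}^*$-flow to $i$: by Theorem \ref{thm:MainTh0} the set $\mathcal{L}_1^*$ receives flow only from $\mathcal{L}_K^*$, so stripping away that contribution can push $\bar{\rho}_i$ strictly below $l_1^*$ and break lex-dominance. Closing this gap is the main obstacle, and the key tool is the layered structure of Theorem \ref{thm:MainTh0}---the level-pairing identity $l_k^* l_{K-k+1}^* = 1$ together with the fact that every positive $\bm{d}^*$-flow goes between complementary levels $\mathcal{L}_k^*$ and $\mathcal{L}_{K-k+1}^*$. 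Using this, one can either reduce to a minimal deviating subcoalition confined to a single complementary pair, so that decreases in $\mathcal{S}^c$ remain within that pair and cannot fall below $l_1^*$, or refine $\bm{d}'$ so that each $\mathcal{L}_K^* \cap \mathcal{S}$ node preserves its flows into the threatened $\mathcal{L}_1^* \cap \mathcal{S}^c$ neighbors while compensating by an internal rearrangement among the remaining $\mathcal{S}$-nodes, iterating the argument layer by layer.
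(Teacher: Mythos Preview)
Your proposal correctly reduces to strong stability and correctly identifies where the convex-combination approach stalls: when a node $i\in\mathcal{L}_1^*\cap\mathcal{S}^c$ has a neighbour $j\in\mathcal{L}_K^*\cap\mathcal{S}$ with $d_{ji}^*>0$, the perturbed allocation $\bar{\bm d}$ drives $\bar\rho_i$ strictly below $l_1^*$ and lex-dominance fails. However, neither of your two suggested remedies closes this gap. Reducing to a ``minimal deviating subcoalition confined to a single complementary pair'' is not justified---a blocking coalition may genuinely need nodes from several level pairs, and even if it sits inside a single $\mathcal{M}_k^*$ the same obstruction recurs with $\mathcal{L}_k^*\cap\mathcal{S}^c$ nodes dropping below $l_k^*$, which for $k=1$ is still fatal. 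The second remedy (preserve the $\bm d^*$-flows from $\mathcal{L}_K^*\cap\mathcal{S}$ into $\mathcal{L}_1^*\cap\mathcal{S}^c$ and ``compensate by an internal rearrangement'') is infeasible as stated: every node of $\mathcal{S}$ already spends all of $D_i$ inside $\mathcal{S}$ under $\hat{\bm d}_{\mathcal{S}}$, so diverting resource to $\mathcal{S}^c$ necessarily decreases some $\hat r_i$ and destroys the hypothesis $\hat r_i\ge r_i^*$ that the whole argument rests on.

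The paper does not try to build a competing global allocation at all. Instead it derives a contradiction directly from conservation inside the coalition. It slices $\mathcal{S}$ along the lex-optimal level structure, setting $\mathcal{C}_{1,k}=\mathcal{S}\cap\mathcal{L}_k^*$ and $\mathcal{C}_{2,k}=\mathcal{S}\cap\mathcal{L}_{K-k+1}^*$, and writes the $\hat{\bm d}$-inflow to each slice using the connectivity constraints of Theorem~\ref{thm:MainTh0} (a node in $\mathcal{L}_k^*$ can only be adjacent to nodes in $\mathcal{L}_{K-m+1}^*$ with $m\le k$). For each $k$ this gives one inequality for $\mathcal{C}_{1,k}$ and one for $\mathcal{C}_{2,k}$; the paper multiplies the latter by $l_k^*$, adds, and sums over $k$. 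The pairing identity $l_k^*l_{K-k+1}^*=1$ collapses the right-hand side, while on the left the bounds $l_k^*<1$ and $l_k^*<l_m^*$ for $k<m$ turn the proportion sums (which equal~$1$ by feasibility of $\hat{\bm d}$ on $\mathcal{S}$) into a quantity that is too small, yielding a strict inequality in the wrong direction. This weighted-conservation argument is the missing ingredient; the convex-combination route, as far as I can see, cannot be repaired without essentially reproducing it.
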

\vspace{-1mm}
Therefore, no subset of nodes can deviate and improve the total received resource, \emph{for at least one} of its members, without reducing the total received resource of at least another one of its members. Combining Theorems 1 and 4 we have the following corollary:

%\vspace{-1mm}
\begin{cor}
\label{thm:MainLex}Let $K^{*}\geq2.$
Under any lex-optimal allocation $\bm{d}^{*}$, the respective received resource vector $\bm{r}^{*}$, belongs to the core of the NTU coalitional servicing game, and has the following structure:
\begin{enumerate}

\item The set of nodes ${\cal N}$ is partitioned into disjoint groups ${\cal M}_{1}^{*},...,{\cal M}_{L}^{*},$ where $\left\lceil K^{*}/2\right\rceil$, and each group contains nodes with exchange ratios belonging to at most two different levels.

\item There are exactly $\left\lfloor K^{*}/2\right\rfloor $ groups with 2 levels. For group ${\cal M}_{k}^{*},\ 1\leq k\leq\lfloor{K^{*}/2\rfloor}$ it holds ${\cal M}_{k}^{*}={\cal L}_{k}^{*}\cup{\cal L}_{K-k+1}^{*}.$

\item \label{enu:CorItems4}If $K^{*}$ is odd, there is also a group with one ratio level, i.e., $\mathcal{M}_{\lceil{K^{*}/2}\rceil}^{*}=\mathcal{L}_{\lceil{K^{*}/2}\rceil}^{*}$.

\item \label{enu:CorItem5}It holds, $l_{k}^{*}l_{K-k+1}^{*}=1,\ 1\leq k\leq\lfloor{K^{*}/2\rfloor}$ and if $K^{*}$ is odd, the single level group has ratio $l_{\lceil{K^{*}/2\rceil}}^{*}=1.$

\item The set $\cup_{k=1}^{\left\lfloor K^{*}/2\right\rfloor }{\cal {\cal L}}_{k}^{*}$ is independent.

%%%6/3----- \item \label{enu:MainLex-it5}The nodes in the set ${\cal {\cal L}}_{k}^{*},\ 1\leq k\leq\flr{K^{*}/2}$ may be connected only to nodes in sets ${\cal {\cal L}}_{K-m+1}^{*}$,
%$1\leq m\leq k$.

\end{enumerate}
\end{cor}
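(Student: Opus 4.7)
The plan is to obtain the corollary essentially by repackaging Theorems \ref{thm:MainTh0}, \ref{thm:MainTh0_K1}, and \ref{thm:Stability}: the real work has already been done, and what remains is to read off each clause from the appropriate earlier result and to handle one non-trivial combinatorial consequence for item~5. Core membership is immediate from Theorem \ref{thm:Stability}, since every strongly stable allocation also satisfies the weaker non-blocking condition that characterizes the core.

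For the partition structure in items 1--3, I would introduce the groups $\mathcal{M}_k^*=\mathcal{L}_k^*\cup\mathcal{L}_{K-k+1}^*$ for $k=1,\ldots,\lfloor K^*/2\rfloor$; these are pairwise disjoint because the level sets are, and they are natural pairings by Property~\ref{enu:MainTh0Item2} of Theorem~\ref{thm:MainTh0}, which identifies $\mathcal{L}_{K-k+1}^*$ as the external $\mathcal{Q}_k$-neighborhood of $\mathcal{L}_k^*$. If $K^*$ is even, these $\lfloor K^*/2\rfloor$ groups already exhaust $\mathcal{N}$; if $K^*$ is odd, the middle level $\mathcal{L}_{\lceil K^*/2\rceil}^*$ remains and forms the only single-level group, bringing the total count to $\lceil K^*/2\rceil=L$. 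Item~4 follows by invoking Property~\ref{enu:MainTh0Item2.1} of Theorem~\ref{thm:MainTh0} for the two-level groups; for the single-level group in the odd case, one observes that the inductive descent in the proof of Theorem~\ref{thm:MainTh0} terminates at the subgraph $G_{\mathcal{Q}_{\lceil K^*/2\rceil}}$, which contains exactly one level, whence Theorem~\ref{thm:MainTh0_K1} forces that level's ratio to equal $1$.

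Item~5, the independence of $\bigcup_{k=1}^{\lfloor K^*/2\rfloor}\mathcal{L}_k^*$ in the ambient graph $G$, is the only step that is not an outright citation. Internal independence of each $\mathcal{L}_k^*$ is immediate from Property~\ref{enu:MainTh0Item1} of Theorem~\ref{thm:MainTh0}. The substantive task is to exclude cross-edges between $\mathcal{L}_k^*$ and $\mathcal{L}_{k'}^*$ with $k<k'\leq\lfloor K^*/2\rfloor$. I would induct on $k$: for $k=1$, since $\mathcal{Q}_1=\mathcal{N}$, Property~\ref{enu:MainTh0Item2} says every neighbor (in $G$) of any node in $\mathcal{L}_1^*$ lies in $\mathcal{L}_{K^*}^*$, ruling out every cross-edge incident to $\mathcal{L}_1^*$. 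Having peeled off $\mathcal{L}_1^*\cup\mathcal{L}_{K^*}^*$, the same argument applied inside $G_{\mathcal{Q}_2}$ places the external neighbors of $\mathcal{L}_2^*$ within $\mathcal{Q}_2$ in $\mathcal{L}_{K^*-1}^*$, while edges to $\mathcal{L}_1^*$ are already excluded by the previous step; iterating up to $k=\lfloor K^*/2\rfloor$ eliminates all remaining cross-edges among the lower half of the level sets.

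The main obstacle is the bookkeeping in this inductive step: one must be careful that ``neighbor inside $\mathcal{Q}_k$'' together with the previously peeled layers suffices to control \emph{all} neighbors in the original graph $G$. This succeeds precisely because each peeled pair $\mathcal{L}_m^*\cup\mathcal{L}_{K^*-m+1}^*$ with $m<k$ has already been shown to contribute no edges to later lower-half levels, so the recursion on $\mathcal{Q}_k$ captures every potential edge that could endanger independence.
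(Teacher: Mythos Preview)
Your proposal is correct and follows essentially the same approach as the paper, which simply states that the corollary ``is a simple consequence of the properties of the lex-optimal policies and can be derived by combining Theorems~\ref{thm:MainTh0} and~\ref{thm:Stability}.'' You actually supply more detail than the paper does: in particular, the inductive peeling argument you give for item~5 is exactly the reasoning the paper uses informally in its Section~3.1 discussion (and again inside the proof of Theorem~\ref{thm:Stability}, where it asserts that nodes in $\mathcal{L}_k^*$ for $k\le\lfloor K^*/2\rfloor$ can be connected only to nodes in $\mathcal{L}_{K-m+1}^*$ with $m\le k$), and your invocation of Theorem~\ref{thm:MainTh0_K1} for the middle level in the odd case makes explicit what the paper leaves implicit.
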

\vspace{-1mm}
Finally, if $K^*=1$, there is one group of nodes with $l^*=1$. %%%%%++%% For the proofs of the above corollary and theorem the reader is kindly requested to access the online technical report \cite{appendix-online}.

\textbf{Analysis and Discussion}. \uline{Existence of Core}. The above results reveal that this coalitional service exchange NTU game has always a non-empty core, for any graph $G$, and any resource endowments $\{D_i\}_{i\in\mathcal{N}}$. Moreover, the core contains all lex-optimal allocations, which are also strongly stable. This is a more demanding condition than the non-emptiness of the core.

%Not all stable resource allocations have these properties. These are properties of lex-optimal allocations which we show that are also strongly stable allocations

%we go a further step and prove that the stable received resource vectors are supported by resource allocation vectors having interesting structural properties.

%This very important result states that, these service exchange models (such as FON, Open Garden and BeWifi), always admit stable resource exchange policies, independently of the number of the participants, their connections, and their available resources.

\uline{Groups of Nodes}. Within the grand coalition, not all the nodes interact with each other. For example, in Figure \ref{fig:Corollary-Example} where $K^{*}=7$, all the lex-optimal allocations result in $4$ groups (denoted with the dotted circles): $\mathcal{M}_{1}^{*}=\mathcal{L}_{1}^{*}\cup\mathcal{L}_{7}^{*}$, $\mathcal{M}_{2}^{*}=\mathcal{L}_{2}^{*}\cup\mathcal{L}_{6}^{*}$, $\mathcal{M}_{3}^{*}=\mathcal{L}_{3}^{*}\cup\mathcal{L}_{5}^{*}$, and $\mathcal{M}_{4}^{*}=\mathcal{L}_{4}^{*}$. Each group consists of nodes belonging to one or two exchange ratio level sets, and none of them exchange resources with nodes in different groups\footnote{\small{Notice that these groups do not constitute coalitions according to the given definitions, and they are derived by the grand coalition solution.}}. These properties are very useful for network design. For example, for a given network we can predict which nodes will interact in the fair and stable allocation policy, and remove the redundant physical links, which in certain cases induce additional cost \cite{JacksonWolinsky1996}.

\subsection{Competitive Market}

In the competitive market framework, each node acts greedily, without any information about the graph or the other nodes' resources, and allocates its resource so as to maximize the total resource it receives in return. Interestingly, equilibriums always exist in this autonomous and decentralized setting, and lead to lex-optimal exchange ratio vectors:
\vspace{-1.5mm}
\begin{theorem}
\label{lem:reccip1}The following hold: \textbf{(i)} There is a lex-optimal allocation $\bm{d}^{*}$ under which each node $i\in\mathcal{N}$ gives resource to its neighbors in proportion to what it gets from them, i.e., $d_{ij}^{*}=d_{ji}^{*}D_{i}/{r_{i}^{*}}$, or $d_{ji}^{*}/d_{ij}^{*}=r_{i}^{*}/D_i=l_{k(i)}^{*},\, j\in \mathcal{N}_{i}$, and the neighbors not receiving resource from $i$ have higher exchange ratio, i.e., $l_{k(j)}^{*}\geq 1/l_{k(i)}^{*}=l_{k({\cal D}_{i})}^{*},\, j\in \mathcal{H}_{i}$. \textbf{(ii)} if the allocation satisfies the above conditions, then the allocation is lexicographically optimal.
\end{theorem}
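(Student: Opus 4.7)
My plan is to obtain part~(i) by starting from an arbitrary lex-optimal allocation $\bar{\bm{d}}$, whose existence and structure are supplied by Theorem~\ref{thm:MainTh0}, and redistributing the flow along each edge to enforce reciprocity while leaving the received-resource vector unchanged. Part~(ii) will reduce to showing that any allocation satisfying the two equilibrium conditions automatically inherits Properties~\ref{enu:MainTh0Item1}--\ref{enu:MainTh0Item3} of Theorem~\ref{thm:MainTh0}, after which Theorem~\ref{thm:MainLexSuff} yields lex-optimality.

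\textbf{Construction for (i).} For a lex-optimal $\bar{\bm{d}}$, define $f_{ij}=\bar{d}_{ij}+\bar{d}_{ji}$ on every edge. By Theorem~\ref{thm:MainTh0}, $f_{ij}$ is nonzero only on edges whose endpoints form a complementary level pair $(\mathcal{L}_{k}^{*},\mathcal{L}_{K-k+1}^{*})$, and for every $i\in\mathcal{L}_{k}^{*}$ the identity $\sum_{j}f_{ij}=(1+l_{k}^{*})D_{i}$ holds (outflow $D_{i}$ plus inflow $l_{k}^{*}D_{i}$). On each such edge with $i\in\mathcal{L}_{k}^{*}$ I set
\begin{equation*}
d_{ij}^{*}=\frac{f_{ij}}{1+l_{k}^{*}},\qquad d_{ji}^{*}=\frac{l_{k}^{*}f_{ij}}{1+l_{k}^{*}}.
\end{equation*}
The product relation $l_{k}^{*}l_{K-k+1}^{*}=1$ from Property~\ref{enu:MainTh0Item2.1} ensures that this prescription is consistent when read from either endpoint, and a direct summation yields $\sum_{j}d_{ij}^{*}=D_{i}$ together with $r_{i}^{*}=l_{k}^{*}D_{i}$, so the ratio vector $\bm{\rho}^{*}$ is preserved and $\bm{d}^{*}$ is lex-optimal. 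Reciprocity is built into the formula. For $j\in\mathcal{H}_{i}^{*}$ we have $f_{ij}=0$, hence $j\in\bar{\mathcal{H}}_{i}$, and Lemma~\ref{lem:neighbor} applied to $\bar{\bm{d}}$ gives $l_{k(j)}^{*}\geq l_{k(\bar{\mathcal{D}}_{i})}^{*}=l_{K-k+1}^{*}=1/l_{k(i)}^{*}$, as needed.

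\textbf{Sufficiency for (ii).} Assume $\bm{d}$ satisfies the two equilibrium conditions, and let $l_{1}<\cdots<l_{K}$ be its levels. Reciprocity applied at both endpoints of every active edge gives $l_{k(i)}l_{k(j)}=1$, so the levels are symmetric about~$1$ and $l_{k}l_{K-k+1}=1$ for all $k$ (Property~\ref{enu:MainTh0Item2.1}). I then induct on $k=1,\ldots,\lfloor K/2\rfloor$ to verify the remaining properties. For $k=1$: since $r_{i}=l_{1}D_{i}>0$, each $i\in\mathcal{L}_{1}$ has a supplier, which by reciprocity has ratio $1/l_{1}=l_{K}$; the $\mathcal{H}_{i}$ condition then forces every remaining neighbor of $i$ to have ratio $\geq l_{K}$, hence equal to $l_{K}$. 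Conversely any $j\in\mathcal{L}_{K}$ must be supplied, and its supplier necessarily lies in $\mathcal{L}_{1}$. Therefore $\mathcal{L}_{1}$ is independent in $G$ and $\mathcal{N}(\mathcal{L}_{1})=\mathcal{L}_{K}$, and flow balance between these two sets immediately yields $\sum_{i\in\mathcal{L}_{1}}r_{i}=\sum_{j\in\mathcal{L}_{K}}D_{j}$ (Property~\ref{enu:MainTh0Item3}). Since no flow crosses from $\mathcal{L}_{1}\cup\mathcal{L}_{K}$ into $\mathcal{Q}_{2}$, the restriction of $\bm{d}$ to $G_{\mathcal{Q}_{2}}$ is a valid allocation with levels $l_{2}<\cdots<l_{K-1}$ that still obeys the equilibrium conditions, and the induction proceeds. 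Invoking Theorem~\ref{thm:MainLexSuff} then delivers lex-optimality; the case $K=1$ follows from Theorem~\ref{thm:MainTh0_K1} since reciprocity alone forces $l_{1}^{2}=1$.

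\textbf{Main obstacle.} The most delicate step is justifying the inductive reduction in~(ii): one must check that restricting $\bm{d}$ to $G_{\mathcal{Q}_{k+1}}$ still produces a valid allocation, which boils down to verifying that every unit of resource exchanged by $\mathcal{L}_{k}\cup\mathcal{L}_{K-k+1}$ stays inside that pair. The $k=1$ analysis above serves as the template for this argument. On the construction side of~(i), the analogous subtlety is making the per-edge redistribution simultaneously consistent at both endpoints of every edge, and this is exactly what the identity $l_{k}^{*}l_{K-k+1}^{*}=1$ is designed to furnish.
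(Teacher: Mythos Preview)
Your argument for part~(ii) is essentially the paper's own: establish $l_{k(i)}l_{k(j)}=1$ on every active edge, peel off the extremal pair $(\mathcal{L}_1,\mathcal{L}_K)$ to verify Properties~\ref{enu:MainTh0Item1}--\ref{enu:MainTh0Item3}, and recurse on $G_{\mathcal{Q}_2}$ before invoking Theorem~\ref{thm:MainLexSuff}. The inductive reduction you flag as the ``main obstacle'' is handled exactly as the paper handles it, so nothing is missing there.

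Part~(i), however, is genuinely different from the paper and in fact more direct. The paper does \emph{not} use a closed-form redistribution. Instead, it defines the defect set $\mathcal{Y}_0=\{(i,j):\bar l_{\bar k(i)}\bar d_{ij}>\bar d_{ji}\}$, observes via a balance identity that any link in $\mathcal{Y}_0$ forces a companion link with the reversed inequality, chases these links until a simple cycle is found, and pushes a carefully chosen amount $\delta$ of flow around the cycle so that at least one strict inequality becomes an equality while all received resources stay fixed. Iterating shrinks $\mathcal{Y}_0$ to empty. Your approach bypasses this cycle-cancelling procedure entirely: you exploit the full structural description of lex-optimal allocations (independence of $\mathcal{L}_k^*$, the identity $l_k^*l_{K-k+1}^*=1$, and the fact---implicit in Theorem~\ref{thm:MainTh0} via Proposition~\ref{lem:MainLem}/Lemma~\ref{lem:InOutZero}---that all positive flow runs strictly between complementary level pairs) to write down $d_{ij}^*=f_{ij}/(1+l_k^*)$ in one shot. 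The trade-off is that your construction leans on more of the structure theorem (in particular ${\rm In}(\mathcal{M}_k^*)={\rm Out}(\mathcal{M}_k^*)=0$, which you should cite explicitly rather than attributing to Theorem~\ref{thm:MainTh0} alone), whereas the paper's cycle argument uses only Lemma~\ref{lem:neighbor} and Corollary~\ref{thm:MainLex}, Property~\ref{enu:CorItem5}. Both are correct; yours is shorter and more transparent once the structure is in hand, while the paper's is more self-contained and would still work with a weaker structural input.
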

\vspace{-1mm}
The proof of the theorem is provided in the Appendix.

\textbf{Analysis and Discussion}. This theorem states that there is a fair lex-optimal allocation, where every node $i\in\mathcal{N}$ serves its neighbors $j\in\mathcal{D}_i$ with a resource $d_{ij}$, so as to have a constant and equal exchange ratio $d_{ij}/d_{ji}$ with all of them. Therefore, the lex-optimal allocation is an exchange equilibrium, and, additionally, any possible exchange equilibrium is also a lex-optimal allocation. In other words, the competitive interactions of rational users embedded in a graph, lead to the same allocation point that a central designer would have selected for such a system.

\uline{Dynamic Model}. \rev{An important aspect to notice it that this framework can capture both models where infinitely divisible resources are exchanged among users with different preferences, and also dynamic settings where users exchange indivisible resources over time, exploiting their diverse resource availability. To make the latter case more clear, consider a dynamic resource exchange system which operates in the continuous time domain}. Every node $i\in\mathcal{N}$ creates service opportunities for its neighbors (or, \emph{tokens}\footnote{\small{These are 0-1 token allocation decisions: whenever a user has an idle resource, e.g., an amount of unused bandwidth or energy, it can allocate it to one of its neighbors.}}) according to a Poisson process with possibly different rate $\lambda_{i}>0$. A meaningful strategy from the perspective of the nodes is the following: each node $i$ allocates a token generated at time $t$ to its neighbor that has, until then, given to $i$ the largest number of service tokens (per received token from $i$).

A rational user, with no information about the graph and the nodes' endowments, is reasonable to expect that this strategy can increase its benefit. Besides, this type of best response policies have been considered before, e.g., for P2P networks \cite{zhang-proportional} where it was shown that they converge to a steady state. In our case however, the scheme is decentralized and totally asynchronous. Interestingly, numerical results in Sec. \ref{sec:Numerical-Results} indicate that such myopic policies do converge to a steady state which moreover satisfies Theorem \ref{lem:reccip1}. This means that this dynamic model has a steady state that asymptotically coincides with the respective static model, and the equilibrium can be found if we set $D_i=\lambda_i$, $\forall\,i\in\mathcal{N}$. This is very important as it reveals that the results of this work do not apply only for the above static models, but also characterize the steady state allocations and equilibriums of more dynamic systems, where idle resources or service opportunities are created and allocated by each node asynchronously, greedily, and with no global network information (i.e., beyond the one hop neighbors).

\section{Lex-optimal Algorithms}\label{sec:algorithms}

In this section, we provide a polynomial (in $|\mathcal{N}|$) time algorithm that finds the lex-optimal allocation, and the respective exchange ratio vectors. The proposed algorithm uses the idea of max-min programming algorithm  proposed in \cite{boudecfairness07} and takes advantage of the structure of lex-optimal exchange ratio vector described in Theorem \ref{thm:MainTh0} to improve performance. The algorithm is based on the following result.
%\vspace{-1mm}
\begin{lemma}\label{lem:MaxMin}
Let $\bar{\bm{d}}\in\mathbb{D}$ and $K\geq2$. If the set $\bar{{\cal L}}_{1}$ is independent and
\begin{equation}
\sum_{i\in\bar{{\cal L}}_{1}}\bar{r}_{i}=\sum_{i\in{\cal N}\left(\bar{{\cal L}}_{1}\right)}D_{i},\label{eq:AllFromD}
\end{equation}
then:
\begin{enumerate}
\item \label{enu:ItemH1} For any allocation $\hat{\bm{d}}$ that solves the problem
\begin{equation}
\underset{\bm{d}\in\mathbb{D}}{{\rm maximize}}\min_{j\in{\cal N}}\frac{r_{j}}{d_{j}}\label{eq:maxmmin}
\end{equation}
it holds $\bar{{\cal L}}_{1}\subseteq\widehat{{\cal L}}_{1}$ and $\bar{l}_{1}=\hat{l}_{1}$.
\item \label{enu:ItemH2} The set $\bar{\mathcal{L}}_{1}$ coincides with the respective set of the lex-optimal ratio vector, i.e., $\bar{{\cal L}}_{1}={\cal L}_{1}^{*}$ and $\bar{l}_{1}=l_{1}^{*}$.
\end{enumerate}
\end{lemma}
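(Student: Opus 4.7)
My plan is to distill both items from a single key estimate driven by the independence of $\bar{\mathcal{L}}_1$ together with the balance equation (\ref{eq:AllFromD}), and then exploit the symmetry between $\bar{\bm{d}}$ and the lex-optimal $\bm{d}^*$ (which, by Theorem \ref{thm:MainTh0}, also satisfies the lemma's hypothesis) to close Item~2.

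First I would show that $\bar{l}_1$ is precisely the optimal value of the max-min problem (\ref{eq:maxmmin}). Because $\bar{\mathcal{L}}_1$ is independent, for any allocation $\bm{d}\in\mathbb{D}$ the in-flow into $\bar{\mathcal{L}}_1$ is bounded by the endowment of its neighborhood, i.e., $\sum_{i\in\bar{\mathcal{L}}_1} r_i = {\rm In}(\bar{\mathcal{L}}_1)\le \sum_{j\in\mathcal{N}(\bar{\mathcal{L}}_1)} D_j$. Combined with (\ref{eq:AllFromD}) evaluated at $\bar{\bm{d}}$ this yields
$$
\bar{l}_1 \;=\; \frac{\sum_{j\in\mathcal{N}(\bar{\mathcal{L}}_1)} D_j}{\sum_{i\in\bar{\mathcal{L}}_1} D_i}, \qquad \min_{k\in\mathcal{N}}\rho_k \;\le\; \frac{\sum_{i\in\bar{\mathcal{L}}_1} r_i}{\sum_{i\in\bar{\mathcal{L}}_1} D_i}\;\le\;\bar{l}_1
$$
for every $\bm{d}\in\mathbb{D}$, while $\bar{\bm{d}}$ itself attains the bound. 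This already settles Item~1: any optimal $\hat{\bm{d}}$ must have $\hat{l}_1=\bar{l}_1$, and equality in the chain above forces $\hat{\rho}_i=\hat{l}_1$ for every $i\in\bar{\mathcal{L}}_1$, hence $\bar{\mathcal{L}}_1\subseteq\hat{\mathcal{L}}_1$.

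For Item~2 I would first note that the lex-optimal $\bm{d}^*$ is a particular max-min solution, so Item~1 immediately yields $\bar{\mathcal{L}}_1\subseteq\mathcal{L}_1^*$ and $\bar{l}_1=l_1^*$. The reverse inclusion follows from a symmetry observation: Theorem \ref{thm:MainTh0} (Properties \ref{enu:MainTh0Item1}, \ref{enu:MainTh0Item2}, \ref{enu:MainTh0Item3} at $k=1$, together with $\mathcal{Q}_1=\mathcal{N}$) guarantees that $\mathcal{L}_1^*$ is independent and $\sum_{i\in\mathcal{L}_1^*} r_i^* = \sum_{j\in\mathcal{L}_K^*} D_j = \sum_{j\in\mathcal{N}(\mathcal{L}_1^*)} D_j$. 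Thus $\bm{d}^*$ satisfies the same hypothesis as $\bar{\bm{d}}$, so I can re-run Item~1 with $\bm{d}^*$ playing the role of $\bar{\bm{d}}$. Since $\bar{\bm{d}}$ is itself max-min (its minimum ratio equals $l_1^*$), taking $\hat{\bm{d}}=\bar{\bm{d}}$ in that re-run gives $\mathcal{L}_1^*\subseteq\bar{\mathcal{L}}_1$, producing the desired equality.

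The main subtlety will be justifying the symmetry step, which requires $K^*\ge 2$ so that Theorem \ref{thm:MainTh0} is actually applicable. This is where I would rely on Lemma \ref{lem:levels}: the assumption $K\ge 2$ forces $\bar{l}_1<1$, and the chain above applied to $\bm{d}=\bm{d}^*$ then gives $l_1^*\le\bar{l}_1<1$, which by Lemma \ref{lem:levels} rules out $K^*=1$. Once that is dispatched, the rest is careful bookkeeping of the symmetric role played by the hypothesis under swapping $\bar{\bm{d}}$ and $\bm{d}^*$.
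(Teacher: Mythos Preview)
Your argument is correct. For Item~\ref{enu:ItemH1} you and the paper run essentially the same computation: independence of $\bar{\mathcal L}_1$ bounds the inflow by $\sum_{j\in\mathcal N(\bar{\mathcal L}_1)}D_j$, and hypothesis~(\ref{eq:AllFromD}) turns that bound into the exact value $\bar l_1$ of the max-min problem; you package it as a clean chain of inequalities where the paper argues by contradiction, but the content is identical.

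For Item~\ref{enu:ItemH2} the routes genuinely diverge. The paper's proof is a one-line lex-optimality argument: since any lex-optimal $\hat{\bm d}$ solves (\ref{eq:maxmmin}), Item~\ref{enu:ItemH1} gives $\bar{\mathcal L}_1\subseteq\mathcal L_1^*$ and $\bar l_1=l_1^*$, and if the inclusion were strict then the sorted ratio vector of $\bar{\bm d}$ would have strictly fewer entries equal to $l_1^*$ at the bottom, hence $\bar{\bm\rho}\succ\bm\rho^*$, contradicting lex-optimality. This uses nothing beyond the definition of lexicographic order. Your route instead invokes Theorem~\ref{thm:MainTh0} to certify that $\bm d^*$ itself satisfies the lemma's hypothesis, then re-applies Item~\ref{enu:ItemH1} symmetrically with $\hat{\bm d}=\bar{\bm d}$ (which is legitimate since you already showed $\bar{\bm d}$ is a max-min optimizer). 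This is sound and there is no circularity---Theorem~\ref{thm:MainTh0} is established independently in Section~\ref{sec:Central-design}---but it brings in heavier machinery (the full structural theorem plus the $K^*\ge2$ check via Lemma~\ref{lem:levels}) where the paper gets by with a direct counting-of-minimum-entries observation. The paper's argument is shorter and more self-contained; yours has the virtue of making explicit that both $\bar{\bm d}$ and $\bm d^*$ sit on equal footing as max-min optimizers satisfying the same neighborhood-balance condition.
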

%\vspace{-1mm}
\begin{proof}
$1)\,$ Since $\hat{\boldsymbol{d}}$ solves (\ref{eq:maxmmin}) it holds $\hat{l}_{1}\geq\bar{l}_{1}$ and hence $\frac{\hat{r}_{i}}{D_{i}}\geq\hat{l}_{1}\geq\bar{l}_{1}=\frac{\bar{r}_{i}}{D_{i}},\ i\in\bar{{\cal L}_{1}}$
i.e.,
\begin{equation}
\hat{r}_{i}\geq\bar{r}_{i},\ \ i\in\bar{{\cal L}}_{1}.\label{eq:ineqrrate}
\end{equation}
We will show next that equality holds in (\ref{eq:ineqrrate}) which implies that $\bar{{\cal L}}_{1}\subseteq\widehat{{\cal L}}_{1}$. To see this notice that if strict inequality holds for at least one $i\in\bar{{\cal L}_{1}}$ then
\begin{equation}
\sum_{i{\cal \in\bar{L}}_{1}}\hat{r}_{i}>\sum_{i\in\bar{{\cal L}}_{1}}\bar{r}_{i}.\label{eq:RateIneq-1}
\end{equation}
But since $\bar{{\cal L}}_{1}$ is an independent set, we have
\begin{equation}
\sum_{i{\cal \in\bar{L}}_{1}}\hat{r}_{i}={\rm \widehat{In}}\left(\bar{{\cal L}}_{1}\right) \leq\sum_{i\in{\cal N}\left(\bar{{\cal L}}_{1}\right)}D_{i}=\sum_{i\in\bar{{\cal L}}_{1}}\bar{r}_{i}
\end{equation}
where the inequality holds by definition of inflow, and the last equality by assumption. This result contradicts (\ref{eq:RateIneq-1}).

$2)\,$ Since any lex-optimal allocation $\hat{\boldsymbol{d}}$ solves (\ref{eq:maxmmin}), we have $\bar{{\cal L}}_{1}\subseteq\hat{{\cal L}}_{1}={\cal L}_{1}^{*}.$
If $\bar{{\cal L}}_{1}$ were a strict subset of $\hat{{\cal L}_{1}}$, then $\bar{\boldsymbol{d}}$ would be lexicographically better that $\hat{\boldsymbol{d}},$ a contradiction. $\blacksquare$
\end{proof}

Algorithm $1$ provides the details. Recall the definition of graphs $G_{{\cal Q}_{k}}=\left({\cal Q}_{k},{\cal E}_{{\cal Q}_{k}}\right)$ used in Theorem \ref{thm:MainTh0}. The number of iterations of the algorithm is equal to the number of sets ${\cal Q}_{k}$, i.e., $\lceil{K^{*}/2}\rceil$. Since each of the level sets contains at least 2 nodes, it holds $K^{*}\leq N/2$, and hence the number of iterations is at most $\lceil{N/4}\rceil$.

In Step \ref{enu:solve_max_min}, $l_{k}^{*}$ is computed as the optimal value of optimization problem (1.1.), as ensured by Lemma \ref{lem:MaxMin}.
The latter optimization problem can be transformed to a linear programming problem and hence can be solved in polynomial time. Note that the dimensionality
of the problem is reduced at each iteration. If the conditions in Steps \ref{enu:if--stop} and \ref{enu:is-stop1} are satisfied, then the lex-optimal allocation
has been determined on all links and the algorithm terminates.

The implementation and polynomial complexity of Step \ref{enu:Find_L1} will be discussed shortly. This step determines the set ${\cal L}_{k}^{*}$,
and hence ${\cal L}_{K^{*}-k+1}^{*}={\cal N}\left({\cal L}_{k}^{*}\right)$ and $l_{K^{*}-k+1}^{*}=1/l_{k}^{*}$. Also, at the exit from this step,
the allocated resources of all outgoing links from nodes in ${\cal L}_{K^{*}}^{*}$ to nodes in ${\cal N}-\left({\cal L}_{K^{*}}^{*}\cup{\cal L}_{1}^{*}\right)$,
will be zero and the allocated resources of all outgoing links from nodes in ${\cal L}_{K^{*}}^{*}$ to nodes in ${\cal L}_{1}^{*}$ will be determined.

%\vspace{-1mm}
\begin{algorithm}
\nl $k\leftarrow1$; \\%
\nl \While{1}{
\nl \label{enu:solve_max_min} Find $\bm{\hat{r}}$ and $\hat{\bm{d}}$ solving:
$\underset{\bm{d}\in\mathbb{D}_{{\cal Q}_{k}}}{{\rm maximize}}\min_{j\in{\cal N}_{{\cal Q}_{k}}}\frac{r_{j}}{D_{j}};$ (1.1) \\% \label{eq:maxminopt} \\%
\nl Set $l_{{\cal Q}_{k},1}^{*}$ to the value of the solution to (1.1.) \\% (\ref{eq:maxminopt}) \\%
%\nl \label{enu:if--stop}\If{ $\left(\hat{r}_{i}/D_{i}=1\ i\in{\cal Q}_{k}\right)$}{$K^{*}=k$; \emph{Exit}; } \\%
\nl \label{enu:if--stop}\textbf{If} $\left(\hat{r}_{i}/D_{i}=1\ i\in{\cal Q}_{k}\right)\,$ \textbf{then} $\,K^{*}=k$; \uline{Exit};  \\%
\nl \label{enu:Find_L1}Find the set ${\cal L}_{{\cal Q}_{k},1}^{*}$; \\ %
\nl Determine set ${\cal L}_{{\cal Q}_{k},2}^{*}={\cal N}\left({\cal L}_{{\cal Q}_{k},1}^{*}\right)$
and level value $l_{{\cal Q}_{k},2}^{*}=1/l_{{\cal Q}_{k},1}^{*}$ \\%
\nl \label{enu:ZeroAlloc} $d_{ji}\leftarrow 0$, $\forall$  $(j,i)$ $i\in{\cal L}_{{\cal Q}_{k},2}^{*}$, $j\in{\cal Q}_{{\cal Q}_{k}}$ - $\left({\cal L}_{{\cal Q}_{k},1}^{*}\cup{\cal L}_{{\cal Q}_{k},2}^{*}\right)$; \\%
\nl \label{enu:FindRates} Find lex-optimal allocations on links $\left(i,j\right),\ i\in{\cal L}_{{\cal Q}_{k},1}^{*},\ j\in{\cal L}_{{\cal Q}_{k},2}^{*}$; \\%
%\nl \label{enu:is-stop1} \If{$\left({\cal L}_{{\cal Q}_{k},1}^{*}\cup{\cal L}_{{\cal Q}_{k},2}^{*}={\cal Q}_{k}\right)$}{$K^{*}=k;$ \emph{Exit};}\\%
\nl \label{enu:is-stop1} \textbf{If} $\left({\cal L}_{{\cal Q}_{k},1}^{*}\cup{\cal L}_{{\cal Q}_{k},2}^{*}={\cal Q}_{k}\right)\,$ \textbf{then} {$\,K^{*}=k;$ \uline{Exit};}\\%
\nl $k\leftarrow k+1$; }%
\caption{Finding the Lex-optimal allocation}
\end{algorithm}

Step \ref{enu:ZeroAlloc} sets to zero all allocations of incoming links from nodes in ${\cal Q}_{k}-\left({\cal L}_{{\cal Q}_{k},1}^{*}\cup{\cal L}_{{\cal Q}_{k},2}^{*}\right)$
to nodes in ${\cal L}_{{\cal Q}_{k},2}^{*}$, as is required by Theorem \ref{thm:MainTh0}. Step \ref{enu:FindRates} determines allocations $d_{ij}^{*},\ i\in{\cal L}_{k}^{*},\ j\in{\cal L}_{K^{*}-k+1}^{*}$. Since it is known by Theorem \ref{thm:MainTh0} that
\[
\sum_{j\in{\cal N}_{i}}d_{ji}^{*}=1/l_{k}^{*},\ i\in{\cal L}_{K^{*}-k+1}^{*},
\]
this steps is equivalent to finding a feasible solution to a linear programming problem and hence it takes polynomial time to execute.

%\vspace{-1mm}
\begin{algorithm}
\nl ${\cal L}=\widehat{{\cal L}}_{1}$; $r_{ij}=\widehat{r}_{ij}$; /{*} \emph{on exit ${\cal L}={\cal L}_{1}^{*}*/$} \\%
\nl \label{enu:While1}\While{$\exists\,(i,j)$ where $i,j\in{\cal L}$
\textbf{and $d_{ij_{1}}>0$ }for some $j_{1}\in{\cal N}\left({\cal L}\right)$}{
\nl \label{enu:Reallocate-rate}Reallocate resource from link $(i,j_{1})$ to link $(i,j)$ ensuring that with the new allocation $\min\{r_{j}/D_{i},\ r_{j_{1}}/D_{i}\}>l_{1}^{*}$; \\%
\nl Set ${\cal L}\leftarrow{\cal L}-\{j\}$; } /{*}\emph{on exit the set ${\cal L}$ is independent}{*}/ \\%\\%
\nl \label{enu:While2} \While{$\exists\,(i,j_{1})$, $i\in{\cal N}\left({\cal L}\right),\ j_{1}\in{\cal N}-{\cal L}$ with $r_{ij_{1}}>0,$}{
\nl Reallocate resource from node $j_{1}$ to a node $j$ in ${\cal N}\left(i\right)$ ensuring that with the new allocation $\min\{r_{j}/D_{i},\ r_{j_{1}}/D_{i}\}>l_{1}^{*}$; \\%
\nl Set ${\cal L}\leftarrow{\cal L}-\{j\}$; \emph{/{*} on exit, set ${\cal L}$ satisfies (\ref{eq:AllFromD}) {*}/}%\\%
\caption{Finding the set $\mathcal{L}_{1}^{*}$}}

\end{algorithm}
%\vspace{-1mm}

It remains to show that Step \ref{enu:Find_L1} has polynomial complexity. According to Lemma \ref{lem:MaxMin}, the solution
to (\ref{eq:maxmmin}) determines $l_{1}^{*}<1$ and in general provides a solution $\widehat{{\cal L}}_{1}$ which is a superset of ${\cal L}_{1}^{*}$.
Furthermore, if by reallocating some of the link resources $\widehat{d}_{ij}$ we are able to create an allocation $\bar{\bm{d}}$ such that (i) the
set $\bar{{\cal L}_{1}}$ is independent, and (ii) the relation $\sum_{i\in\bar{{\cal L}}_{1}}\bar{r}_{i}=\sum_{i\in{\cal N}\left(\bar{{\cal L}}_{1}\right)}D_{i}$ holds,
then it will be ${\cal L}_{1}^{*}=\bar{{\cal L}}_{1}$.

The resource reallocation is described in Algorithm 2. There are two iteration loops. First, starting from the set ${\cal L}=\widehat{{\cal L}}_{1}$, if there
is a link $(i,j)$ such that $i,j\in{\cal L}$ then we select a node $j_{1}\in{\cal N}\left({\cal L}\right)$, with $d_{ij_{1}}>0$, and  transfer resource from link $(i,j_{1})$
to the link $(i,j)$. This selection is always possible since otherwise the condition ``for all links $(i,j)$ such that $i,j\in{\cal L}$ there is no node
$j_{1}\in{\cal N}\left({\cal L}\right)$, with $d_{ij_{1}}>0$'' would hold; however, this implies that $l_{1}^{*}=1$ which is excluded because at this point
we have $K^{*}\geq2$. The transfer of resource from$(i,j_{1})$ to link $(i,j)$ in Step \ref{enu:Reallocate-rate} of the algorithm ensures that the received resource
ratios of nodes $j$ and $j_{1}$ are larger than $l_{1}^{*}$ and hence $j$ necessarily does not belong to ${\cal L}_{1}^{*}.$ Hence on exit from the while loop
in Step \ref{enu:While1} the set ${\cal L}$ is independent. However, in order to ensure equality to ${\cal L}_{1}^{*}$ we may need to further modify ${\cal L}$
to ensure that the condition (\ref{eq:AllFromD}) holds. This is done in the second while loop that starts at Step \ref{enu:While2}. Also, at the exit from the
algorithm, as a result of this reallocation process, the allocated resources of all outgoing links from nodes
in ${\cal L}_{K^{*}}^{*}$ to nodes in ${\cal N}-\left({\cal L}_{K^{*}}^{*}\cup{\cal L}_{1}^{*}\right)$ will be zero and, the allocated resources of all
outgoing links from nodes in ${\cal L}_{K^{*}}^{*}$ to nodes in ${\cal L}_{1}^{*}$ will be determined. As is clear from the above description, Algorithm
$1$ and $2$ take polynomial time to execute.

%%%=================== 
%\input{Dynamic7.tex}
\section{Numerical Examples}\label{sec:Numerical-Results}

In this section, we analyze representative numerical examples to shed light on the above results. Consider first the network of Fig. $\ref{fig:1st-example-6nodes}$ which has $6$ nodes. Solid lines represent the physical connections of each node and dotted arrows indicate resource allocation. Next to each node $i$ we depict its resource endowment. At the lex-optimal point, we have $K^{*}=3$ levels with 3 node sets $\mathcal{L}_{1}^{*}={\{1,6\}},\,\mathcal{L}_{2}^{*}={\{3,4\}},\;\mathcal{L}_{3}^{*}={\{2,5\}}$ which are marked with different colors.

\begin{figure}%[H]
%\vspace{-1mm}
\centering
\includegraphics[scale=0.47]{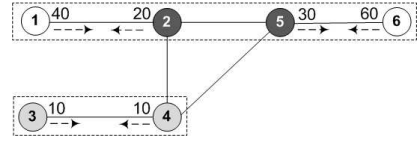}
\caption{\small{A network with 6 nodes that create 2 groups, each one marked with the dotted-line rectangle. There are 3 different levels of exchange ratios. The color of each node is analogous to its exchange ratio value (increasing from white to black colour). The received resources are $r_{1}^{*}=20,$ $r_{2}^{*}=40,$ $r_{3}^{*}=10,$ $r_{4}^{*}=10$, $r_{5}^{*}=60,$ and $r_{6}^{*}=30$. The exchange ratios for the nodes belonging to each set are $l_{1}^{*}=0.5$, $l_{2}^{*}=1$, and $l_{3}^{*}=2$ respectively.}}\label{fig:1st-example-6nodes}
%\vspace{-1mm}
\end{figure}

Let us now verify the properties that the lex-optimal allocation should have according to Theorem $\ref{thm:MainTh0}$. First, notice that set $\mathcal{L}_{1}^{*}$ is independent in graph $G$. Moreover, all the neighbors of nodes in set $\mathcal{L}_{3}^{*}$, i.e., nodes $2$ and $5$, belong in $\mathcal{L}_{1}^{*}$. Although nodes in $\mathcal{L}_{3}^{*}$ are physically connected, they only allocate resource to nodes in $\mathcal{L}_{1}^{*}$ and it holds $\sum_{i\in\mathcal{L}_{3}^{*}}D_{i}=\sum_{i\in l_{1}}r_{i}=20+30$. Moreover, the highest and the lowest levels satisfy the condition $l_{1}^{*}l_{3}^{*}=1$.

Similarly, we can verify that the structure of the lex-optimal allocation satisfies Corollary $\ref{thm:MainLex}$. The nodes are partitioned into $2$ disjoint groups $\mathcal{M}_{1}^{*}=\mathcal{L}_{1}^{*}\cup\mathcal{L}_{3}^{*}$ and $\mathcal{M}_{2}^{*}=\mathcal{L}_{2}^{*}$, each one containing nodes with at most two levels. Also, the nodes in $\mathcal{L}_{1}^{*}$ are connected only to nodes in the set $\mathcal{L}_{3}^{*}$, and it is $l_{2}^{*}=1$. Finally, the conditions of Theorem $\ref{lem:reccip1}$ are satisfied. For example, node $2$ allocates resource only to node $1$, with $l_{k(1)}=1/l_{k(2)}$, and not to node $4$ since it is $l_{k(4)}=1>0.5=l_{k(\mathcal{D}_{2})}$, where $l_{k(\mathcal{D}_{2})}=l_{k(1)}$.

For the example of Fig. $\ref{fig:3rd-example-13nodes}$ we used a network with $13$ nodes that yields $K^{*}=6$ levels, with $l_{1}^{*}=0.25$, $l_{2}^{*}=0.43$, $l_{3}^{*}=0.77$, $l_{4}^{*}=2.34$, $l_{5}^{*}=1.3$, and $l_{6}^{*}=4$. The sets are $\mathcal{L}_{1}^{*}=\{12,\,13\}$, $\mathcal{L}_{2}^{*}=\{4,\,6,\,8,\,10\}$, $\mathcal{L}_{3}^{*}=\{2\}$, $\mathcal{L}_{4}^{*}=\{1\}$, $\mathcal{L}_{5}^{*}=\{3,\,5,\,7,\,9\}$, and $\mathcal{L}_{6}^{*}=\{11\}$. Sets $\mathcal{L}_{1}^{*}$, $\mathcal{L}_{2}^{*}$, and $\mathcal{L}_{3}^{*}$ are independent in graphs $G_{\mathcal{Q}_{1}},$ $G_{\mathcal{Q}_{2}}$, $G_{\mathcal{Q}_{3}}$, and the set $\mathcal{L}_{1}^{*}\cup\mathcal{L}_{2}^{*}\cup\mathcal{L}_{3}^{*}$ is independent in $G$. Moreover, it is $\mathcal{L}_{6}^{*}=\mathcal{N}_{\mathcal{Q}_{1}}(\mathcal{L}_{1}^{*})$, $\mathcal{L}_{5}^{*}=\mathcal{N}_{\mathcal{Q}_{2}}(\mathcal{L}_{2}^{*})$ and $\mathcal{L}_{4}^{*}=\mathcal{N}_{\mathcal{Q}_{3}}(\mathcal{L}_{3}^{*})$, and holds $l_{6}^{*}l_{1}^{*}=l_{5}^{*}l_{2}^{*}=l_{4}^{*}l_{3}^{*}=1$. In this example we have 3 disjoint groups $\mathcal{M}_{1}^{*}=\mathcal{L}_{1}^{*}\cup\mathcal{L}_{6}^{*}$, $\mathcal{M}_{2}^{*}=\mathcal{L}_{2}^{*}\cup\mathcal{L}_{5}^{*}$, and $\mathcal{M}_{3}^{*}=\mathcal{L}_{3}^{*}\cup\mathcal{L}_{4}^{*}$. We see that links $(10,11)$, $(5,11)$, $(1,3)$, $(1,5)$ and $(2,7)$ are redundant and can be removed without affecting the lex-optimal allocation.

\begin{figure}%[H]
%\vspace{-1mm}
\centering
\includegraphics[scale=0.47]{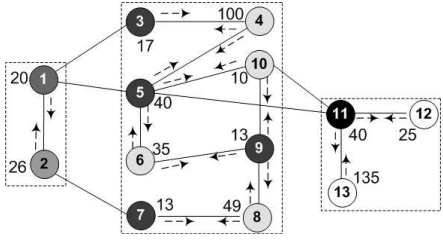}
\caption{\small{A network with $13$ nodes which create $3$ groups. Received resources are $r_{1}^{*}=26$, $r_{2}^{*}=20$, $r_{3}^{*}=39.74$, $r_{4}^{*}=42.78$, $r_{5}^{*}=93.49$, $r_{6}^{*}=14.97$, $r_{7}^{*}=30.38$, $r_{8}^{*}=20.96$, $r_{9}^{*}=30.38$, $r_{10}^{*}=4.28$, $r_{11}^{*}=160$, $r_{12}^{*}=6.25$, and $r_{13}^{*}=33.75$. \label{fig:3rd-example-13nodes}}}
%\vspace{-1mm}
\end{figure}

\begin{figure}%[H]
%\vspace{-1mm}
\centering
\includegraphics[scale=0.47]{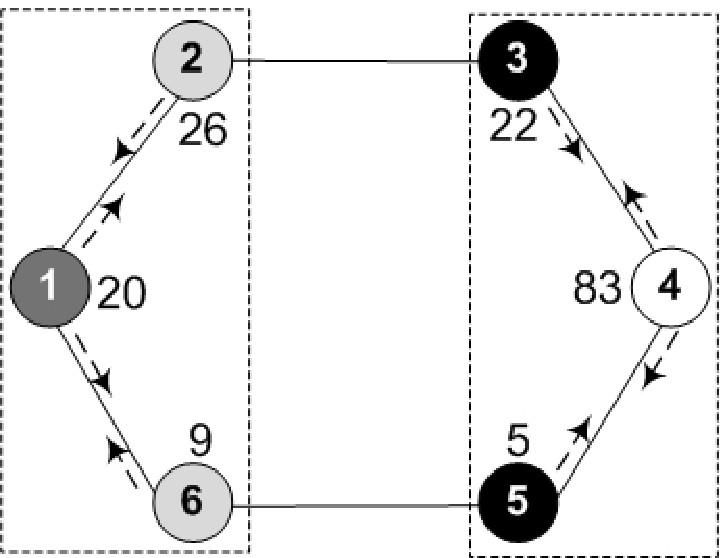}$\qquad$\includegraphics[scale=0.45]{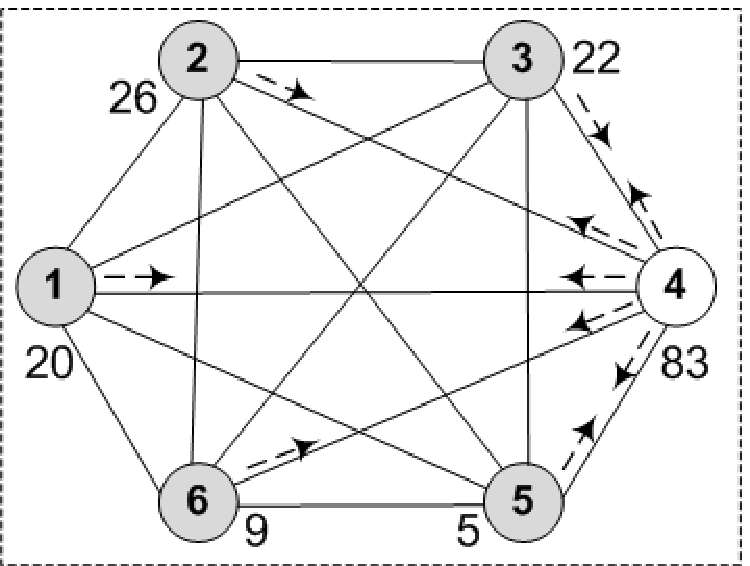}
\caption{\small{A complete graph of $6$ nodes with $1$ coalition and $2$ levels.}}\label{fig:complete-graph-6nodes}
%\vspace{-1mm}
\end{figure}

Figure $\ref{fig:complete-graph-6nodes}$ (right) depicts a complete graph with $6$ nodes, where node $i=4$ has level $l_{1}^{*}=0.988$ while the other nodes have level $l_{2}^{*}=1.012$. In general for complete graphs, from Property 6 of Corollary $\ref{thm:MainLex}$ and the fact that independent sets in such graphs contain only one node, it follows that lex-optimal allocations may have at most two levels. Moreover a complete graph has two levels iff the resource of node $i_{0}$ with the maximum endowment is larger than the sum of the resources of the rest of the nodes, and it is $\mathcal{L}_{1}=\{i_{0}\}$. On the other hand, for the respective $6$-node ring graph, the lex-optimal solution yields $2$ groups and $4$ levels.

\begin{figure}
\vspace{-1mm}
\centering
\includegraphics[scale=0.42]{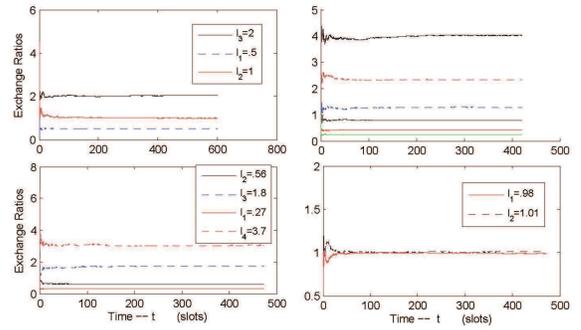}
\caption{\small{Convergence results for the dynamic and asynchronous best response strategy of nodes, for networks in Fig. \ref{fig:1st-example-6nodes} (upper left), Fig. \ref{fig:3rd-example-13nodes} (upper right), Fig. \ref{fig:complete-graph-6nodes}-ring graph (lower left), and Fig. \ref{fig:complete-graph-6nodes}-complete graph (lower right). Horizontal axis is time, and vertical axis shows the exchange ratio level values.}}
\label{figure:convergence}
\vspace{-1mm}
\end{figure}

Finally, we show that the naive best response strategy of the nodes in any graph-constrained dynamic resource exchange market, converges to a steady state point. Moreover, the latter coincides with the lex-optimal point of a static market in which every node has an average resource that is equal to the respective token generation rate of the dynamic market, i.e., $\lambda_i=D_i,\,\forall\,i\in\mathcal{N}$. In Figure \ref{figure:convergence} we present the quite fast convergence (each slot corresponds to the creation of a service opportunity) of this scheme for the above four networks, where we see that the system converges to the expected ratio values.

\section{Related Works}\label{sec:Related}

The model we consider is generic and representative for many communication or economic networks. For example, such models arise in graphical economies \cite{KearnsGraphEcon2004}, \cite{KearnsEconSocial2004}, which extend the classical Walrasian equilibrium \cite{ColellWhinstonGreenBook1995} and Arrow - Debreu analysis \cite{arrow-debreu} by imposing graph constraints on the subsets of buyers and sellers that can trade. However, our model does not presume any type of money transfers, i.e., there is no budget constraints (as in typical exchange economies) and the nodes do not value money (as in market games) \cite{osborne}. \rev{Similar bartering models have been studied for housing markets \cite{shapley-scarf} or timeshare exchanges \cite{krishna06}, where the focus has been to prove existence of equilibriums.}

Here, we fully characterize the equilibriums, relate them to the max-min fair solution, and study how they are affected by the graph. We also prove that these exchange equilibriums lie within the core of the respective NTU game. Although this relation is known for market games and the respective coalitional games \cite{osborne}, to the best of our knowledge this is the first result for NTU coalitional graph-constrained games without money. This property is also related to \emph{strong} Nash equilibriums (see \cite{strong-nash} and references therein), for which however there are no general existence results, nor they are appropriate for this competitive framework. Finally, \cite{Herings2000}, and \cite{JacksonWolinsky1996} studied also core solutions of coalitional graph games where the nodes are allowed to create new or severe existing connections. In our model the graph is exogenously given, e.g., based on the location of the nodes.

%Moreover, we go a further step and prove that these equilibriums are strongly stable in the sense that there are no deviations even if they improve only one user's exchange ratio.

The problem of enabling cooperation in networks (or, networked systems) is of paramount importance and has been considered in different contexts, such as routing in ad hoc networks \cite{hubaux-coop}, WiFi sharing models \cite{efstathiou}, mesh networks \cite{wu-mesh}, and P2P overlays \cite{RJohariToNBilateral2011}. This is a problem that gains increasing interest in communication networks \cite{Sofia-UPN}, \cite{bewifi}, \cite{FON}, \cite{confine}, and in social and economic networks as well \cite{collaborativeconsumption}, \cite{KearnsGraphEcon2004}, \cite{JacksonWolinsky1996}. Unlike previous works, our model does not presume any kind of infrastructure, e.g., for transaction or reputation systems. Instead, we show numerically that asynchronous best response algorithms, with no information about the graph and resource endowments, converge to a fair and robust (i.e., in the core) exchange equilibrium.

Previous works e.g., \cite{zhang-proportional} have studied similar mechanisms for P2P file sharing systems, without however characterizing its properties and relation to competitive and coalitional equilibriums. The max-min fair criterion is natural for this setting, as it is defined with respect to each user's contribution. Also, while our model is similar to previous works, e.g., see \cite{RJohariToNBilateral2011} and references therein, our analysis provides novel insights for the structure and properties of the resulting equilibriums, and we also propose polynomial-time algorithms for their calculations. These algorithms can be also used for deriving the competitive equilibriums in graphical economies \cite{KearnsEconSocial2004}, \cite{KearnsGraphEcon2004}. 

\section{Discussion and Conclusions}\label{sec:Conclusions}

We considered a service (or, resource) exchange model among self-interested nodes embedded in a graph that prescribes their possible interactions. This is a key network model that represents Internet sharing communities \cite{FON}, \cite{Sofia-UPN}, \cite{bewifi}, \cite{confine}, P2P file sharing \cite{RJohariToNBilateral2011}, \cite{ioannidis-peer-assisted}, energy sharing networks \cite{saad-smart}, graphical economies \cite{KearnsGraphEcon2004}, \cite{KearnsEconSocial2004} and many online resource sharing platforms \cite{collaborativeconsumption}, \cite{neighborgood}, \cite{adalbdal}, \cite{swapit}, \cite{homeexchange}. \rev{Such systems can be dynamic where the users share their resource surpluses that they have in different (and diverse) time instances, or static where users having different resource preferences barter with each other so as to acquire the resources they value higher}. Despite the large interest of the research community and the previous contributions for specific related models (e.g., for P2P overlays), the fundamental properties of these systems remain unexplored.

We showed that the max-min fair policy exhibits a very rich structure, and characterized its properties for any given graph and node resource endowments. More importantly, we proved that this policy coincides with the exchange equilibrium of the respective competitive game, and lies in the core of the respective NTU coalitional game. This important result reveals that there is a unifying approach that solves the resource allocation problem for graph-constrained systems (or, economies), for different node behaviors. In other words, we can apply the max-min fair criterion, that has been extensively used for load balancing in centralized communication networks (e.g., see \cite{nace-tutorial} and references therein), to service exchange models with autonomous and selfish nodes.

Finally, our findings contribute to the game theoretic literature since the connection between the competitive equilibrium for this graph-constrained model and the core of the respective NTU coalitional game is a new finding. We also proved the more strict \emph{strong stability} property. A special aspect of our model is that we do not consider side payments (money), not even in the form of budget constraints. This renders the analysis significantly different than most of the previous models \cite{osborne}, \cite{collaborativeconsumption}, \cite{zhang-proportional}, yet very appropriate for the considered problem. We believe that these results open many fascinating directions for future work. Among them, it is important to relax the common assumption of large demand that exceeds resource availability for the users (considered also in \cite{RJohariToNBilateral2011}, \cite{zhang-proportional}, \cite{saad-smart}, \cite{KearnsGraphEcon2004}, \cite{KearnsEconSocial2004}, \cite{myerson-gametheory-book}, \cite{efstathiou}), and provide a formal proof for the convergence of the dynamic asynchronous user interaction model.

% to the steady state point having these interesting properties... this exact model has been studied before in p2p networks, while it captures also other technological networks..moreover, it models a variety of coco models...yet, its properties that we prove here are not ...

\section*{Acknowledgements}
\vspace{-1mm}
This research has been co-financed by the European Union (European Social Fund - ESF) and Greek national funds through the Operational Program ''Education and Lifelong Learning'' of the National Strategic Reference Framework (NSRF) - Research Funding Program: Thales, Investing in knowledge society through the European Social Fund, for the project SOCONET. 
%\vspace{-2mm}
\vspace{-1mm}

%  and remember to run:
% latex bibtex latex latex
% to resolve all references
%
% ACM needs 'a single self-contained file'!
%
%APPENDICES are optional
%\balancecolumns

\section*{Appendix} %\label{sec:appendix}

We provide the additional proofs for the theorems and the lemmas. %%%++%% Due to lack of space, we very kindly request the reader to access the (anonymous) online technical report \cite{appendix-online} for some more proofs related to Section \ref{sec:coalitional-subsection}.

\subsection*{Proof of Theorem \ref{lem:reccip1}}

We begin with \textbf{(i)}. Let $\bar{\boldsymbol{d}}$ be a lex-optimal allocation. Starting from $\bar{\boldsymbol{d}}$ we will construct $\boldsymbol{d}$ by a sequence of link resource reallocations. Notice that for any allocation $\boldsymbol{\hat{d}}$, it holds:
\begin{equation}
\hat{r}_{i} =\hat{l}_{\hat{k}(i)}D_{i}=\sum_{j\in{\cal N}_{i}}\left(\hat{l}_{\hat{k}(i)}\hat{d}_{ij}\right)=\sum_{j\in{\cal N}_{i}}\hat{d}_{ji}
\end{equation}
where the first equality is by definition of $\hat{l}_{\hat{k}(i)}$, the second by (\ref{eq:Allloc}), and the last by definition of $r_{i}$. Hence under any allocation $\hat{\boldsymbol{d}}$ we have the following fact.
\begin{align}
&\hat{l}_{\hat{k}(i)}\hat{d}_{ij_{1}}>\hat{d}_{j_{1}i}\mbox{ for some }j_{1}\in{\cal N}_{i},\nonumber \\
&\mbox{ iff }\hat{l}_{k(i)}\hat{d}_{ij_{2}}<\hat{d}_{j_{2}i}\mbox{ for another }j_{2}\in{\cal N}_{i}.\label{eq:iff}
\end{align}
Let ${\cal Y}_{0}$ be the set of links $\left(i,j\right)$ for which it holds $\,\bar{l}_{\bar{k}(i)}\bar{d}_{ij}>\bar{d}_{ji}$. If ${\cal Y}_{0}$
is nonempty, we will show that we can reallocate resource so that under the new allocaton $\boldsymbol{d}_{1}$, the resources the nodes receive remain the same, while for at least one of the links $(i,j)$ in ${\cal Y}_{0}$ it holds, $\bar{l}_{\bar{k}(i)}\bar{d}_{ij}=\bar{d}_{ji}$ while the rest of the inequalities still hold in their original direction. Hence the new allocation $\boldsymbol{d}_{1}$ is also lex-optimal, while ${\cal Y}_{1}={\cal Y}_{0}-\{\left(i,j\right)\}$. Proceeding in this manner we will arrive at an allocation $\boldsymbol{d}$ which is still lex-optimal but for which ${\cal Y}=\emptyset$. Based on (\ref{eq:iff}) we will then conclude that the last allocation satisfies
\begin{equation}
d_{ji}^{*}/d_{ij}^{*}=r_{i}^{*}/D_i=l_{k(i)}^{*},\, j\in \mathcal{N}_{i},
\end{equation}
as stated in \textbf{(i)}. Also, let $(i_{1},i_{0})$ be a link such that $\bar{l}_{\bar{k}(i_{1})}\bar{d}_{i_{1}i_{0}}>\bar{d}_{i_{0}i_{1}}\geq0$ (hence $\bar{l}_{\bar{k}(i_{1})}>0$). Then according to (\ref{eq:iff}) there must be a link $(i_{1},i_{2})$ such that $\bar{l}_{\bar{k}(i_{1})}\bar{d}_{i_{1}i_{2}}<\bar{d}_{i_{2}i_{1}}$ or
\[
\frac{1}{\bar{l}_{\bar{k}(i_{1})}}\bar{d}_{i_{2}i_{1}}>\bar{d}_{i_{1}i_{2}}\geq0.
\]
But due to lex optimality of $\boldsymbol{\bar{d}}$, we conclude from Corollary \ref{thm:MainLex} \textcolor{black}{Property \ref{enu:CorItem5}} that $\bar{l}_{\bar{k}(i_{2})}=\left(1/\bar{l}_{\bar{k}(i_{1})}\right)$ and hence the above becomes $\bar{l}_{\bar{k}(i_{2})}\bar{d}_{i_{2}i_{1}}>\bar{d}_{i_{1}i_{2}}$. Repeating this procedure we find a sequence of links $\left(i_{m},i_{m-1}\right),\ m=1,2,..$ for which it holds:
\begin{equation}
\bar{l}_{\bar{k}(i_{m})}\bar{d}_{i_{m}i_{m-1}}>\bar{d}_{i_{m-1}i_{m}}.\label{eq:cycle}
\end{equation}
Since the number of nodes is finite, we will eventually find a simple
(no repeated nodes) cycle that satisfies (\ref{eq:cycle}). For all
nodes $m=1,2,..M$ on this cycle, subtract resource $\delta$ from $\bar{d}_{i_{m}i_{m-1}}$
and increase by $\delta$ the resource $\bar{d}_{i_{m}i_{m+1}}.$ In addition,
we require that the following relation must be satisfied for all nodes
$m$ on the cycle,
\begin{equation}
\bar{l}_{i_{m}}(\bar{d}_{i_{m}i_{m-1}}-\delta)\geq\bar{d}_{i_{m-1}i_{m}}+\delta,\ m=1,2,...,M,\,\text{or}
\end{equation}
\[
0<\delta\leq\frac{\bar{l}_{i_{m}}\bar{d}_{i_{m}i_{m-1}}-\bar{d}_{i_{m-1}i_{m}}}{\bar{l}_{i_{m}}+1},\ m=1,2,...,M
\]
This choice of $\delta$ ensures that the increase-decrease of resource
allocation gives a new allocation and that with the new allocation
inequalities (\ref{eq:cycle}) either still hold in their original
direction or become equalities Since we have a cycle, this increase-decrease
does not alter the resource the nodes of the cycle get. Since the resources
of the rest of the nodes in the network are not changed, the resulting
allocation is still lex-optimal. We now pick
\[
\delta_{1}=\max_{m}\left\{ \frac{\bar{l}_{i_{m}}\bar{d}_{i_{m}i_{m-1}}-\bar{d}_{i_{m-1}i_{m}}}{\bar{l}_{i_{m}}+1}\right\} >0.
\]
This choice ensures that at least one of the inequalities (\ref{eq:cycle})
become equality for some node in the cycle as desired.

Next we prove \textbf{(ii)}. First, we need a useful result: if an allocation $\bm{d}$ satisfies $d_{ji}^{*}=d_{ij}^{*}l_{k(i)}^{*},\, j\in \mathcal{N}_{i}$, then for any node $j\in\mathcal{D}_i$ it holds:
\begin{equation}
l_{k(j)}>0,\,\,\text{and}\,\,l_{k(j)}=1/l_{k(i)} \label{eq:itemreceip22}
\end{equation}
To see this, note that since $j\in{\cal D}_{i},$ by definition $d_{ij}>0$, hence $l_{k(j)}=\left(r_{j}/D_{j}\right)\geq\left(d_{ij}/D_{j}\right)>0$ and since by assumption
\begin{equation}
d_{ij}=l_{k(j)}d_{ji},\label{eq:help1}
\end{equation}
it also holds $d_{ji}>0$. Next, since by assumption
\begin{equation}
d_{ji}=l_{k(i)}d_{ij},\label{eq:help2}
\end{equation}
multiplying (\ref{eq:help1}) (\ref{eq:help2}) and canceling (the nonzero) terms we have $l_{k_{i}}l_{k_{j}}=1.$ Therefore (\ref{eq:help1}) holds.

Now, notice first that any node $i$ gives resource to at least one node $j$, hence $d_{ij}>0$. Therefore, it follows from eq. (\ref{eq:itemreceip22}) that $l_{1}>0$. If under allocation $\boldsymbol{d}$ there is only one level, i.e., $K=1$, then by Proposition \ref{prop:LexK1}, it is lex-optimal. Hence we concentrate on the case $K\ge2$. We will show that allocation $\boldsymbol{d}$ satisfies the properties of Theorem \ref{thm:MainTh0} and hence, by Theorem \ref{thm:MainLexSuff}, it is lex-optimal.

Consider first $k=1$. The nodes in ${\cal L}_{1}$ constitute an independent set. To see this note that $d_{ij}=0$ for all links $(i,j)$ with $i,j\in{\cal L}_{1}$ since otherwise (i.e., $d_{ij}>0$) by eq. (\ref{eq:itemreceip22}), and the fact that $l_{k(1)}=l_{k(2)}=l_{1}$ it will follow that $l_{1}=1$, which contradicts Lemma \ref{lem:levels}.
Hence all nodes in ${\cal L}_{1}$ give their resource to nodes in higher layers. Using again eq. (\ref{eq:itemreceip22}), we conclude that all nodes in ${\cal L}_{1}$ give resource to nodes at level with value $1/l_{1}$. Now, if there is a link $(i,j)\in{\cal E}$ with $i,j\in{\cal L}_{1}$ then since $d_{ij}=0$, the condition in \textbf{(i)}, i.e., "the neighbors not receiving resource from $i$ have higher exchange ratio" (which holds according to Lemma \ref{lem:neighbor}), implies that $l_{k(j)}\geq1/l_{1}$ and since $l_{k(j)}=l_{1},$ we conclude $l_{1}\geq1$, a contradiction. Hence Item \ref{enu:MainTh0Item1} of Theorem \ref{thm:MainTh0} holds for $k=1.$

Consider now the nodes in ${\cal N}\left({\cal L}_{1}\right)$. Nodes in ${\cal N}\left({\cal L}_{1}\right)$ give resource only to nodes in ${\cal L}_{1}$. To see this, note that if node $j\in{\cal N}\left({\cal L}_{1}\right)$ were giving resource to a node $i\notin{\cal L}_{1}$, then, since there are neighbors of $j$ in ${\cal L}_{1}$ by a similar reasoning ("the neighbors not receiving resource from $i$ have higher exchange ratio", Lemma \ref{lem:neighbor}), it would hold $l_{1}\geq l_{k(i)}$, i.e., $l_{1}=l_{k(i)}$, which contradicts the fact that $i\notin{\cal L}_{1}$. Since all nodes in ${\cal N}\left({\cal L}_{1}\right)$ give resource to nodes in ${\cal L}_{1}$ it follows from (\ref{eq:itemreceip22}) that all nodes in ${\cal N}\left({\cal L}_{1}\right)$ are at the same level and $l_{k\left({\cal N}\left({\cal L}_{1}\right)\right)}=1/l_{1}$.

We claim now that for any node $i\in{\cal N}-{\cal N}\left({\cal L}_{1}\right)$ it holds, $l_{k(i)}<l_{k\left({\cal N}\left({\cal L}_{1}\right)\right)}$
which implies that ${\cal L}_{K}={\cal N}\left({\cal L}_{1}\right)$ and hence Item \ref{enu:MainTh0Item2} of Theorem \ref{thm:MainTh0} holds for $k=1.$ Indeed, assume that $l_{k(i)}\geq l_{k\left({\cal N}\left({\cal L}_{1}\right)\right)}=1/l_{1}$. Then since node $i$ gives resource to at least another neighbor node $j$, by (\ref{eq:itemreceip22}) we would have $l_{k(j)}=\left(1/l_{k(i)}\right)\leq l_{1}$ hence $l_{k(j)}=l_{1}$, i.e., $j\in{\cal L}_{1}$ which contradicts the fact that $i\in{\cal N}-{\cal N}\left({\cal L}_{1}\right)$.

The fact that Item \ref{enu:MainTh0Item2.1} of Theorem \ref{thm:MainTh0} holds for $k=1,$ follows again from (\ref{eq:itemreceip22}). Also, Item \ref{enu:MainTh0Item3} of Theorem \ref{thm:MainTh0} holds since as shown above, all nodes in ${\cal L}_{1}=$${\cal N}\left({\cal L}_{1}\right)$ give their resource to nodes in ${\cal L}_{1}$. If $K=2$, the lex- optimality of $\boldsymbol{d}$ follows from Theorem \ref{thm:MainLexSuff}. Consider next $K\geq3$. According to Lemma \ref{lem:InOutZero}, it holds
\[
{\rm In}\left({\cal L}_{k}\cup{\cal L}_{K-k+1}\right)={\rm Out}\left({\cal L}_{k}\cup{\cal L}_{K-k+1}\right)=0,
\]
 hence the restriction of $\boldsymbol{d}$ on ${\cal Q}_{2}$, $\boldsymbol{d}_{{\cal Q}_{2}}$ constitutes an allocation on $G_{{\cal Q}_{2}}$with $K_{{\cal Q}_{2}}=K-2$ levels. Moreover, since no nodes at levels ${\cal L}_{1}$ and ${\cal L}_{K}$ are in ${\cal Q}_{2},$ we have ${\cal L}_{{\cal Q}_{2},1}={\cal L}_{2}$, and ${\cal L}_{{\cal Q}_{2},K_{{\cal Q}_{2}}}={\cal L}_{K-1}$. Also, (\ref{eq:itemreceip22}) and \textbf{(ii)} continue to hold for $\boldsymbol{d}_{{\cal Q}_{2}}$ on $G_{{\cal Q}_{2}}$.
If $K=3,$ then we have $K_{{\cal Q}_{2}}=1,$ hence Item of Theorem \ref{thm:MainTh0} holds. If $K\geq4,$ we can apply now the arguments we used for $k=1$ to show that Items \ref{enu:MainTh0Item1}-\ref{enu:MainTh0Item3} hold for $k=2.$ Proceeding iteratively we show that all properties in Theorem \ref{thm:MainTh0} hold for $\boldsymbol{d}$ and hence it is lex-optimal. $\blacksquare$

\subsection*{PROOF of Theorem \ref{thm:MainLexSuff}}
Before proving Theorem \ref{thm:MainLexSuff}, we need the following lemma.

\begin{lemma}
\label{lem:InOutZero}
If an allocation $\boldsymbol{d}$ with $K\geq2$ satisfies Properties \ref{enu:MainTh0Item1}- \ref{enu:MainTh0Item3} of Theorem \ref{thm:MainTh0}, it holds for $k=1,...,\flr{K/2}$:
\begin{equation}
{\rm In}\left({\cal L}_{k}\cup{\cal L}_{K-k+1}\right)={\rm Out}\left({\cal L}_{k}\cup{\cal L}_{K-k+1}\right)=0.\label{eq:InOutZero}
\end{equation}
\end{lemma}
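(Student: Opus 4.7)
The plan is to induct on $k$, showing that at each step the pair $({\cal L}_k,{\cal L}_{K-k+1})$ forms a closed exchange block inside $G_{{\cal Q}_k}$, with all inter-level flow trapped between those two sets.

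\emph{Base case $k=1$.} By Property~1 the set ${\cal L}_1$ is independent in $G$, so every unit that ${\cal L}_1$ gives out leaves the set; by Property~2 those units can only reach ${\cal N}({\cal L}_1)={\cal L}_K$. Thus all of $\sum_{i\in{\cal L}_1}D_i$ flows from ${\cal L}_1$ into ${\cal L}_K$, and, conversely, since ${\cal L}_1$ has no neighbors outside ${\cal L}_K$, its entire received amount $\sum_{i\in{\cal L}_1}r_i$ is supplied by ${\cal L}_K$. Property~4 equates $\sum_{i\in{\cal L}_1}r_i=\sum_{i\in{\cal L}_K}D_i$, so ${\cal L}_K$ routes its full endowment into ${\cal L}_1$. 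Hence every outgoing edge from ${\cal L}_1\cup{\cal L}_K$ lands back inside the block: ${\rm Out}({\cal L}_1\cup{\cal L}_K)=0$. Finally, combining Properties~3--4 gives $\sum_{i\in{\cal L}_K}r_i=l_K\sum_{i\in{\cal L}_K}D_i=(1/l_1)\,l_1\sum_{i\in{\cal L}_1}D_i=\sum_{i\in{\cal L}_1}D_i$, so $\sum_{i\in{\cal L}_1\cup{\cal L}_K}r_i=\sum_{i\in{\cal L}_1\cup{\cal L}_K}D_i$; applied to $\mathcal{S}\leftarrow{\cal L}_1\cup{\cal L}_K$, the identity (\ref{eq:equality}) of Lemma~\ref{lem:equality} then forces ${\rm In}({\cal L}_1\cup{\cal L}_K)={\rm Out}({\cal L}_1\cup{\cal L}_K)=0$.

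\emph{Inductive step.} Assume (\ref{eq:InOutZero}) for all indices $m<k$. Then no flow crosses between $\bigcup_{m<k}({\cal L}_m\cup{\cal L}_{K-m+1})$ and ${\cal Q}_k$, so the restriction $\boldsymbol{d}_{{\cal Q}_k}=\{d_{ij}\}_{(i,j)\in{\cal E}_{{\cal Q}_k}}$ is a feasible allocation on $G_{{\cal Q}_k}$ with unchanged received values $r_i$ for $i\in{\cal Q}_k$. Inside $G_{{\cal Q}_k}$: Property~1 makes ${\cal L}_k$ independent, Property~2 identifies its open neighborhood as ${\cal L}_{K-k+1}={\cal N}_{{\cal Q}_k}({\cal L}_k)$, and Property~4 supplies the matching resource equality. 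Replaying the base-case argument verbatim inside $G_{{\cal Q}_k}$ then yields that no flow crosses between ${\cal L}_k\cup{\cal L}_{K-k+1}$ and ${\cal Q}_k-({\cal L}_k\cup{\cal L}_{K-k+1})$; combined with the inductive hypothesis (no flow crosses between ${\cal Q}_k$ and its complement in ${\cal N}$), this gives (\ref{eq:InOutZero}) for index $k$, using also (\ref{eq:InOutN1}).

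\emph{Main obstacle.} The single delicate step is the base case's tightness claim: that ${\cal L}_K$ discharges \emph{all} of $\sum_{i\in{\cal L}_K}D_i$ into ${\cal L}_1$. The argument hinges on the independence of ${\cal L}_1$, which forces its entire inflow to come from ${\cal L}_K$ (since Property~2 gives ${\cal N}({\cal L}_1)={\cal L}_K$); Property~4 then upgrades the inequality ${\rm Out}({\cal L}_K)\le\sum_{i\in{\cal L}_K}D_i$ of Lemma~\ref{lem:equality} to an equality, and tightness additionally pins the destination of this outflow to ${\cal L}_1$. Once this is in place, the balance identity (\ref{eq:equality}) converts the already-established ${\rm Out}({\cal L}_1\cup{\cal L}_K)=0$ into ${\rm In}({\cal L}_1\cup{\cal L}_K)=0$ almost mechanically, and the induction propagates to all $k\le\lfloor K/2\rfloor$ without further surprises.
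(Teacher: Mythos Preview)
Your proof is correct and follows essentially the same route as the paper: establish the $k=1$ case by using independence of ${\cal L}_1$ and ${\cal N}({\cal L}_1)={\cal L}_K$ together with Property~4 to force ${\rm Out}({\cal L}_1\cup{\cal L}_K)=0$, then the balance identity (\ref{eq:equality}) plus Property~3 to force ${\rm In}({\cal L}_1\cup{\cal L}_K)=0$, and finally induct by restricting to $G_{{\cal Q}_k}$. The only cosmetic difference is that for the ${\rm In}=0$ step the paper applies (\ref{eq:equality}) to ${\cal L}_K$ alone and multiplies two equations to obtain $l_1 l_K=1+{\rm In}({\cal L}_1\cup{\cal L}_K)/(\cdots)$, whereas you apply (\ref{eq:equality}) directly to the union ${\cal L}_1\cup{\cal L}_K$ after computing $\sum r_i=\sum D_i$; your version is slightly cleaner but logically equivalent.
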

\begin{proof}
Let $k=1$. Since by Property \ref{enu:MainTh0Item1} of Theorem \ref{thm:MainTh0} the set ${\cal L}_{1}$ is independent, we have $\sum_{i\in{\cal L}_{1}}r_{i}={\rm In}\left({\cal L}_{1}\right)$.
%\[ ----20/11
%\sum_{i\in{\cal L}_{1}}r_{i}={\rm In}\left({\cal L}_{1}\right).
%\]
Also, since only nodes in ${\cal N}_{Q_{1}}\left({\cal L}_{1}\right)={\cal N}\left({\cal L}_{1}\right)={\cal L}_{K}$ may have links with nodes in $\left({\cal L}_{1}\cup{\cal L}_{K}\right)^{c}$, we have ${\rm Out}\left({\cal L}_{K}\right)={\rm Out}\left({\cal L}_{1}\cup{\cal L}_{K}\right)+{\rm In}\left({\cal L}_{1}\right)$, and hence:
%\[ ----20/11
%{\rm Out}\left({\cal L}_{K}\right)={\rm Out}\left({\cal L}_{1}\cup{\cal L}_{K}\right)+{\rm In}\left({\cal L}_{1}\right)
%\]
%and hence,
\begin{align*}
&\sum_{i\in{\cal L}_{K}}D_{i}  \geq{\rm Out}\left({\cal L}_{K}\right)={\rm Out}\left({\cal L}_{1}\cup{\cal L}_{K}\right)+{\rm In}\left({\cal L}_{1}\right)\\
 & ={\rm Out}\left({\cal L}_{1}\cup{\cal L}_{K}\right)+\sum_{i\in{\cal L}_{1}}r_{i}={\rm Out}\left({\cal L}_{1}\cup{\cal L}_{K}\right)+\sum_{i\in{\cal L}_{K}}D_{i},
\end{align*}
where the last equality is due to Property \ref{enu:MainTh0Item3} of Th. \ref{thm:MainTh0}. The last equality implies that ${\rm Out}\left({\cal L}_{1}\cup{\cal L}_{K}\right)=0$ and:
\begin{equation}
{\rm Out}\left({\cal L}_{K}\right)=\sum_{i\in{\cal L}_{K}}D_{i}.\label{eq:InOut1}
\end{equation}
Next, we have for the nodes in ${\cal L}_{K}$:
\[
\sum_{i\in{\cal L}_{K}}r_{i}+{\rm Out}\left({\cal L}_{K}\right)=\sum_{{\cal L}_{K}}D_{i}+{\rm In}\left({\cal L}_{K}\right)
\]
Since by independence of ${\cal L}_{1}$ it holds:% ${\rm In}\left({\cal L}_{K}\right)=\sum_{i\in{\cal L}_{1}}D_{i}+{\rm In}\left({\cal L}_{1}\cup{\cal L}_{K}\right)$,
\[
{\rm In}\left({\cal L}_{K}\right)=\sum_{i\in{\cal L}_{1}}D_{i}+{\rm In}\left({\cal L}_{1}\cup{\cal L}_{K}\right),
\]
and taking into account (\ref{eq:InOut1}), we conclude from (\ref{eq:equality}) $\sum_{i\in{\cal L}_{K}}r_{i}=\sum_{i\in{\cal L}_{1}}D_{i}+{\rm In}\left({\cal L}_{k}\cup{\cal L}_{K}\right)$,
%\[ ----20/11
%\sum_{i\in{\cal L}_{K}}r_{i}=\sum_{i\in{\cal L}_{1}}D_{i}+{\rm In}\left({\cal L}_{k}\cup{\cal L}_{K}\right),
%\]
or since $r_{i}=l_{K}D_{i},\ i\in{\cal L}_{K}$, it is:
\[
l_{K}\sum_{i\in{\cal L}_{K}}D_{i}=\sum_{i\in{\cal L}_{1}}D_{i}+{\rm In}\left({\cal L}_{k}\cup{\cal N}_{Q_{k}}\left({\cal L}_{k}\right)\right).
\]
Similarly, from the equality in Property \ref{enu:MainTh0Item3} we have $l_{1}\sum_{i\in{\cal L}_{1}}D_{i}=\sum_{i\in{\cal L}_{K}}D_{i}$. Multiplying the last two equalities and rearranging terms we get:
\begin{equation}
l_{1}l_{K}=1+\frac{{\rm In}\left({\cal L}_{1}\cup{\cal L}_{K}\right)}{\left(\sum_{i\in{\cal L}_{1}}D_{i}\right)\left(\sum_{i\in{\cal L}_{K}}r_{i}\right)}\,.
\end{equation}
But since by Item \ref{enu:MainTh0Item2.1} of Theorem \ref{thm:MainTh0} it hold $l_{1}l_{K}=1,$ we conclude that ${\rm In}\left({\cal L}_{1}\cup{\cal L}_{K}\right)=0$.

Hence, if $K\in\left\{ 2,3\right\} $ the lemma holds. Next, assume $K\geq4$ and observe that since (\ref{eq:InOutZero}) holds for $k=1$, the restriction $\boldsymbol{d}_{{\cal Q}_{2}}$ of $\boldsymbol{d}$ to ${\cal Q}_{2}$ is an allocation on ${\cal Q}_{2}$ with $K-2$ levels and by construction ${\cal L}_{{\cal Q}_{2},1}={\cal L}_{2}$, ${\cal L}_{{\cal Q}_{2},K-2}={\cal L}_{K-1}={\cal L}_{K-2+1}$. Therefore, we can repeat the arguments above for $k=2$ and inductively show that the lemma holds for $k=1,....,\flr{K^{*}/2}$. $\blacksquare$
\end{proof}

\uline{Proof of Theorem \ref{thm:MainLexSuff}}: For $k=1,$ since by Properties \ref{enu:MainTh0Item1} and \ref{enu:MainTh0Item3} of Theorem \ref{thm:MainTh0} $\bar{{\cal L}_{1}}$ is an independent set and $\sum_{i\in{\cal L}_{1}}r_{1}^{*}=\sum_{i\in{\cal N}\left({\cal L}_{1}\right)}D_{i}$,
%\[ ---20/11
%\sum_{i\in{\cal L}_{1}}r_{1}^{*}=\sum_{i\in{\cal N}\left({\cal L}_{1}\right)}D_{i},
%\]
it follows from Lemma \ref{lem:MaxMin} that ${\cal L}_{1}^{*}={\cal L}_{1}$ and $l_{1}^{*}=l_{1}$. Also, by Properties \ref{enu:MainTh0Item2}, \ref{enu:MainTh0Item2.1} of Theorem \ref{thm:MainTh0} we have $l_{K}=l_{K^{*}}^{*}$ and ${\cal L}_{K}={\cal N}\left({\cal L}_{1}\right)={\cal N}\left({\cal L}_{1}^{*}\right)={\cal L}_{K^{*}}^{*}$, where the last equality follows from Proposition \ref{lem:MainLem}. If $K=2,$ then since ${\cal N}={\cal L}_{1}\cup{\cal L}_{2}={\cal L}_{1}^{*}\cup{\cal L}_{K^{*}}^{*}$ we have necessarily $K^{*}=2$ and we conclude that $\boldsymbol{d}$ is lex-optimal. Assume now that $K=3$. From Proposition \ref{lem:MainLem} we then have ${\rm In}\left({\cal L}_{1}\cup{\cal L}_{3}\right)={\rm Out}\left({\cal L}_{2}\cup{\cal L}_{3}\right)=0$.
%\[ ----20/11
%{\rm In}\left({\cal L}_{1}\cup{\cal L}_{3}\right)={\rm Out}\left({\cal L}_{2}\cup{\cal L}_{3}\right)=0.
%\]
Hence the restriction $\boldsymbol{d}_{{\cal Q}_{2}}$ is an allocation on $G_{{\cal Q}_{2}}$ with $K_{{\cal Q}_{2}}=1.$ It follows by Proposition \ref{prop:LexK1} that $\boldsymbol{d}_{{\cal Q}_{2}}$ is lex-optimal in $G_{{\cal Q}_{2}}$ and $l_{2}=1$. This implies that any lex-optimal allocation allocation on $G_{{\cal Q}_{2}}$ has $K_{{\cal Q}_{2}}^{*}=1$ and $l_{2}^{*}=1$. We then conclude that $K^{*}=3$ and arguing as in the case $K=2$, that $\boldsymbol{d}$ is lex-optimal.

We will use induction to show the Theorem for allocations $\boldsymbol{d}$ with arbitray $K$. Assume that the theorem holds for allocation with up to $K-1,$ $K\geq4$ levels and let next $K\geq4$. By (\ref{eq:MainLemInOut0}), the vector $\boldsymbol{d}_{{\cal Q}_{2}}$ constitutes an allocation on graph $G_{{\cal Q}_{2}}.$ Since this allocation has $K-2$ levels
we can apply the inductive hypothesis to conclude that the allocation $\boldsymbol{d}_{{\cal Q}_{2}}$ is lex-optimal in $G_{{\cal Q}_{2}}$. But the same holds for the restriction $\hat{\boldsymbol{d}}_{{\cal Q}_{2}}$ to $G_{{\cal Q}_{2}}$, of any lex-optimal allocation $\hat{\boldsymbol{d}}$. By uniqueness of lex optimality we conclude that all levels $l_{k},$ level sets ${\cal L}_{k}$ and received resource $r_{i}$ of $\boldsymbol{d}_{{\cal Q}_{2}}$ for $k=2,...,K-1$ are identical to those of any lex-optimal allocation. It follows that $K=K^{*}$ and we already showed that ${\cal L}_{1}^{*}={\cal L}_{1}$, $l_{1}=l_{1}^{*}$ ${\cal L}_{K}={\cal L}_{K^{*}}^{*},$ and $l_{K}=l_{K^{*}}^{*}.$ The lex-optimality of $\boldsymbol{d}$ follows. $\blacksquare$

\section*{Proof of Theorem 4}

Corollary \ref{thm:MainLex} is a simple consequence of the properties of the lex-optimal policies and can be derived by combining Theorems \ref{thm:MainTh0} and \ref{thm:Stability}. Hence, we only need to focus on the main result of this subsection, i.e., Theorem \ref{thm:Stability}.
\begin{proof}
Based on the results of Theorem \ref{thm:MainTh0} (and using the notation of Corollary \ref{thm:MainLex}), let ${\cal M}_{k}^{*}={\cal L}_{k}^{*}\cup{\cal L}_{K-k+1}^{*},\ k=1,2,...,\flr{K^{*}/2}$ be the formed groups under $\boldsymbol{d}$; if $K^{*}$ is odd, there is also a group ${\cal M}_{\ceil{K^{*}/2}}^{*}={\cal L}_{\ceil{K^{*}/2}}^{*}$. Below it will help to denote ${\cal L}_{1,k}\triangleq{\cal L}_{k}^{*}$
and ${\cal L}_{2,k}\triangleq{\cal L}_{N-k+1}^{*},$ $1\leq k\leq\flr{K^{*}/2}$ so that ${\cal M}_{k}^{*}={\cal L}_{1,k}\cup{\cal L}_{2,k}$, $1\leq k\leq\flr{K^{*}/2}$. We also define $l_{1,k}\triangleq l_{k}^{*}$ and $l_{2,k}\triangleq l_{N-k+1}^{*},\ 1\leq k\leq\flr{K^{*}/2}.$ If $K^{*}$ is odd then define ${\cal L}_{1,\ceil{K^{*}/2}}=\emptyset$ ${\cal L}_{2,\ceil{K^{*}/2}}={\cal L}_{\ceil{K^{*}/2}}^{*}$, $l_{1,\ceil{K^{*}/2}}=l_{2,\ceil{K^{*}/2}}=l_{\ceil{K^{*}/2}}^{*}$=1.

Consider an arbitrary \emph{nonempty} set of nodes ${\cal C}$ and define ${\cal C}_{1,k}={\cal C}\cap{\cal L}_{1,k}$, ${\cal C}_{2,k}={\cal C}\cap{\cal L}_{2,k},$ $1\leq k\leq\flr{K^{*}/2}$, and in case $K^{*}$ is odd, ${\cal C}_{2,\ceil{K^{*}/2}}={\cal C}\cap{\cal L}_{\ceil{K^{*}/2}}$. Hence ${\cal C}_{1,k}\cup{\cal C}_{2,k}={\cal M}_{k}^{*}\cap{\cal C}.$ Let $\hat{\boldsymbol{d}}$ be an allocation on this set such that $\hat{r}_{i}\geq r_{i}^{*}$ for all $i\in{\cal C}$ and $\hat{r}_{j_{0}}>r_{j_{0}}^{*}$ for some $j_{0}\in{\cal C}$. Below we argue by contradiction that such set does not exist. In the case where the induced subgraph contains singletons, the results is trivial.

From Theorem \ref{thm:MainTh0} and specifically the properties of the lex-optimal allocations, we know that the nodes in the set $\mathcal{L}_{k}^{*}$, for $1\leq k\leq K^*/2$ may be connected only to nodes in sets $\mathcal{L}_{K-m+1}^{*}$, with $1\leq m\leq k$. Hence, we have the following properties
\begin{enumerate}
\item \label{enu:Prop1}Nodes in ${\cal L}_{1,k}\ 1\leq k\leq\flr{K^{*}/2}$ may be connected to nodes in ${\cal L}_{2,m},\ 1\leq m\leq k.$
\item \label{enu:Prop2}Nodes in ${\cal L}_{2,k},\ 1\leq k\leq\flr{K^{*}/2}$ may be connected to nodes in all sets ${\cal L}_{2,m},\ 1\leq m\leq\ceil{K^{*}/2}$
and to nodes in the sets ${\cal L}_{1,m},\ k\le m\leq\flr{K^{*}/2}.$
\item \label{enu:Prop3}If $K^{*}$ is odd, then nodes in ${\cal L}_{2,\ceil{K^{*}/2}}$ may be connected to nodes in all sets ${\cal L}_{2,m},\ 1\leq m\leq\ceil{K^{*}/2}$.
\end{enumerate}

Under allocation $\hat{\boldsymbol{d}}$, let $a_{(t,k)}^{(h,m)}$ be the proportion of offered resource by the nodes in ${\cal C}_{t,k}$ ( i.e., $\sum_{i\in{\cal C}_{t,k}}D_{i}$ ) to nodes in ${\cal C}_{h,m}$. From Properties \ref{enu:Prop1}) and \ref{enu:Prop2}) above we then have for any $k,$ $1\leq k\leq\flr{K^{*}/2}$
\begin{eqnarray}
\sum_{m=1}^{k}a_{(1,k)}^{(2,m)} & = & 1\label{eq:firstSum}\\
\sum_{m=1}^{\ceil{K^{*}/2}}a_{(2,k)}^{(2,m)}+\sum_{m=k}^{\flr{K^{*}/2}}a_{(2,k)}^{(1,m)} & = & 1,\label{eq:secondSum}
\end{eqnarray}
and from Property \ref{enu:Prop3}, if $K^{*}$ is odd,
\begin{equation}
\sum_{m=1}^{\ceil{K^{*}/2}}a_{(2,\ceil{K^{*}/2})}^{(2,m)}=1.\label{eq:ThirdSum}
\end{equation}

Since nodes in ${\cal C}_{1,k},\ 1\leq k\leq\flr{K^{*}/2}$ may be connected and hence get their resource from nodes in ${\cal L}_{2,m},\ 1\leq m\leq k,$ we have, for every $1\leq k\leq\flr{K^{*}/2}$:
\begin{eqnarray}
\sum_{m=1}^{k}a_{(2,m)}^{(1,k)}\sum_{i\in{\cal C}_{2,m}}D_{i} = \sum_{i\in{\cal C}_{1,k}}\hat{r}_{i}\geq l_{1,k}\sum_{i\in{\cal C}_{1,k}}D_{i},\label{eq:FirstIneq}
\end{eqnarray}
with strict inequality holding if $j_{0}\in{\cal C}_{1,k}$ for some $k,\ 1\leq k\leq\flr{K^{*}/2}.$

Similarly, since nodes in ${\cal C}_{2,k},\ 1\leq k\leq\flr{K^{*}/2}$ may be connected and hence get their resource from nodes in ${\cal L}_{2,m},\ 1\leq m\leq\ceil{K^{*}/2}$
and from nodes in the sets ${\cal L}_{1,m},\ k\le m\leq\flr{K^{*}/2}$, it holds:
\begin{eqnarray}
\sum_{m=1}^{\ceil{K^{*}/2}}a_{(2,m)}^{(2,k)}\sum_{i\in{\cal C}_{2,m}}D_{i}+\sum_{m=k}^{\flr{K^{*}/2}}a_{(1,m)}^{(2,k)}\sum_{i\in{\cal C}_{1,m}}D_{i} \nonumber \\
=\sum_{i\in{\cal C}_{2,k}}\hat{r}_{i} \geq l_{2,k}\sum_{i\in{\cal C}_{2,k}}D_{i},\ 1\leq k\leq\flr{K^{*}/2},\label{eq:SecondIneq-1}
\end{eqnarray}
with strict inequality holding if $j_{0}\in{\cal C}_{2,k}$ for some $k,\ 1\leq k\leq\flr{K^{*}/2}.$

For a given $k$, multiplying (\ref{eq:SecondIneq-1}) by $l_{1,k}$, adding (\ref{eq:FirstIneq}), and considering that $l_{1,k}l_{2,k}=l_{k}^{*}l_{N-k+1}^{*}=1$, we get:
\begin{align}
&\sum_{m=1}^{k}a_{(2,m)}^{(1,k)}\sum_{i\in{\cal C}_{2,m}}D_{i}+l_{1,k}\sum_{m=1}^{\ceil{K^{*}/2}}a_{(2,m)}^{(2,k)}\sum_{i\in{\cal C}_{2,m}}D_{i}\nonumber \\ &+l_{1,k}\sum_{m=k}^{\flr{K^{*}/2}}a_{(1,m)}^{(2,k)}\sum_{i\in{\cal C}_{1,m}}D_{i}\nonumber \\
&\geq l_{1,k}\sum_{i\in{\cal C}_{1,k}}D_{i}+\sum_{i\in{\cal C}_{2,k}}D_{i},\ k=1,...,\flr{K^{*}/2},\label{eq:BasicIneq}
\end{align}
with strict inequality holding if $j_{0}\in{\cal C}_{1,k}\cup{\cal C}_{2,k}$ for some $k,\ 1\leq k\leq\flr{K^{*}/2}.$ Adding the inequalities in (\ref{eq:BasicIneq}) we get
\begin{align}
&\sum_{k=1}^{\flr{K^{*}/2}}\sum_{m=1,}^{k}a_{(2,m)}^{(1,k)}\sum_{i\in{\cal C}_{2,m}}D_{i}\nonumber \\ &+\sum_{k=1}^{\flr{K^{*}/2}}l_{1,k}\sum_{m=1}^{\ceil{K^{*}/2}}a_{(2,m)}^{(2,k)}\sum_{i\in{\cal C}_{2,m}}D_{i} \nonumber \\
&+\sum_{k=1}^{\flr{K^{*}/2}}l_{1,k}\sum_{m=k}^{\flr{K^{*}/2}}a_{(1,m)}^{(2,k)}\sum_{i\in{\cal C}_{1,m}}D_{i}\nonumber \\
&\geq\sum_{k=1}^{\flr{K^{*}/2}}l_{1,k}\sum_{i\in{\cal C}_{1,k}}D_{i}+\sum_{k=1}^{\flr{K^{*}/2}}\sum_{i\in{\cal C}_{2,k}}D_{i},\label{eq:BasicIneq-1}
\end{align}
with strict inequality holding if $j_{0}\in\cup_{k=1}^{\flr{K^{*}/2}}\left({\cal C}_{1,k}\cup{\cal C}_{2,k}\right).$
Note now that:
\begin{align}
&\sum_{k=1}^{\flr{\frac{K^{*}}{2}}}\sum_{m=1}^{k}a_{(2,m)}^{(1,k)}\sum_{i\in{\cal C}_{2,m}}D_{i}=\nonumber \\
&=\sum_{m=1}^{\flr{\frac{K^{*}}{2}}}\big(\sum_{k=m}^{\flr{\frac{K^{*}}{2}}}a_{(2,m)}^{(1,k)}\big)\sum_{i\in{\cal C}_{2,m}}D_{i}\label{eq:interch1}
\end{align}
\begin{align}
&\sum_{k=1}^{\flr{\frac{K^{*}}{2}}}l_{1,k}\sum_{m=k}^{\flr{\frac{K^{*}}{2}}}a_{(1,m)}^{(2,k)}\sum_{i\in{\cal C}_{1,m}}D_{i}=\nonumber \\
&=\sum_{m=1}^{\flr{\frac{K^{*}}{2}}}\big(\sum_{k=1}^{m}l_{1,k}a_{(1,m)}^{(2,k)}\big)\sum_{i\in{\cal C}_{1,m}}D_{i}\label{eq:interch3}
\end{align}
where we have applied the identity
\begin{equation}
\sum_{k=1}^{K}\sum_{m=1}^{k}a_{km}=\sum_{m=1}^{K}\sum_{k=m}^{K}a_{km}
\end{equation}
Also, if $K^{*}$ is even, then since $\flr{K^{*}/2}=\ceil{K^{*}/2}$,
\begin{align}
&\sum_{k=1}^{\flr{\frac{K^{*}}{2}}}l_{1,k}\sum_{m=1}^{\ceil{K^{*}/2}}a_{(2,m)}^{(2,k)}\sum_{i\in{\cal C}_{2,m}}D_{i}= \nonumber \\
&=\sum_{m=1}^{\flr{\frac{K^{*}}{2}}}\big(\sum_{k=1}^{\flr{\frac{K^{*}}{2}}}l_{1,k}a_{(2,m)}^{(2,k)}\big)\sum_{i\in{\cal C}_{2,m}}D_{i},\label{eq:interch2}
\end{align}
while if $K^{*}$is odd,
%\begin{align}
%&\sum_{k=1}^{\flr{\frac{K^{*}}{2}}}l_{1,k}\sum_{m=1}^{\ceil{ \frac{K^{*}}{2} }}a_{(2,m)}^{(2,k)}\sum_{i\in{\cal C}_{2,m}}D_{i}= \label{eq:interch2Odd} \\ &\sum_{m=1}^{\flr{\frac{K^{*}}{2}}}\big(\sum_{k=1}^{\flr{\frac{K^{*}}{2}}}l_{1,k}a_{(2,m)}^{(2,k)}\big)\sum_{i\in{\cal C}_{2,m}}D_{i}+ \nonumber \\
%&\sum_{k=1}^{\flr{\frac{K^{*}}{2}}}l_{1,k}a_{(2,\ceil{ \frac{K^{*}}{2} })}^{(2,k)}\sum_{i\in{\cal C}_{2,\ceil{\frac{K^{*}}{2}}}}D_{i} \nonumber
%\end{align}
\begin{align}
&\sum_{k=1}^{\flr{\frac{K^{*}}{2}}}l_{1,k}\sum_{m=1}^{\ceil{ \frac{K^{*}}{2} }}a_{(2,m)}^{(2,k)}\sum_{i\in{\cal C}_{2,m}}D_{i}=\sum_{m=1}^{\flr{\frac{K^{*}}{2}}}\big(\sum_{k=1}^{\flr{\frac{K^{*}}{2}}}l_{1,k}a_{(2,m)}^{(2,k)}\big)\cdot \nonumber\\ &\cdot \sum_{i\in{\cal C}_{2,m}}D_{i}+\sum_{k=1}^{\flr{\frac{K^{*}}{2}}}l_{1,k}a_{(2,\ceil{ \frac{K^{*}}{2} })}^{(2,k)}\sum_{i\in{\cal C}_{2,\ceil{\frac{K^{*}}{2}}}}D_{i}     \label{eq:interch2Odd}
\end{align}

where we applied the identity
\begin{equation}
\sum_{k=1}^{K}\sum_{m=1}^{K+1}a_{km}=\sum_{m=1}^{K}\sum_{k=m}^{K}a_{km}+\sum_{k=1}^{K}a_{k,(K+1)}.
\end{equation}

Assume now that $K^{*}$ is even. Using equalities (\ref{eq:interch1}), (\ref{eq:interch3}), and (\ref{eq:interch2}) in (\ref{eq:BasicIneq-1}), we get:
\begin{align}
&\sum_{m=1}^{\flr{K^{*}/2}}\left(\sum_{k=m}^{\flr{K^{*}/2}}a_{(2,m)}^{(1,k)}+\sum_{k=1}^{\flr{K^{*}/2}}l_{1,k}a_{(2,m)}^{(2,k)}\right)\sum_{i\in{\cal C}_{2,m}}D_{i}\nonumber \\
&+\sum_{m=1}^{\flr{K^{*}/2}}\left(\sum_{k=1}^{m}l_{1,k}a_{(1,m)}^{(2,k)}\right)\sum_{i\in{\cal C}_{1,m}}D_{i} \nonumber \\
&>\sum_{k=1}^{\flr{K^{*}/2}}l_{1,k}\sum_{i\in{\cal C}_{1,k}}D_{i}+\sum_{k=1}^{\flr{K^{*}/2}}\sum_{i\in{\cal C}_{2,k}}D_{i}\label{eq:FinalIneq}
\end{align}
where the inequality is strict since now
\begin{equation}
j_{0}\in{\cal C}=\cup_{k=1}^{\flr{K^{*}/2}}\left({\cal C}_{1,k}\cup{\cal C}_{2,k}\right). \nonumber
\end{equation}
But since $l_{1,k}<1$, $\forall\, k\in[1, \flr{\frac{K^{*}}{2}}]$, we have:
\begin{align}
&\sum_{k=m}^{\flr{K^{*}/2}}a_{(2,m)}^{(1,k)}+\sum_{k=1}^{\flr{K^{*}/2}}l_{1,k}a_{(2,m)}^{(2,k)}\leq   \\
&\leq \sum_{k=m}^{\flr{K^{*}/2}}a_{(2,m)}^{(1,k)}+\sum_{k=1}^{\flr{K^{*}/2}}a_{(2,m)}^{(2,k)}=1\ \ {\rm , by\ (\ref{eq:secondSum})} \nonumber
\end{align}
Taking into account that $l_{1,k}<l_{i,k'}$ if $k<k'$, we also have,

%\begin{eqnarray*}
%\sum_{k=1}^{m}l_{1,k}a_{(1,m)}^{(2,k)} & \leq l_{1,m} & \sum_{k=1}^{m}a_{(1,m)}^{(2,k)}\\
% & = & l_{1,m}\ \ {\rm by\ (\ref{eq:firstSum})}
%\end{eqnarray*}

\begin{equation}
\sum_{k=1}^{m}l_{1,k}a_{(1,m)}^{(2,k)} \leq l_{1,m} \sum_{k=1}^{m}a_{(1,m)}^{(2,k)}= l_{1,m}\ \ {\rm by\ (\ref{eq:firstSum})}
\end{equation}
Hence, it holds:
\begin{align}
&\sum_{m=1}^{\flr{K^{*}/2}}\left(\sum_{k=m}^{\flr{K^{*}/2}}a_{(2,m)}^{(1,k)}+\sum_{k=1}^{\flr{K^{*}/2}}l_{1,k}\sum_{m=1}^{\flr{K^{*}/2}}a_{(2,m)}^{(2,k)}\right)\sum_{i\in{\cal C}_{2,m}}D_{i}\nonumber \\
&+\sum_{m=1}^{\flr{K^{*}/2}}\left(\sum_{k=1}^{m}l_{1,k}a_{(1,m)}^{(2,k)}\right)\sum_{i\in{\cal C}_{1,m}}D_{i} \nonumber \\
&\leq\sum_{m=1}^{\flr{K^{*}/2}}\sum_{i\in{\cal C}_{2,m}}D_{i}+\sum_{m=1}^{\flr{K^{*}/2}}l_{1,m}\sum_{i\in{\cal C}_{1,m}}D_{i}
\end{align}
which contradicts (\ref{eq:FinalIneq}).

It remains to consider the case that $K^{*}$ is odd. In this case, using equalities (\ref{eq:interch1}), (\ref{eq:interch3}), and (\ref{eq:interch2Odd}) in (\ref{eq:BasicIneq-1}), we obtain,
\begin{align}
&\sum_{m=1}^{\flr{K^{*}/2}}\left(\sum_{k=m}^{\flr{K^{*}/2}}a_{(2,m)}^{(1,k)}+\sum_{k=1}^{\flr{K^{*}/2}}l_{1,k}a_{(2,m)}^{(2,k)}\right)\sum_{i\in{\cal C}_{2,m}}D_{i} \nonumber \\
&+\sum_{m=1}^{\flr{K^{*}/2}}\left(\sum_{k=1}^{m}l_{1,k}a_{(1,m)}^{(2,k)}\right)\sum_{i\in{\cal C}_{1,m}}D_{i}\nonumber \\
&+\sum_{k=1}^{\flr{K^{*}/2}}l_{1,k}a_{(2,\ceil{K^{*}/2})}^{(2,k)}\sum_{i\in{\cal C}_{2,\ceil{K^{*}/2}}}D_{i}\nonumber \\
& \geq\sum_{k=1}^{\flr{K^{*}/2}}l_{1,k}\sum_{i\in{\cal C}_{1,k}}D_{i}+\sum_{k=1}^{\flr{K^{*}/2}}\sum_{i\in{\cal C}_{2,k}}D_{i}\label{eq:FinalIneq-1}
\end{align}
with strict inequality holding if $j_{0}\in\cup_{k=1}^{\flr{K^{*}/2}}\left({\cal C}_{1,k}\cup{\cal C}_{2,k}\right).$

Observe now that since nodes in ${\cal L}_{2,\ceil{K^{*}/2}}$ may be connected to nodes in all sets ${\cal L}_{2,m},\ 1\leq m\leq\ceil{K^{*}/2}$ we have,
\begin{align}
&\sum_{m=1}^{\ceil{K^{*}/2}}a_{(2,m)}^{(2,\ceil{K^{*}/2})}\sum_{i\in{\cal C}_{2,m}}D_{i} \nonumber \\
& =\sum_{i\in{\cal C}_{2,\ceil{K^{*}/2}}}\hat{r}_{i}\geq l_{1,\ceil{K^{*}/2}}\sum_{i\in{\cal C}_{1,\ceil{K^{*}/2}}}D_{i}\nonumber \\
& = \sum_{i\in{\cal C}_{1,\ceil{K^{*}/2}}}D_{i}\,,\mbox{ since \ensuremath{l_{1,\ceil{K^{*}/2}}=1.}}\label{eq:FinalIneq1}
\end{align}
with equality holding if $j_{0}\in{\cal C}_{1,\ceil{K^{*}/2}}.$ Adding (\ref{eq:FinalIneq-1}), (\ref{eq:FinalIneq1}):
\begin{align}
&\sum_{m=1}^{\flr{\frac{K^{*}}{2}}}\left(\sum_{k=m}^{\flr{\frac{K^{*}}{2}}}a_{(2,m)}^{(1,k)}+\sum_{k=1}^{\flr{\frac{K^{*}}{2}}}l_{1,k}a_{(2,m)}^{(2,k)}+a_{(2,m)}^{(1,\ceil{\frac{K^{*}}{2}})}\right)\sum_{i\in{\cal C}_{2,m}}D_{i} \nonumber \\
&+\sum_{m=1}^{\flr{\frac{K^{*}}{2}}}\left(\sum_{k=1}^{m}l_{1,k}a_{(1,m)}^{(2,k)}\right)\sum_{i\in{\cal C}_{1,m}}D_{i}+\sum_{k=1}^{\flr{\frac{K^{*}}{2}}}l_{1,k}a_{(2,\ceil{\frac{K^{*}}{2}})}^{(2,k)}\cdot  \nonumber \\
&\cdot \sum_{i\in{\cal C}_{2,\ceil{\frac{K^{*}}{2}}}}D_{i}>\sum_{k=1}^{\flr{\frac{K^{*}}{2}}}l_{1,k}\sum_{i\in{\cal C}_{1,k}}D_{i}+\sum_{k=1}^{\ceil{\frac{K^{*}}{2}}}\sum_{i\in{\cal C}_{2,k}}D_{i}
\end{align}
where the inequality is strict since $j_{0}\in{\cal C}=\cup_{k=1}^{\ceil{K^{*}/2}}\left({\cal C}_{1,k}\cup{\cal C}_{2,k}\right)$. Using again arguments similar to the case $K^{*}$ even, we arrive again at a contradiction. $\blacksquare$

\end{proof}

%\bibliographystyle{plain}
%\bibliography{RelayingBib,IEEEabrv} --> originally, the extracted from lyx
%\bibliography{RelayingBib2}

\end{document}